\newtheorem{theorem}{Theorem}
\newtheorem{lemma}[theorem]{Lemma}
\newtheorem{definition}[theorem]{Definition}
\newcommand{\ctw}{\mathbf{ctw}}
\newcommand{\pw}{\mathbf{pw}}
\newcommand{\FF}{\ensuremath{\mathfrak{F}}}
\newcommand{\Gg}{\ensuremath{\mathcal{G}}}
\newcommand{\UU}{\ensuremath{\mathfrak{U}}}
\newcommand{\ptrace}{trace}
\newcommand{\ptraces}{traces}
\newcommand{\vf}[2]{\mathcal{V}_{(#1,#2)}}
\newcommand{\cutfam}{\mathcal{N}}
\newcommand{\bipclass}{\mathcal{B}}
\newcommand{\match}{\mathfrak{M}}
\newcommand{\buss}{\mathfrak{B}}
\newcommand{\rev}{{\rm{rev}}}
\title{Computing cutwidth and pathwidth of semi-complete digraphs via degree orderings}
\date{}
\author{Micha\l{} Pilipczuk\thanks{Department of Informatics, University of Bergen, Norway, \texttt{michal.pilipczuk@ii.uib.no}}}
\begin{document}

\maketitle

\begin{abstract}
The notions of cutwidth and pathwidth of digraphs play a central role in the containment theory for tournaments, or more generally semi-complete digraphs, developed in a recent series of papers by Chudnovsky, Fradkin, Kim, Scott, and Seymour~\cite{ChudnovskyFS2011,ChudnovskySS2011,ChudnovskyS11,fradkin-seymourEDP,fradkin-seymour,kim-seymour-minors}. In this work we introduce a new approach to computing these width measures on semi-complete digraphs, via degree orderings. Using the new technique we are able to reprove the main results of~\cite{ChudnovskyFS2011,fradkin-seymour} in a unified and significantly simplified way, as well as obtain new results. First, we present polynomial-time approximation algorithms for both cutwidth and pathwidth, faster and simpler than the previously known ones; the most significant improvement is in case of pathwidth, where instead of previously known $O(OPT)$-approximation in fixed-parameter tractable time~\cite{my} we obtain a constant-factor approximation in polynomial time. Secondly, by exploiting the new set of obstacles for cutwidth and pathwidth, we show that topological containment and immersion in semi-complete digraphs can be tested in single-exponential fixed-parameter tractable time. Finally, we present how the new approach can be used to obtain exact fixed-parameter tractable algorithms for cutwidth and pathwidth, with single-exponential running time dependency on the optimal width.
\end{abstract}

\section{Introduction}

{\bf{Minors of (di)graphs.}} The {\em{Graph Minors}} series of Robertson and Seymour not only resolved the Wagner's Conjecture, but also provided a number of algorithmic tools for investigating topological structure of graphs. A useful concept that is repeatedly used in many arguments in this theory, is the WIN/WIN approach that can be formulated as follows. We introduce a graph width measure $\mu$, usually defined as the optimal width of some decomposition. The crucial part is to prove a structural theorem that asserts that graphs with large measure $\mu$ contain certain obstacles to admitting a simpler decomposition. These obstacles can be exploited in many ways: either we can immediately provide an answer to the algorithmic problem under consideration, or, for instance, find a vertex in the obstacle whose deletion does not change the answer. If no obstacle is present, we can find a decomposition of small width and try to apply algorithms based on the dynamic programming paradigm. The most classical measure $\mu$ is the treewidth of the graph, accompanied by a grid minor as the obstacle; however, the framework has found applications in many other contexts as well.

Since the beginning of the graph minors project there was a thrilling question, to what extent similar results can be obtained in the directed setting. The answer is still unclear; however, it seems that there is no hope for a positive outcome in the general case. There are several reasons for this. Despite numerous attempts, we still lack any good width measure that would be an analogue of treewidth~\cite{ganian-i-kumple}; moreover, most of the containment problems are already NP-hard for small, constant-size graphs $H$ that are to be embedded into input graph $G$~\cite{FortuneHW80}. Therefore, the scope of research moved to identifying subclasses of digraphs where construction of a sound containment theory could be possible. We invite an interested reader to the introduction of a recent work of Fomin and the current author~\cite{my} for a more detailed overview of the status of containment problems in digraphs.

\vskip 0.15cm

\noindent{\bf{Tournaments.}} In a recent series of papers, Chudnovsky, Fradkin, Kim, Scott, and Seymour~\cite{ChudnovskyFS2011,ChudnovskySS2011,ChudnovskyS11,fradkin-seymourEDP,fradkin-seymour,kim-seymour-minors} identify tournaments as a class where an elegant containment theory can be constructed. Recall that a tournament is a digraph where every two vertices are connected by an arc directed in one of two possible directions; the results hold also for a slightly more general class of {\em{semi-complete}} digraphs, where we additionally allow existence of two arcs directed in opposite directions. The central notions of the theory are two width measures of digraphs: {\em{cutwidth}} and {\em{pathwidth}}. The first one is based on the ordering of vertices and resembles classical cutwidth in the undirected setting~\cite{thilikos-i-kumple}, with the exception that only arcs directed forward in the ordering contribute to the width of a cut. The second one is a similar generalization of undirected pathwidth. See Section \ref{sec:width} for formal definitions.

Chudnovsky, Fradkin, and Seymour~\cite{ChudnovskyFS2011} prove a structural theorem that provides a set of obstacles for admitting an ordering of small (cut)width; a similar theorem for pathwidth was proven by Fradkin and Seymour~\cite{fradkin-seymour}. A large enough obstacle for cutwidth admits every fixed-size digraph as an immersion, and the corresponding is true also for pathwidth and topological containment. Basing on the first result, Chudnovsky and Seymour~\cite{ChudnovskyS11} were able to show that immersion is a well-quasi-order on the class of semi-complete digraphs. The same is not true for topological containment, but recently Kim and Seymour~\cite{kim-seymour-minors} introduced a relaxed notion of {\em{minor}} order, which indeed is a well-quasi-order of semi-complete digraphs.

As far as the algorithmic aspects of the work of Chudnovsky, Fradkin, Kim, Scott, and Seymour are concerned, the original proofs of the structural theorems can be turned into approximation algorithms, which given a semi-complete digraph $T$ and an integer $k$ find a decomposition of $T$ of width $O(k^2)$, or provide an obstacle for admitting decomposition of width at most $k$. For cutwidth the running time is polynomial, but for pathwidth it is $O(|V(T)|^{m(k)})$ for some function $m$; this excludes usage of this approximation as a subroutine in any FPT algorithm\footnote{An algorithm for a parameterized problem is said to be {\em{fixed-parameter tractable}} (FPT), if it runs in time $f(k)\cdot n^c$ on instances of size $n$ and parameter $k$, for some function $f$ and constant $c$.}, for instance, for topological containment testing. This gap has been bridged by Fomin and the current author~\cite{my} by presenting an FPT approximation algorithm for pathwidth that again either finds a path decomposition of $T$ of width $O(k^2)$ or provides an obstacle for admitting path decomposition of width at most $k$, but runs in $O(2^{O(k\log k)}|V(T)|^3\log |V(T)|)$ time. The approach in~\cite{my} is based on replacing the crucial part of the algorithm of Fradkin and Seymour that was implemented by a brute-force enumeration, by a more refined argument based on the colour coding technique.

As far as computation of optimal decompositions is concerned, by the well-quasi-ordering result of Chudnovsky and Seymour~\cite{ChudnovskyS11} we have that the class of semi-complete digraphs of cutwidth bounded by a constant is characterized by a finite set of forbidden immersions; the result of Kim and Seymour~\cite{kim-seymour-minors} proves that we can infer the same conclusion about pathwidth and minors. Having approximated the corresponding parameter, in FPT time we can check if any of these forbidden structures is contained in a given semi-complete digraph, using dynamic programming. This gives FPT exact algorithms computing cutwidth and pathwidth; however, they are both non-uniform ---  the algorithms depend on the set of forbidden structures which is unknown --- and non-constructive --- they provide just the value of the width measure, and not the optimal decomposition. To the best of author's knowledge, nothing better was known for these problems.

\vskip 0.15cm

\noindent{\bf{Our results and techniques.}} In this paper we present a new approach to computing cutwidth and pathwidth of semi-complete digraphs, via degree orderings. Using the new technique we are able to reprove the structural theorems for both parameters in a unified and simplified way, obtaining in both cases polynomial-time algorithms and, in case of pathwidth, a constant-factor approximation. The technique can be also used to develop FPT exact algorithm for both width measures with single-exponential dependency of the running time on the optimal width, as well as to trim the dependency of the running time on the size of the tested digraph in the topological containment and immersion tests to single-exponential. All the algorithms presented in this paper have quadratic dependency on the number of vertices of a given semi-complete digraph; note that this is {\em{linear}} in the input size. 

\begin{figure}
\begin{centering}
\noindent\begin{tabular}{| c | c | c |}
\hline
Problem & Previous results & This work \\
\hline
Cutwidth approximation & $O(n^3)$ time, width $O(k^2)$~\cite{ChudnovskyFS2011} & $O(n^2)$ time, width $O(k^2)$ \\[0.2cm]
Cutwidth exact & $O(f(k)\cdot n^3)$ time, & $O(2^{O(k)}\cdot n^2)$ time \\
& non-uniform, non-constructive~\cite{ChudnovskyFS2011,ChudnovskyS11} & \\[0.2cm]
Pathwidth approximation & $O(2^{O(k\log k)} \cdot n^3\log n)$ time,  & $O(kn^2)$ time,  \\
& width $O(k^2)$~\cite{my} & width $7k$ \\[0.2cm]
Pathwidth exact & $O(f(k)\cdot n^3\log n)$ time, & $O(2^{O(k\log k)}\cdot n^2)$ time \\
& non-uniform, non-constructive~\cite{my,kim-seymour-minors} & \\[0.2cm]
Immersion & $O(f(|H|)\cdot n^3)$ time~\cite{ChudnovskyFS2011} & $O(2^{O(|H|^2\log |H|)}\cdot n^2)$ time \\[0.2cm]
Topological containment & $O(f(|H|)\cdot n^3\log n)$ time~\cite{my} & $O(2^{O(|H|\log |H|)}\cdot n^2)$ time \\
\hline
\end{tabular}
\caption{Comparison of previously known algorithms and the results of this paper. The algorithms for cutwidth and pathwidth take as input a semi-complete digraph $T$ on $n$ vertices and an integer $k$. The approximation algorithms can output a decomposition of larger width (the guarantees are mentioned in the corresponding cells) or correctly conclude that constructing decomposition of width at most $k$ is impossible. The exact algorithms either construct a decomposition of width at most $k$, or correctly conclude that this is impossible.}
\end{centering}
\end{figure}

The crucial observation of the new approach can be intuitively formulated as follows. Consider a semi-complete digraph with a large number of vertices, but admitting a path decomposition of very small width. Take any vertex $v$ that appears in a small number of bags. Observe that for each vertex $w$ that is located much further in the decomposition, the arc between $v$ and $w$ must be directed from $w$ to $v$. Similarly, if $w$ is much earlier, then the arc must be directed from $v$ to $w$. Hence, the outdegree of $v$ is more or less equal to the number of vertices that appear before $v$ in the decomposition; the only aberrations that can occur are due to vertices that interact with $v$ within the bags, but their number is limited by the width of the decomposition. Therefore, intuitively any ordering of the vertices with respect to increasing outdegrees should be a good approximation of the order in which the vertices appear in the optimal path decomposition. 

In fact, for cutwidth this is true even in formal sense. We prove that any outdegree ordering of vertices of a semi-complete digraph $T$ has width at most $O(\ctw(T)^2)$, hence we have a trivial approximation algorithm that sorts the vertices with respect to outdegrees. The exact algorithm for cutwidth is based on an observation that an optimal ordering and any outdegree ordering can differ only locally: if $X$ is the set of $\ell$ first vertices in the optimal ordering, then $X$ contains first $\ell-O(k)$ vertices of the degree ordering, and is contained in first $\ell+O(k)$ vertices of the outdegree ordering. Hence, we scan through the outdegree ordering with a dynamic program, maintaining a bit mask of length $O(k)$ denoting which vertices of an appropriate interval are contained in constructed prefix of an optimal ordering. 

The case of pathwidth, which is of our main interest, is more complicated. We formalize the intuitive outdegree ordering argument as follows: we prove that existence of $5k+2$ vertices with outdegrees mutually not differing by more than $k$ already forms an obstacle for admitting a path decomposition of width at most $k$; we call this obstacle a {\em{degree tangle}}. Hence, any outdegree ordering of vertices of a given semi-complete digraph $T$ of small pathwidth must be already quite spread: it does not contain larger clusters of vertices with similar outdegrees. This spread argument is crucial in all our reasonings, and shows that finding new structural obstacles can significantly simplify our view of the problem.

Both the approximation and exact algorithm for pathwidth use the concept of scanning through the outdegree ordering with a {\em{window}} --- an interval in the ordering containing $5k$ vertices. By the outdegree spread argument, at each point we know that the vertices on the left side of the window have outdegrees smaller by more than $k$ that the ones on the right side; otherwise we would have a degree tangle. For approximation, we construct the consecutive bags by greedily taking the window and augmenting this choice with a small coverage of arcs jumping over it. The big gap between outdegrees on the left and on the right side of the window ensures that nonexistence of a small coverage is also an evidence for not admitting a decomposition of small width. For exact algorithm, we identify a set of $O(k^2)$ vertices around the window, about which we can safely assume that the bag is contained in it. Then we run a similar dynamic programming algorithm as in case of cutwidth.

The most technical part in the approximation and exact algorithms for pathwidth is the choice of vertices outside the window to cover the arcs jumping over it. It turns out that this problem can be expressed as trying to find a small vertex cover in an auxiliary bipartite graph. However, in order to obtain a feasible path decomposition we cannot choose the vertex cover arbitrarily --- it must behave consistently as the window slides through the ordering. To this end, in the approximation algorithm we use a $2$-approximation of the vertex cover based on the matching theory. In the exact algorithm we need more restrictions, as we seek a subset that contains {\em{every}} sensible choice of the bag. Therefore, we use an $O(OPT)$-approximation of vertex cover based on the classical Buss kernelization routine for the problem~\cite{buss}, which enables us to use stronger arguments to reason which vertices can be excluded from consideration.

Finally, we observe that the obtained set of obstacles for pathwidth have much better properties than the ones introduced by Fradkin and Seymour~\cite{fradkin-seymour}. We show that there is a constant multiplicative factor relation between the sizes of obstacles found by the algorithm and the minimum size of a digraph that cannot be embedded into them. Hence, in order to test if $H$ is topologically contained in a given semi-complete digraph $T$, we just need to run the pathwidth approximation with parameter $O(|H|)$, and in case of finding an obstacle just provide a positive answer. This trims the dependency of running time of the topological subgraph and immersion tests to single exponential in terms of the size of tested subgraph, compared to multiple-exponential as presented in~\cite{my,ChudnovskyFS2011}.

\vskip 0.15cm

\noindent{\bf{Organization.}} In Section~\ref{sec:prel} we introduce notation and main definitions. In Section~\ref{sec:zoo} we present the obstacles and prove their basic properties. In Section~\ref{sec:cutwidth} we deal with cutwidth and in Section~\ref{sec:pathwidth} with pathwidth. In Section~\ref{sec:top} we apply the introduced tools to the containment problems, while Section~\ref{sec:conclusions} gathers concluding remarks. Appendix~\ref{app:dp} contains the description of a dynamic programming routine for topological containment on semi-complete digraphs with a path decomposition of small width; the routine is very similar to the one for immersion, described in~\cite{my}, but we include it for the sake of completeness.
 
\section{Preliminaries}\label{sec:prel}

\subsection{Notation and basic definitions}

We use standard graph notation. If $G$ is a (di)graph, by $V(G)$ we denote the vertex set of $G$ and by $E(G)$ the edge (arc) set of $G$; we define size of $G$ to be equal to $|G|=|V(G)|+|E(G)|$. By $G[X]$ we denote the sub(di)graph induced by $X$. For $X,Y\subseteq V(G)$ we define $E(X,Y)=\{(v,w)\in E(G)\ |\ v\in V, w\in W\}$. If $G$ is undirected, then $N(v)$ denotes the set of neighbours of $v$; for directed graphs, by $N^+(v),N^-(v)$ we denote the sets of outneighbours and inneighbours of $v$, respectively. We extend this notion to subsets naturally, e.g., $N(X)=\bigcup_{v\in X} N(v) \setminus X$. We define the {\em{outdegree}} and {\em{indegree}} of $v$ as $d^+(v)=|N^+(v)|$ and $d^-(v)=|N^-(v)|$, respectively. An {\em{outdegree ordering}} is an ordering of vertices with respect to nondecreasing outdegrees.

A digraph is {\em{simple}}, if it does not contain loops or multiple arcs; all the digraphs considered in this paper are simple. A digraph is {\em{semi-complete}}, if it is simple and for every two distinct vertices $v,w$ {\bf{at least}} one of the arcs $(v,w)$ or $(w,v)$ is present. Note that we do not exclude existence of both of these arcs, but we exclude existence of, for instance, two arcs $(v,w)$. A semi-complete digraph is a {\em{tournament}}, if for every two distinct vertices $v,w$ {\bf{exactly}} one of the arcs $(v,w)$ or $(w,v)$ is present.

A {\em{separation}} in a digraph $T$ is a pair $(A,B)$ such that $A,B\subseteq V(T)$, $A\cup B=V(T)$ and $E(A\setminus B,B\setminus A)=\emptyset$. The {\em{order}} of a separation is the size of the {\em{separator}}, i.e., $|A\cap B|$.

\subsection{Containment notions for digraphs}

{\em{Immersion}} and {\em{topological containment}} are the two main notions of containment that are under consideration in this paper. They are direct analogues of the classical undirected versions; for the sake of completeness, we include here their formal definitions.

We follow notation from~\cite{my}. Let $H,G$ be digraphs. We say that mapping $\eta : V(H)\cup E(H) \to V(G)\cup E(G)$ is a {\em{model}} of $H$ in $G$, if the following conditions are satisfied:
\begin{itemize}
\item $\eta(V(H))\subseteq V(G)$;
\item for every arc $(u,v)\in E(H)$, $\eta((u,v))$ is a directed path leading from $\eta(u)$ to $\eta(v)$.
\end{itemize}
By imposing further conditions on the model we obtain various notions of containment for digraphs. If we require that $\eta$ is surjective on $V(H)$ and the paths in $\eta(E(H))$ are internally vertex-disjoint, we obtain the notion of {\emph{topological containment}}; in this case we say that $\eta$ is an {\emph{expansion}} of $H$ in $G$. If we relax the condition of paths being internally vertex-disjoint to being edge-disjoint, we arrive at the notion of {\emph{immersion}}; then $\eta$ is an {\emph{immersion}} of $H$ into~$G$. 

Sometimes in the literature this notion is called {\emph{weak immersion}} to distinguish it from {\emph{strong immersion}}, where each image of an arc is additionally required not to pass through images of vertices not being the endpoints of this arc. In this work we are interested only in (weak) immersions, as we find the notion of strong immersion not well-motivated enough to investigate all the technical details that arise when considering both definitions at the same time and discussing slight differences between them.  

We say that $\mathbf{G}=(G;v_1, \dots, v_h)$ is a \emph{rooted} digraph if $G$ is digraph and $v_1, \dots, v_h$ are distinct vertices of $V(G)$. The notions of immersion and topological containment can be naturally generalized to rooted digraphs. Immersion $\eta$ is an immersion from rooted digraph  $\mathbf{H}=(H;u_1, \dots, u_h)$ to rooted digraph  $\mathbf{G}=(G;v_1, \dots, v_h)$ if $\eta(u_i)=v_i$ for $i\in \{1,\dots, h\}$. Such an immersion is called an \emph{$\mathbf{H}$-immersion} or a \emph{rooted immersion}. In the same manner we may define \emph{$\mathbf{H}$-expansions} or \emph{rooted expansions}.

\subsection{Width parameters of digraphs}\label{sec:width}

\begin{definition}\label{def:ctw}
Given a digraph $G=(V,E)$ and an ordering $\pi$ of $V$, let $\pi[\alpha]$ be the first $\alpha$ vertices in the ordering $\pi$. The \emph{width} of $\pi$ is equal to $\max_{0\leq \alpha\leq |V|} |E(\pi[\alpha],V\setminus \pi[\alpha])|$; the \emph{cutwidth} of $G$, denoted $\ctw(G)$, is the minimum width over all orderings of $V$.
\end{definition}

\begin{definition}\label{def:pw}
Given a digraph $G=(V,E)$, a sequence $W = (W_1, \dots , W_r)$ of subsets of $V$  is a \emph{path decomposition of $G$} if the following conditions are satisfied:
 \begin{itemize}
\item[{\em{(i)}}] $\bigcup_{1\leq i\leq r} W_i = V $;
\item[{\em{(ii)}}] $W_i \cap W_k \subseteq W_j$ for $1 \leq i < j < k \leq r$;
\item[{\em{(iii)}}]  $\forall$  $(u,v)\in E$, either $u,v\in W_i$ for some $i$ or $u\in W_i$, $v\in W_j$ for some $i>j$.  \end{itemize}
 We call $W_1,\dots , W_r$ the \emph{bags} of the path decomposition.
 The \emph{width} of a path decomposition is equal to
$\max_{1\leq i \leq r} (|W_i| - 1)$; the \emph{pathwidth} of $G$, denoted $\pw(G)$, is the minimum width over all path decompositions of $G$.
\end{definition}

In~\cite{my} it is observed that $\pw(G)\leq 2\ctw(G)$ for every digraph $G$. We say that a path decomposition $W=(W_1,\ldots,W_{r})$ is {\em{nice}} if it has following two additional properties:
\begin{itemize}
\item $W_1=W_{r}=\emptyset$;
\item for every $i=1,2,\ldots,r-1$ we have that $W_{i+1}=W_i\cup \{v\}$ for some vertex $v\notin W_i$, or $W_{i+1}=W_i\setminus \{w\}$ for some vertex $w\in W_i$.
\end{itemize}
If $W_{i+1}=W_i\cup \{v\}$ then we say that in bag $W_{i+1}$ we {\em{introduce}} vertex $v$, while if $W_{i+1}=W_i\setminus \{w\}$ then we say that in bag $W_{i+1}$ we {\em{forget}} vertex $w$. Given any path decomposition $W$ of width $p$, in $O(p|V(T)|)$ time we can construct a nice path decomposition $W'$ of the same width in a standard manner: we first introduce empty bags at the beginning and at the end, and then insert new bags between any two consecutive ones by firstly forgetting all the vertices that do not appear in the second bag, and then introducing all the new ones.

Any path decomposition naturally corresponds to a monotonic sequence of separations.

\begin{definition}
A sequence of separations  $((A_0,B_0),\ldots,(A_r,B_r))$ is called a {\em{separation chain}} if $(A_0,B_0)=(\emptyset,V(T))$, $(A_r,B_r)=(V(T),\emptyset)$ and $A_i\subseteq A_j, B_i\supseteq B_j$ for all $i\leq j$. The {\em{width}} of the separation chain is equal to $\max_{0\leq i\leq r} |A_i\cap B_i|$.
\end{definition}

\begin{lemma}\label{lem:sepchain}
The following holds.
\begin{itemize}
\item Let $W=(W_1,\ldots,W_r)$ be a path decomposition of a digraph $T$ of width at most $p$, where no two consecutive bags are equal. Then sequence $((A_0,B_0),\ldots,(A_r,B_r))$ defined as $(A_i,B_i)=(\bigcup_{j=1}^i W_j, \bigcup_{j=i+1}^r W_j)$ is a separation chain of width at most $p$.
\item Let $((A_0,B_0),\ldots,(A_r,B_r))$ be a separation chain in digraph $T$. Then $W=(W_1,\ldots,W_r)$ defined by $W_i=A_i\cap B_{i-1}$ is a path decomposition of width $\max_{1\leq i\leq r} |A_i\cap B_{i-1}|-1$.
\end{itemize}
\end{lemma}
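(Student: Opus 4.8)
The plan is to verify the two bullets separately; each reduces to unwinding the definitions, with exactly one delicate point in each direction.

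For the first bullet I would start with the easy checks: $(A_0,B_0)=(\emptyset,V(T))$ and $(A_r,B_r)=(V(T),\emptyset)$ follow from property~(i), and $A_i\subseteq A_j$, $B_i\supseteq B_j$ for $i\leq j$ is immediate since the $A_i$ (resp.\ the $B_i$) are nested unions. To see that each $(A_i,B_i)$ is a separation, note $A_i\cup B_i=V(T)$ is clear, and if $(u,v)\in E(T)$ with $u\in A_i\setminus B_i$ and $v\in B_i\setminus A_i$, then every bag containing $u$ has index $\leq i$ while every bag containing $v$ has index $\geq i+1$, so $u$ and $v$ share no bag and no bag of $u$ comes after a bag of $v$ — contradicting property~(iii). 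The delicate point is the width bound: using the interpolation property~(ii) one gets $A_i\cap B_i\subseteq W_i\cap W_{i+1}$ (a vertex lying in both $A_i$ and $B_i$ is in some bag of index $\leq i$ and some bag of index $>i$, hence by~(ii) in both $W_i$ and $W_{i+1}$). Since $|W_i|,|W_{i+1}|\leq p+1$, the only way $|W_i\cap W_{i+1}|$ could exceed $p$ would be $W_i\cap W_{i+1}=W_i=W_{i+1}$, which the hypothesis that no two consecutive bags are equal forbids; hence $|A_i\cap B_i|\leq p$. This is precisely where that hypothesis is used.

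For the second bullet I would check the three axioms for $W_i=A_i\cap B_{i-1}$. Property~(i): for a vertex $v$, let $i$ be the least index with $v\in A_i$; then $i\geq 1$ and $v\notin A_{i-1}$, so $v\in B_{i-1}$ because $(A_{i-1},B_{i-1})$ is a separation, whence $v\in W_i$. Property~(ii): if $v\in W_i$ and $v\in W_k$ with $i<j<k$, then $v\in A_i\subseteq A_j$ and $v\in B_{k-1}\subseteq B_{j-1}$, so $v\in W_j$ — in fact $v$ occupies a contiguous block of bags. The width equality is just the definition of the $W_i$. The main obstacle, as before, is property~(iii): given $(u,v)\in E(T)$, let $b$ be the least index with $v\in A_b$, so that $v\in W_b$ and $v\in B_{b-1}\setminus A_{b-1}$. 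If $u\in A_{b-1}$ then $u\notin B_{b-1}$ is impossible, as it would place the arc $(u,v)$ in $E(A_{b-1}\setminus B_{b-1},B_{b-1}\setminus A_{b-1})$, so $u\in A_{b-1}\cap B_{b-1}\subseteq W_b$ and $u,v$ lie in the common bag $W_b$; if $u\notin A_{b-1}$, then letting $a$ be the least index with $u\in A_a$ we get $a\geq b$ and $u\in W_a$, so either $a>b$, giving the ``backward arc'' alternative of~(iii), or $a=b$, in which case again $u,v\in W_b$. Keeping straight which separation's defining property and which orientation of the arc are being invoked in this case analysis is the only genuine subtlety; everything else is routine.
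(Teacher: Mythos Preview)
Your proof is correct and follows essentially the same route as the paper's: for the first bullet you argue exactly as the paper does (separation via property~(iii), width via $A_i\cap B_i\subseteq W_i\cap W_{i+1}$ and the no-equal-consecutive-bags hypothesis), and for the second bullet your direct case analysis for property~(iii) is just a reorganisation of the paper's contradiction argument using the largest bag index of the tail and the smallest bag index of the head. The only cosmetic difference is that the paper notes the equality $A_i\cap B_i=W_i\cap W_{i+1}$ rather than the one-sided containment, but your containment already suffices.
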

\begin{proof}
For the first claim, it suffices to observe that $(A_i,B_i)$ are indeed separations, as otherwise there would be an edge $(v,w)\in E(T)$ such that $v$ can belong only to bags with indices at most $i$ and $w$ can belong only to bags with indices larger than $i$; this is a contradiction with property {\em{(iii)}} of path decomposition. The bound on width follows from the fact that $A_i\cap B_i=W_i\cap W_{i+1}$ (by property {\em{(ii)}} of path decomposition) and no two consecutive bags are equal.

For the second claim, we need to carefully check all the properties of a path decomposition. Property {\em{(i)}} follows from the fact that  $A_0=\emptyset$, $A_r=V(T)$ and $W_i\supseteq A_i\setminus A_{i-1}$  for all $1\leq i\leq r$. Property {\em{(ii)}} follows from the fact that $A_i\subseteq A_j, B_i\supseteq B_j$ for all $i\leq j$: the interval in the decomposition containing any vertex $v$ corresponds to the intersection of the prefix of the chain where $v$ belongs to sets $A_i$, and the suffix where $v$ belongs to sets $B_i$.

For property {\em{(iii)}}, take any $(v,w)\in E(T)$. Let $\alpha$ be the largest index such that $v\in W_\alpha$ and $\beta$ be the smallest index such that $w\in W_\beta$. It suffices to prove that $\alpha\geq \beta$. For the sake of contradiction assume that $\alpha<\beta$ and consider separation $(A_\alpha,B_\alpha)$. By maximality of $\alpha$ it follows that $v\notin B_{\alpha}$; as $\beta>\alpha$ and $\beta$ is minimal, we have also that $w\notin A_{\alpha}$. Then $v\in A_{\alpha}\setminus B_{\alpha}$ and $w\in B_{\alpha}\setminus A_{\alpha}$, which contradicts the fact that $(A_{\alpha},B_{\alpha})$ is a separation.
\end{proof}

Assume that $((A_0,B_0),\ldots,(A_r,B_r))$ is a separation chain corresponding to a nice path decomposition $W$ in the sense of Lemma~\ref{lem:sepchain}. Since $A_0=\emptyset$, $A_r=V(G)$, and $|A_i|$ can change by at most $1$ between two consecutive separations, for every $\ell$, $0\leq \ell\leq |V(G)|$, there is some separation $(A_i,B_i)$ for which $|A_i|=\ell$ holds. Let $W[\ell]$ denote any such separation.

\subsection{Degrees in semi-complete digraphs}

Let $T$ be a semi-complete digraph. Note that for every $v\in V(T)$ we have that $d^+(v)+d^-(v)\geq |V(T)|-1$, and that the equality holds for all $v\in V(T)$ if and only if $T$ is a tournament.

\begin{lemma}\label{lem:small-outdeg}
Let $T$ be a semi-complete digraph. Then the number of vertices of $T$ with outdegrees at most $d$ is at most $2d+1$.
\end{lemma}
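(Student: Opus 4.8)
The plan is to argue by contradiction using a counting argument over the arcs inside the set of low-outdegree vertices. Suppose $S \subseteq V(T)$ is the set of vertices $v$ with $d^+(v) \le d$, and suppose for contradiction that $|S| \ge 2d+2$. Since $T$ is semi-complete, the induced subdigraph $T[S]$ is itself semi-complete, so between every pair of distinct vertices of $S$ there is at least one arc. Hence the total number of arcs with both endpoints in $S$ is at least $\binom{|S|}{2}$.

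On the other hand, I would bound this same quantity from above by summing outdegrees within $S$. For each $v \in S$, the number of arcs of $T[S]$ leaving $v$ is at most $d^+_{T}(v) \le d$, so the number of arcs of $T[S]$ is at most $|S| \cdot d$. Combining the two bounds gives $\binom{|S|}{2} \le |S|\cdot d$, i.e. $\frac{|S|-1}{2} \le d$, i.e. $|S| \le 2d+1$, contradicting $|S| \ge 2d+2$. This yields the claimed bound $|S| \le 2d+1$.

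The only point that needs a moment of care — and this is the step I expect to be the mild ``obstacle,'' though it is routine — is the double-arc issue: in a semi-complete digraph both arcs $(v,w)$ and $(w,v)$ may be present, so ``at least one arc per pair'' is a lower bound on arc count but the per-vertex outdegree bound must be applied consistently. The clean way to handle this is to work with the underlying simple ordering: fix any linear order on $S$ and count, for each pair, only the ``at least one'' arc guaranteed by semi-completeness; charge that arc to whichever of its endpoints it leaves. Each $v \in S$ is charged at most $d^+_T(v) \le d$ times (the charges to $v$ are distinct out-arcs of $v$ in $T$), so $\binom{|S|}{2} \le \sum_{v\in S} d \le |S|d$, and the same arithmetic closes the argument. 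Note that equality $|S| = 2d+1$ is attainable, e.g. by a transitive tournament on $2d+1$ vertices where the bottom $2d+1$ vertices have outdegrees $0,1,\dots,2d$; this shows the bound is tight.
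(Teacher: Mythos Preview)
Your proof is correct and is essentially the same counting argument as the paper's: both restrict to the induced semi-complete subdigraph on the low-outdegree set and compare the at-least-$\binom{|S|}{2}$ arcs against the at-most-$|S|\cdot d$ total outdegree; the paper just phrases the averaging as ``some vertex of $T[S]$ has outdegree at least $(|S|-1)/2$'' rather than writing the inequality directly. Your worry about double arcs is unnecessary, since the sum of outdegrees in $T[S]$ equals the number of arcs of $T[S]$ on the nose, and that number is at least $\binom{|S|}{2}$.

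One small slip in your closing remark: the transitive tournament on $2d+1$ vertices does \emph{not} witness tightness, since its outdegrees are $0,1,\dots,2d$ and only $d+1$ of those are at most $d$. A correct tightness example is a regular tournament on $2d+1$ vertices (e.g.\ the rotational tournament on $\mathbb{Z}_{2d+1}$), where every vertex has outdegree exactly $d$.
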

\begin{proof}
Let $A$ be the set of vertices of $T$ with outdegrees at most $d$, and for the sake of contradiction assume that $|A|>2d+1$. Consider semi-complete digraph $T[A]$. As in every semi-complete digraph $S$ there is a vertex of outdegree at least $\frac{|V(S)|-1}{2}$, in $T[A]$ there is a vertex with outdegree larger than $d$. As outdegrees in $T[A]$ are not smaller than in $T$, this is a contradiction with the definition of $A$.
\end{proof}

\begin{lemma}\label{lem:degree-gap}
Let $T$ be a semi-complete digraph and let $x,y$ be vertices of $T$ such that $d^+(x)>d^+(y)+\ell$. Then there exist at least $\ell$ vertices that are both outneighbours of $x$ and inneighbours of $y$ and, consequently, $\ell$ vertex-disjoint paths of length $2$ from $x$ to $y$.
\end{lemma}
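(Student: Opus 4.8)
The plan is to count, via semi-completeness, how the arcs between $N^+(x)$, $N^-(y)$, and a few distinguished vertices are forced to be oriented, and extract from this a lower bound on $|N^+(x)\cap N^-(y)|$. First I would set $S = V(T)\setminus(\{x,y\}\cup N^+(x))$; by definition every vertex of $S$ (other than $x,y$) is a non-outneighbour of $x$, so by semi-completeness the arc between $x$ and each such vertex is oriented towards $x$, i.e.\ $S\subseteq N^-(x)$ and $|N^+(x)|+|S| = |V(T)|-2$ after accounting for $x,y$ themselves (a routine bookkeeping step I would spell out carefully, paying attention to whether $y\in N^+(x)$).

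Next I would compare this with the outneighbours of $y$. The key point is that $N^+(y)$ is ``small'' relative to $N^+(x)$: since $d^+(x) > d^+(y)+\ell$, the set $N^+(x)$ has more than $\ell$ more elements than $N^+(y)$. I would now look at $N^+(x)\setminus N^+(y)$: this set has size strictly greater than $\ell$ (because $|N^+(x)\setminus N^+(y)| \ge |N^+(x)| - |N^+(y)| > \ell$, up to a possible $\pm 1$ correction coming from whether $y$ itself lies in $N^+(x)$, which I would check forces at most a harmless off-by-one that the strict inequality absorbs). Every vertex $z$ in this set satisfies $z\in N^+(x)$ and $z\notin N^+(y)$; if moreover $z\neq y$, then by semi-completeness the missing arc between $y$ and $z$ must be present in the other direction, i.e.\ $z\in N^-(y)$. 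Hence all such $z$ (excluding at most the single vertex $y$) lie in $N^+(x)\cap N^-(y)$, giving at least $\ell$ vertices that are simultaneously outneighbours of $x$ and inneighbours of $y$.

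Finally, for each such vertex $z$ the two arcs $(x,z)$ and $(z,y)$ form a directed path of length $2$ from $x$ to $y$, and these paths are internally vertex-disjoint since their internal vertices are the distinct vertices $z$. This yields the claimed $\ell$ vertex-disjoint paths of length $2$ and completes the proof.

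I expect the only real subtlety to be the careful handling of the vertices $x$ and $y$ themselves when translating the outdegree gap $d^+(x) - d^+(y) > \ell$ into the cardinality bound $|N^+(x)\cap N^-(y)| \ge \ell$ — in particular, whether $(x,y)\in E(T)$ or $(y,x)\in E(T)$ (or both), which affects membership of $y$ in $N^+(x)$ and of $x$ in $N^-(y)$. Since the hypothesis is a strict inequality with an integer slack, this accounting costs at most one unit and still delivers the bound $\ell$; making this precise is the one place where I would be scrupulous rather than wave hands.
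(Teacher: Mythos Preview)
Your argument is correct, and morally the same counting as the paper's, though you take a slightly longer path. The paper's proof is a single line: from $d^-(y)\ge |V(T)|-1-d^+(y)$ (semi-completeness) and $d^+(x)\ge d^+(y)+\ell+1$ it gets $|N^+(x)|+|N^-(y)|\ge |V(T)|+\ell$, and inclusion--exclusion over the $|V(T)|$-element universe gives $|N^+(x)\cap N^-(y)|\ge \ell$ immediately. Your route via $N^+(x)\setminus N^+(y)$, followed by the observation that any $z\neq y$ in this set must lie in $N^-(y)$ by semi-completeness, arrives at the same conclusion with the same off-by-one absorbed by the strict inequality; this is fine. Two minor remarks: your first paragraph introducing $S=V(T)\setminus(\{x,y\}\cup N^+(x))$ is never used and can be dropped, and the bookkeeping you flag as ``the one subtlety'' is exactly the possible membership $y\in N^+(x)\setminus N^+(y)$, which costs at most one and is indeed covered by $d^+(x)-d^+(y)\ge \ell+1$.
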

\begin{proof}
Let $\alpha=d^+(y)$. We have that $d^-(y)+d^+(x)\geq |V(T)|-1-\alpha+\alpha+\ell+1=|V(T)|+\ell$. Hence, by the pigeonhole principle there exist at least $\ell$ vertices of $T$ that are both outneighbours of $x$ and inneighbours of $y$.
\end{proof}

\section{The obstacle zoo}\label{sec:zoo}

In this section we describe the set of obstacles used by the algorithms. We begin with {\em{short jungles}}, enhanced analogues of the $k$-jungle of Fradkin and Seymour~\cite{fradkin-seymour}. It appears that the enhancement enables us to construct large topological subgraph or immersion models in short jungles in a greedy manner, and this observation is the key to trimming the running times for topological containment and immersion tests. We continue with further obstacles: degree and matching tangles for pathwidth, and backward tangles for cutwidth, each time proving two lemmas. The first asserts that existence of the structure is indeed an obstacle for having small width, while the second shows that one can constructively find an appropriate short jungle in a sufficiently large obstacle.

\subsection{Short jungles}

\begin{definition}
Let $T$ be a semi-complete digraph and $k,d$ be integers. A $(k,d)${\em{-short (immersion) jungle}} is a set $X\subseteq V(T)$ such that {\em{(i)}} $|X|\geq k$; {\em{(ii)}} for every $v,w\in X$ there are $k$ vertex-disjoint (edge-disjoint) paths from $v$ to $w$ of length at most $d$. 
\end{definition}

We remark that in all our algorithms, every short jungle is constructed and stored together with corresponding families of $k$ paths for each pair of vertices. The restriction on the length of the paths enables us to construct topological subgraph (immersion) models greedily.

\begin{lemma}\label{lem:sj-ts}
If a digraph $T$ contains a $(dk,d)$-short (immersion) jungle for some $d>1$, then it admits every digraph $S$ with $|S|\leq k$ as a topological subgraph (as an immersion).
\end{lemma}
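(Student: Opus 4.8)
The plan is to build the topological subgraph (immersion) model of $S$ greedily, vertex by vertex and then arc by arc, using the fact that a $(dk,d)$-short jungle $X$ is robust enough that no bounded-size collection of ``already used'' vertices (resp.\ arcs) can destroy the required connectivity between any two of its members. Let $h=|V(S)|$ and $m=|E(S)|$, so $|S|=h+m\le k$; in particular $h\le k$ and $m\le k$. First, since $|X|\ge dk\ge k\ge h$, pick arbitrary distinct vertices $x_1,\dots,x_h\in X$ and set $\eta(v_i)=x_i$, where $v_1,\dots,v_h$ is an enumeration of $V(S)$. This defines $\eta$ on $V(S)$; it remains to route the arcs.

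Next, I would process the arcs of $S$ one at a time. Suppose arcs $e_1,\dots,e_{j-1}$ have already been routed and we wish to route $e_j=(v_a,v_b)$. In the topological-containment case, the internally-vertex-disjoint paths chosen for $e_1,\dots,e_{j-1}$ use at most $(j-1)(d-1)$ internal vertices in total, plus we must avoid the $h$ branch vertices other than $x_a,x_b$; so the number of ``forbidden'' vertices is at most $(j-1)(d-1)+h-2 \le m(d-1)+h \le dk$ (using $m\le k$, $h\le k$, $d>1$, so $m(d-1)+h\le mk\cdot\frac{d-1}{1}$... more crudely $md\le dk$ since... — the precise bookkeeping is exactly the place where the hypothesis ``$dk$ disjoint paths'' is calibrated). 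By the definition of the short jungle, there are $dk$ pairwise vertex-disjoint paths of length at most $d$ from $x_a$ to $x_b$; since the forbidden set has size less than $dk$, at least one of these paths avoids it entirely, and we use it as $\eta(e_j)$. For immersions the argument is identical except we count forbidden \emph{arcs}: the previously routed paths use at most $(j-1)d\le md\le dk$ arcs... again the count must come in strictly below $dk$, and among the $dk$ pairwise edge-disjoint $x_a$–$x_b$ paths one avoids all previously used arcs. Iterating over all $m$ arcs yields the full model, so $S$ is a topological subgraph (immersion) of $T$.

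The step I expect to be the only real obstacle is the arithmetic that guarantees, at every stage, that the set of vertices (arcs) we must avoid is \emph{strictly smaller} than $dk$, so that one of the $dk$ disjoint paths survives. This is a straightforward but slightly delicate counting argument: one bounds the total usage of internal vertices (or arcs) over all previously routed paths by $m\cdot d$, notes $m\le k$, and checks that together with the at most $h\le k$ branch vertices the total stays below $dk$ — which is exactly why the jungle is required to have parameter $dk$ rather than just $k$, and why $d>1$ is assumed (for $d=1$ the length-$1$ paths are single arcs and there is no slack). I would state these inequalities explicitly, taking care that the path being routed never needs to avoid its own endpoints $x_a,x_b$, and that in the topological case the disjointness required of $\eta(e_j)$ with earlier path-interiors is only \emph{internal}, so sharing an endpoint that is a branch vertex is allowed and does not cause a conflict.
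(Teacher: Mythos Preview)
Your approach is exactly the paper's: greedily place the branch vertices inside the jungle, then route the arcs one by one, each time using that among the $dk$ pairwise (vertex/edge)-disjoint short paths at least one avoids the already-used material. The arithmetic you flag as delicate is clean once you use $|V(S)|,|E(S)|\le |S|\le k$ directly: when routing the $i$-th arc the number of used vertices is at most $k+(i-1)(d-1)\le k+(k-1)(d-1)=dk-(d-1)<dk$, and for immersion the number of used arcs is at most $(i-1)d\le (k-1)d<dk$, so one of the $dk$ paths always survives.
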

\begin{proof}
Firstly, we prove the lemma for short jungles and topological subgraphs. Let $X$ be the short jungle whose existence is assumed. We construct the expansion model greedily. As images of vertices of $S$ we put arbitrary $|V(S)|$ vertices of $X$. Then we construct paths being images of arcs in $S$; during each construction we use at most $d-1$ new vertices of the graph for the image. While constructing the $i$-th path, which has to lead from $v$ to $w$, we consider $dk$ vertex-disjoint paths of length $d$ from $v$ to $w$. So far we used at most $k+(i-1)(d-1)<dk$ vertices, so at least one of these paths does not traverse any used vertex. Hence, we can safely use this path as the image and proceed; note that thus we use at most $d-1$ new vertices.

Secondly, we prove the lemma for short immersion jungles and immersions. Let $X$ be the short immersion jungle whose existence is assumed. We construct the immersion model greedily. As images of vertices of $S$ we put arbitrary $|V(S)|$ vertices of $X$. Then we construct paths being images of arcs in $S$; during each construction we use at most $d$ new arcs of the graph for the image. While constructing the $i$-th path, which has to lead from $v$ to $w$, we consider $dk$ edge-disjoint paths of length $d$ from $v$ to $w$. So far we used at most $(i-1)d<dk$ arcs, so at least one of these paths does not contain any used arc. Hence, we can safely use this path as the image and proceed; note that thus we use at most $d$ new arcs.
\end{proof}

\subsection{Degree tangles}

\begin{definition}
Let $T$ be a semi-complete digraph and $k,\ell$ be integers. A $(k,\ell)${\em{-degree tangle}} is a set $X\subseteq V(T)$ such that {\em{(i)}} $|X|\geq k$; {\em{(ii)}} for every $v,w\in X$ we have $|d^+(v)-d^+(w)|\leq \ell$.
\end{definition}

\begin{lemma}\label{lem:degree-tangle-pathwidth}
Let $T$ be a semi-complete digraph. If $T$ contains a $(5k+2,k)$-degree tangle~$X$, then $\pw(T)>k$.
\end{lemma}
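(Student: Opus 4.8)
The plan is to argue by contradiction: suppose $T$ has a path decomposition of width at most $k$, and use the corresponding separation chain (Lemma~\ref{lem:sepchain}) to derive a contradiction with the existence of a $(5k+2,k)$-degree tangle $X$. First I would pass to a nice path decomposition, so that for every $\ell$ with $0 \le \ell \le |V(T)|$ there is a separation $W[\ell] = (A_\ell, B_\ell)$ in the chain with $|A_\ell| = \ell$ and $|A_\ell \cap B_\ell| \le k+1$ (in fact, by working slightly more carefully with the chain one gets $|A_\ell \cap B_\ell| \le k$, or one simply tolerates the off-by-one and sharpens constants at the end). The key observation to set up is the following: if $v \in A_\ell \setminus B_\ell$, then every outneighbour of $v$ lies in $A_\ell$, because $(A_\ell,B_\ell)$ is a separation, so $E(A_\ell \setminus B_\ell, B_\ell \setminus A_\ell) = \emptyset$ means $v$ has no outneighbour in $B_\ell \setminus A_\ell$; hence $d^+(v) \le |A_\ell| - 1 = \ell - 1$. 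Symmetrically, if $w \in B_\ell \setminus A_\ell$, then every inneighbour of $w$ lies in $B_\ell$, so all outneighbours of $w$ outside $A_\ell \cap B_\ell$... more usefully, $N^+(w) \supseteq$ nothing forced, but $N^-(w) \subseteq B_\ell$, giving $d^-(w) \le |B_\ell| - 1$ and thus $d^+(w) \ge |V(T)| - 1 - (|B_\ell|-1) - 0$; since $|B_\ell| = |V(T)| - \ell + |A_\ell \cap B_\ell| \le |V(T)| - \ell + k + 1$, this yields $d^+(w) \ge \ell - k - 1$. Roughly: vertices well inside $A_\ell$ have outdegree below $\ell$, vertices well inside $B_\ell$ have outdegree at least $\ell - O(k)$.

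Next I would choose $\ell$ to cut the degree tangle roughly in half. Order the vertices of $X$ by position in the decomposition and let $x$ be the element of $X$ that is, say, the $(2k+2)$-th to appear; choose $\ell$ so that $W[\ell] = (A_\ell, B_\ell)$ has $x \in A_\ell \cap B_\ell$ if possible, or more robustly, choose $\ell$ to be the number of vertices appearing strictly before the "middle" block of $X$. The point is to split $X$ as $X = X_A \cup X_{\mathrm{mid}} \cup X_B$ where $X_A \subseteq A_\ell \setminus B_\ell$, $X_B \subseteq B_\ell \setminus A_\ell$, and $X_{\mathrm{mid}} \subseteq A_\ell \cap B_\ell$, with $|X_{\mathrm{mid}} | \le k+1 \le k+1$ and both $|X_A|$ and $|X_B|$ sizeable. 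With $|X| \ge 5k+2$ and at most $k+1$ vertices landing in the separator, a counting argument lets me arrange that $|X_A| \ge 1$ and $|X_B| \ge 1$ — in fact one can push it to get specific lower bounds — but the crucial use will be: pick any $v \in X_A$ with $v$ as late as possible in $A_\ell$, and any $w \in X_B$ with $w$ as early as possible.

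Now I combine the two bounds. From $v \in X_A \subseteq A_\ell \setminus B_\ell$ we get $d^+(v) \le \ell' - 1$ where $\ell'$ is the number of vertices appearing up to and including $v$'s block; from $w \in X_B \subseteq B_\ell \setminus A_\ell$ we get $d^+(w) \ge \ell'' - k - 1$ where $\ell''$ counts vertices before $w$'s block. The number of vertices of $X$ itself lying strictly between $v$ and $w$ in the ordering is at least $5k+2 - 2k+2$-ish, i.e.\ bounded below by something exceeding $2k$, once we account for the $\le k+1$ separator vertices and the choice of the $(2k+2)$-th element; these all contribute to $\ell'' - \ell'$. Being careful with the bookkeeping, $\ell'' - \ell'$ exceeds $2k$, hence $d^+(w) - d^+(v) \ge (\ell'' - k - 1) - (\ell' - 1) = (\ell'' - \ell') - k > 2k - k = k$, contradicting property (ii) of the degree tangle, namely $|d^+(v) - d^+(w)| \le k$. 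This is where the constant $5k+2$ is consumed: we need enough vertices in $X$ to guarantee a "gap" of more than $k$ positions between a vertex certified on the low-outdegree side and one certified on the high-outdegree side, after discarding the up-to-$(k+1)$ vertices that the bag of width $k$ can hide and leaving margin on both sides.

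The main obstacle I anticipate is the precise accounting of constants: making sure that after throwing away the separator vertices (at most $k+1$, or $k$ with the sharper chain) and after the two $O(k)$-sized slacks coming from "$d^+(v) \le \ell - 1$" versus "$d^+(w) \ge \ell - k - 1$", the residual gap in $X$ is still strictly larger than $k$, which is exactly what $5k+2$ is tuned to deliver ($5k+2 = k + (k+1) + (2k+1) + \ldots$, the split into low/separator/high with enough in each). I would handle this by fixing $\ell$ via an explicit definition (e.g.\ $\ell = $ the position just before the $(2k+2)$-th vertex of $X$), deducing that at least $2k+1$ vertices of $X$ occur at positions $> \ell$ and among those at least $2k+1 - (k+1) = k$ lie in $B_\ell \setminus A_\ell$, hence the earliest such $w$ has $d^+(w) \ge \ell$; and symmetrically the latest $v \in X_A$ has position $\le \ell$ hence $d^+(v) \le \ell - 1 - (\text{something})$. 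Chasing these inequalities to the clean contradiction $d^+(w) - d^+(v) > k$ is the only real work; everything else is the two separation inequalities established at the start.
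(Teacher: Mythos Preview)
Your high-level strategy---use the separation chain to bound outdegrees from above on the $A$-side and from below on the $B$-side, and force a gap larger than $k$---is sound, but the execution has a real gap that your own caveat about ``precise accounting of constants'' does not resolve. With a \emph{single} separation $(A_\ell,B_\ell)$ you only get $d^+(v)\le \ell-1$ for $v\in A_\ell\setminus B_\ell$ and $d^+(w)\ge \ell-k$ for $w\in B_\ell\setminus A_\ell$, hence $d^+(w)-d^+(v)\ge 1-k$, which is useless; your closing paragraph silently upgrades this to ``$d^+(w)\ge \ell$'' without justification. Your fallback to two positions $\ell',\ell''$ needs $\ell''-\ell'\ge 2k$, and you assert this because ``the vertices of $X$ lying between $v$ and $w$ contribute to $\ell''-\ell'$''; but you never define $\ell',\ell''$ precisely, and nothing you wrote forces many elements of $X$ to be introduced between the forgetting of $v$ and the introduction of $w$. (A correct two-cut version exists---take $\ell_1$ minimal with $|X\cap A_{\ell_1}|\ge k+2$ and $\ell_2$ maximal with $X\not\subseteq A_{\ell_2}$, then use that $|X\cap A_\ell|$ grows by at most one per unit of $\ell$---but that is not the argument you gave.)

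The paper's proof is shorter and sidesteps all this bookkeeping by choosing the cut differently: it sets $\alpha=\min_{x\in X} d^+(x)$ and takes the single separation $(A,B)=W[\alpha]$. This one move forces $X\cap(A\setminus B)=\emptyset$ immediately, since any such vertex would have outdegree at most $\alpha-1$. With $|X\cap(A\cap B)|\le k$, the whole proof reduces to bounding $|X\cap(B\setminus A)|$. Here the paper invokes Lemma~\ref{lem:small-outdeg}: if $|X\cap(B\setminus A)|\ge 4k+2$, some $v$ in that set has outdegree at least $2k+1$ inside $T[B\setminus A]$, and since semi-completeness forces every vertex of $A\setminus B$ (at least $\alpha-k$ of them) to be an outneighbour of $v$, one gets $d^+(v)\ge \alpha+k+1>\max_{x\in X}d^+(x)$, a contradiction. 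Summing the three pieces gives $|X|\le 5k+1$. The idea you are missing is to let the \emph{degrees} of $X$ pick the cut position rather than the decomposition positions of the vertices of $X$; together with the pigeonhole Lemma~\ref{lem:small-outdeg} this replaces your delicate gap accounting entirely.
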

\begin{proof}
For the sake of contradiction, assume that $T$ admits a (nice) path decomposition $W$ of width at most $k$. Let $\alpha=\min_{v\in X} d^+(v)$ and $\beta=\max_{v\in X} d^+(v)$; we know that $\beta-\alpha\leq k$. Let $(A,B)=W[\alpha]$. We know that $|A\cap B|\leq k$ and $|A|=\alpha$.

Firstly, observe that $X\cap (A\setminus B)=\emptyset$. This follows from the fact that vertices in $A\setminus B$ can have outneighbours only in $A$, so their outdegrees are upper bounded by $|A|-1=\alpha-1$.

Secondly, $|X\cap (A\cap B)|\leq k$ as $|A\cap B|\leq k$.

Thirdly, we claim that $|X\cap (B\setminus A)|\leq 4k+1$. Assume otherwise. By Lemma~\ref{lem:small-outdeg} we infer that there exists a vertex $v\in X\cap (B\setminus A)$ whose outdegree in $T[B\setminus A]$ is at least $2k+1$. As $(A,B)$ is a separation and $T$ is semi-complete, all the vertices of $A\setminus B$ are also outneighbours of $v$. Note that $|A\setminus B|=|A|-|A\cap B|\geq \alpha-k$. We infer that $v$ has at least $\alpha-k+2k+1=\alpha+k+1>\beta$ neighbours, which is a contradiction with $v\in X$.

Summing up the bounds we get $5k+2\leq |X|\leq k+4k+1=5k+1$, a contradiction.
\end{proof}

\begin{lemma}\label{lem:degree-tangle-jungle}
Let $T$ be a semi-complete digraph and let $X$ be a $(26k,k)$-degree tangle in $T$. Then $X$ contains a $(k,3)$-short jungle, which can be found in $O(k^3 |V(T)|^2)$ time.
\end{lemma}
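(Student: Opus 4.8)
The plan is to extract from the large degree tangle $X$ a subset of $k$ vertices that pairwise admit $k$ vertex-disjoint paths of length at most $3$, and to do so constructively within the claimed time bound. The natural building block is Lemma~\ref{lem:degree-gap}: if two vertices $x,y$ satisfy $d^+(x) > d^+(y) + \ell$, we immediately get $\ell$ vertex-disjoint length-$2$ paths from $x$ to $y$. Inside a $(26k,k)$-degree tangle the outdegrees of any two vertices differ by at most $k$, so this gives us plenty of paths ``downwards'' in outdegree but nothing for free in the other direction. The first step, therefore, is to fix an outdegree ordering $v_1, v_2, \ldots$ of the vertices of $X$ and split $X$ into blocks according to outdegree: since $|X| \geq 26k$ and all outdegrees lie in an interval of length $k$, by pigeonhole there is a ``low'' group $L$ and a ``high'' group $H$, each of size $\ge$ some constant multiple of $k$, with every vertex of $H$ having outdegree exceeding that of every vertex of $L$ by a suitable additive margin (at least, say, $k$ — we need the blocks to be genuinely separated, which costs us a constant factor and is exactly why the hypothesis asks for $26k$ rather than $5k+2$).

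Next I would obtain the short jungle on a set $Y$ of $k$ vertices chosen to lie ``in the middle''. For an ordered pair $v,w \in Y$ we need $k$ vertex-disjoint paths of length $\le 3$ from $v$ to $w$. If $d^+(v) > d^+(w)$ we are essentially done by Lemma~\ref{lem:degree-gap} once the gap is at least $k$; the real work is when $d^+(v) \le d^+(w)$, i.e.\ going ``up''. Here the idea is to route through an intermediate vertex: pick some $u$ of very high outdegree (from the top block $H$) and some $u'$ of very low outdegree (from the bottom block $L$), both disjoint from $Y$; then $v \to u'$ is a single arc or we go $v \rightsquigarrow$ (length $2$) down to a low vertex, and from a high vertex $u$ we have a length-$2$ path to $w$. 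Concatenating a length-$\le 1$ piece with a length-$\le 2$ piece gives length $\le 3$; the point is to choose the reservoir of intermediate vertices large enough — $\Theta(k)$ of them — that we can peel off $k$ internally disjoint such paths. Counting how many ``free'' vertices of each type we need (roughly $k$ paths, each eating a bounded number of vertices, across the pairs we care about) is where the constant $26$ gets pinned down, and this bookkeeping is the step I expect to be the main obstacle: one has to be careful that the disjointness requirement is per-pair (we may reuse vertices for different pairs) so the reservoirs need only be linear in $k$, not quadratic, but verifying that a greedy selection always succeeds requires tracking the semi-complete structure (every two vertices are joined, so ``outneighbour of $x$ and inneighbour of $y$'' sets are controlled by Lemma~\ref{lem:degree-gap} and by $|A \setminus B|$-type counting as in Lemma~\ref{lem:degree-tangle-pathwidth}).

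Finally, the algorithmic claim: computing all outdegrees and sorting takes $O(|V(T)|^2)$; identifying the blocks $L$, $H$ and the set $Y$ is linear. For each of the $O(k^2)$ ordered pairs in $Y$ we must exhibit $k$ vertex-disjoint paths of length $\le 3$ — either by reading off common out/in-neighbours (a linear scan over $V(T)$ per pair, using adjacency queries) or by a bounded-length disjoint-paths computation, which for paths of length $\le 3$ reduces to a simple flow/matching argument on an auxiliary graph of size $O(|V(T)|)$ and runs in $O(k \cdot |V(T)|)$ per pair. Multiplying, $O(k^2) \cdot O(k |V(T)|) = O(k^3 |V(T)|)$, and the extra $|V(T)|$ factor in the statement absorbs the cost of building the adjacency information, giving $O(k^3 |V(T)|^2)$ overall. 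I would present the routing lemma for a single pair first (the heart of the argument), then wrap the outer loop around it.
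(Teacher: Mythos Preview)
Your plan has a genuine gap at the very first step. In a $(26k,k)$-degree tangle the outdegrees of all $26k$ vertices lie in a single interval of length $k$; there is therefore no way to split $X$ into a ``low'' block $L$ and a ``high'' block $H$ with every vertex of $H$ exceeding every vertex of $L$ by an additive margin of $k$ (or anything close to it). In fact nothing forbids all $26k$ vertices from having exactly the same outdegree, in which case Lemma~\ref{lem:degree-gap} gives you no paths at all between any pair in either direction. So the dichotomy ``going down is free, going up is the work'' never gets off the ground, and the routing sketch for the up direction is also not a path: concatenating $v\to u'$ (with $u'\in L$) and $u\to w$ (with $u\in H$) does not connect $v$ to $w$ unless you can get from $u'$ to $u$, which is precisely the hard direction again.

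The paper's argument is structurally quite different and does not try to fix in advance a subset $Y$ that will be the jungle. Instead, for each ordered pair $v,w\in X$ it partitions $V(T)\setminus\{v,w\}$ into four regions $V^{++},V^{+-},V^{-+},V^{--}$ according to being an out/in-neighbour of $v$ and of $w$. If $|V^{+-}|\geq k$ we have $k$ length-$2$ paths immediately. Otherwise one sets up a bipartite graph between $A=V^{++}\setminus V^{+-}$ and $B=V^{--}\setminus V^{+-}$ whose edges are arcs of $T$ from $A$ to $B$; a matching of size $k$ there yields $k$ length-$3$ paths $v\to a\to b\to w$. If no such matching exists, K\"onig gives a vertex cover $C$ of size $<k$, and a counting argument forces either $|A\cap X|\geq 16k$ or $|B\cap X|\geq 7k$. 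The second case is contradictory; in the first case one extracts, inside $A\cap X\setminus C$, an explicit $(k,3)$-short jungle $Z$ by a local in/out-degree argument in the induced subdigraph. So the output jungle is either $X$ itself (if every pair succeeds) or some $Z$ discovered along the way. The $O(k^3|V(T)|^2)$ bound comes from $O(k^2)$ pairs times an $O(k|V(T)|^2)$ matching/K\"onig computation per pair.
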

\begin{proof}
We present the proof of the existence statement. The proof can be easily turned into an algorithm finding the jungle; during the description we make remarks at the places where it may be non-trivial to observe how the algorithm should perform to achieve the promised running-time guarantee.

By possibly trimming $X$, assume without loss of generality that $|X|=26k$. Take any $v,w\in X$. We are going to either to find a $(k,3)$-short jungle in $X$ explicitly, or find $k$ vertex-disjoint paths from $v$ to $w$ of length at most $3$. If for no pair an explicit short jungle is found, we conclude that $X$ is a $(k,3)$-short jungle itself.

Let us consider four subsets of $V(T)\setminus \{v,w\}$: 
\begin{itemize}
\item $V^{++}=(N^+(v)\cap N^+(w))\setminus \{v,w\}$, 
\item $V^{+-}=(N^+(v)\cap N^-(w))\setminus \{v,w\}$, 
\item $V^{-+}=(N^-(v)\cap N^+(w))\setminus \{v,w\}$,
\item $V^{--}=(N^-(v)\cap N^-(w))\setminus \{v,w\}$.
\end{itemize}
Clearly, $|V^{++}|+|V^{-+}|+|V^{+-}|+|V^{--}|\geq |V(T)|-2$. Note that equality holds for the tournament case --- in this situation these four subsets form a partition of $V(T)\setminus \{v,w\}$.

If $|V^{+-}|\geq k$, then we already have $k$ vertex-disjoint paths of length $2$ from $v$ to $w$. Assume then that $|V^{+-}|<k$.

Observe that $d^+(v)\leq |V^{++}|+|V^{+-}|+1$ and $d^+(w)\geq |V^{++}|+|V^{-+}\setminus V^{++}|$. Since $v,w\in X$, we have that $d^+(w)-d^+(v)\leq k$, so
$$|V^{-+}\setminus V^{++}|\leq d^+(w)-|V^{++}|\leq k+d^+(v)-|V^{++}|\leq k+1+|V^{+-}|\leq 2k.$$
Let $A=V^{++}\setminus V^{+-}$ and $B=V^{--}\setminus V^{+-}$. Note that $A$ and $B$ are disjoint, since $V^{++}\cap V^{--}\subseteq V^{+-}$. Let $H$ be a bipartite graph with bipartition $(A,B)$, such that for $a\in A$ and $b\in B$ we have $ab\in E(H)$ if and only if $(a,b)\in E(T)$. Every edge $ab$ of $H$ gives raise to a path $v\to a\to b\to w$ of length $3$ from $v$ to $w$. Hence, if we could find a matching of size $k$ in $H$, then this matching would form a family of $k$ vertex disjoint paths of length at most $3$ from $v$ to $w$. Note that testing existence of such a matching can be done in $O(k|V(T)|^2)$ time, as we can run the algorithm finding an augmenting path at most $k$ times.

Assume then that such a matching does not exist. By K\"{o}nig's theorem we can find a vertex cover $C$ of $H$ of cardinality smaller than $k$; again, this can be found in $O(k|V(T)|^2)$ time. As $A\cup B\cup (V^{-+}\setminus V^{++})\cup V^{+-}=V(T)\setminus \{v,w\}$ while $(V^{-+}\setminus V^{++})\cup V^{+-}$ contains at most than $3k-1$ vertices in total, $A\cup B$ must contain at least $(26k-2)-(3k-1)=23k-1$ vertices from $X$. We consider two cases: either $|A\cap X|\geq 16k$, or $|B\cap X|\geq 7k$.

\vskip 0.2cm

\noindent{\bf{Case 1.}} In the first case, consider set $Y_0=X\cap (A\setminus C)$. Since $|A\cap X|\geq 16k$ and $|A\cap C|<k$, we have that $|Y_0|>15k$. Let $Y$ be any subset of $Y_0$ of size $15k$. Take any vertex $y\in Y$ and consider, where its outneighbours can lie. These outneighbours can be either in $\{v\}$ (at most $1$ of them), in $(V^{-+}\setminus V^{++})\cup V^{+-}$ (less than $3k$ of them), in $B\cap C$ (at most $k$ of them), or in $A$. As $d^+(v)\geq |A|$ and $v,y\in X$, we have that $d^+(y)\geq|A|-k$. We infer that $y$ must have at least $|A|-5k$ outneighbours in $A$. As $|Y|=15k$, we have that $y$ has at least $10k$ outneighbours in $Y$.

Note that in the tournament case we would be already finished, as this lower bound on the outdegree would imply also an upper bound on indegree, which would contradict the fact that $T[Y]$ contains a vertex of indegree at least $\frac{|Y|-1}{2}$. This also shows that in the tournament setting a stronger claim holds that in fact $X$ is a $(k,3)$-jungle itself. In the semi-complete setting, however, we do not have any contradiction yet. In fact no contradiction is possible as the stronger claim is no longer true. To circumvent this problem, we show how to find an explicit $(k,3)$-jungle within $Y$.

Observe that the sum of outdegrees in $T[Y]$ is at least $10k\cdot 15k=150k^2$. We claim that the number of vertices in $Y$ that have indegrees at least $6k$ is at least $k$. Otherwise, the sum of indegrees would be bounded by $15k\cdot k + 6k\cdot 14k=99k^2<150k^2$ and the sums of the indegrees and of the outdegrees would not be equal. Let $Z$ be any set of $k$ vertices in $Y$ that have indegrees at least $6k$ in $T[Y]$. Take any $z_1,z_2\in Z$ and observe that in $T[Y]$ the set of outneighbours of $z_1$ and the set of inneighbours of $z_2$ must have intersection of size at least $k$, as $d^+_{T[Y]}(z_1)\geq 10k$, $d^-_{T[Y]}(z_2)\geq 6k$ and $|Y|=15k$. Through these $k$ vertices one can pass $k$ vertex-disjoint paths from $z_1$ to $z_2$, each of length $2$. Hence, $Z$ is the desired $(k,3)$-short jungle.

\vskip 0.2cm

\noindent{\bf{Case 2.}} This case will be similar to the previous one, with the exception that we only get a contradiction: there is no subcase with finding an explicit smaller jungle. Consider set $Y_0=X\cap (B\setminus C)$. Since $|B\cap X|\geq 7k$ and $|B\cap C|<k$, we have that $|Y_0|>6k$. Let $Y$ be any subset of $Y_0$ of size $6k+1$. Take any vertex $y\in Y$ that has outdegree at least $3k$ in $T[Y]$ (since $|Y|=6k+1$, such a vertex exists), and consider its outneighbours. As $y\notin C$ we have that all the vertices of $A\setminus C$ are the outneighbours of $y$ (more than $|A|-k$ of them), and there are at least $3k$ outneighbours within $B$. Hence $d^+(y)>|A|+2k$. On the other hand, the outneighbours of $v$ have to lie inside $A\cup V^{+-}\cup \{w\}$, so $d^+(v)\leq |A\cup V^{+-}\cup \{w\}|\leq |A|+k$. We infer that $d^+(y)-d^+(v)>k$, which is a contradiction with $v,y\in X$.
\end{proof}

\subsection{Matching tangles}

\begin{definition}
Let $T$ be a semi-complete digraph and $k,\ell$ be integers. A $(k,\ell)${\em{-matching tangle}} is a pair of disjoint subsets $X,Y\subseteq V(T)$ such that {\em{(i)}} $|X|=|Y|=k$; {\em{(ii)}} there exists a matching from $X$ to $Y$, i.e., there is a bijection $f:X\to Y$ such that $(v,f(v))\in E(T)$ for all $v\in X$; {\em{(iii)}} for every $v\in X$ and $w\in Y$ we have that $d^+(w)>d^+(v)+\ell$.
\end{definition}

\begin{lemma}\label{lem:matching-tangle-pathwidth}
Let $T$ be a semi-complete digraph. If $T$ contains a $(k+1,k)$-matching tangle $(X,Y)$, then $\pw(T)>k$.
\end{lemma}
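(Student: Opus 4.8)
The plan is to mimic the counting argument used for degree tangles (Lemma~\ref{lem:degree-tangle-pathwidth}), but this time exploiting the matching $f:X\to Y$ together with the outdegree gap to derive a contradiction from the existence of a path decomposition of width at most $k$. So, for contradiction, suppose $T$ has a nice path decomposition $W$ of width at most $k$, and let $((A_0,B_0),\ldots,(A_r,B_r))$ be the associated separation chain. Pick a threshold $\ell$ in the ordering that separates $Y$ from $X$ in the following sense: consider the separation $(A,B)=W[\alpha]$ for a well-chosen $\alpha$ — the natural choice is $\alpha = \max_{v\in X} d^+(v) + 1$, or perhaps $\alpha$ slightly above $\max_{v\in X}d^+(v)$, chosen so that every vertex of $X$ is forced to lie (essentially) in $A$ while every vertex of $Y$ is forced to lie (essentially) in $B$.

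First I would argue that $Y\cap(A\setminus B)=\emptyset$: a vertex $w\in Y$ with $w\in A\setminus B$ has all outneighbours in $A$, hence $d^+(w)\le |A|-1 = \alpha-1$. If we set $\alpha$ so that $\alpha-1 < d^+(w)$ for all $w\in Y$ — which is possible since every $w\in Y$ satisfies $d^+(w) > d^+(v)+k \ge \max_{v\in X}d^+(v)$ (using $k\ge 0$ and the fact that $|X|=k+1\ge 1$, so $X$ is nonempty) — then no $w\in Y$ lies in $A\setminus B$; thus $Y\subseteq B$. Symmetrically, a vertex $v\in X$ with $v\in B\setminus A$ has all inneighbours in $B$, so $d^-(v)\le |B|-1$, hence $d^+(v) \ge |V(T)|-1-d^-(v) \ge |V(T)|-|B| = |A| - |A\cap B| \ge \alpha - k$; if $\alpha$ is chosen large enough that $\alpha - k > \max_{v\in X} d^+(v)$ is violated we get nothing, so instead I would pick $\alpha$ just above $\max_{v\in X}d^+(v)$ and accept that at most $|A\cap B|\le k$ vertices of $X$ can fail to lie in $A$; that is, $|X\cap(B\setminus A)| \le k$, so $|X\cap A|\ge |X| - k = 1$. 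Now use the matching: at least one vertex $v\in X\cap A$ has its partner $f(v)\in Y\subseteq B$; if moreover $v\in A\setminus B$, the arc $(v,f(v))$ goes from $A\setminus B$ to $B\setminus A$ (since $f(v)\notin A$, as $f(v)\in Y$ has outdegree too large to lie in $A\setminus B$, and we can also arrange $f(v)\notin A\cap B$ by a further $k$-vs-$(k+1)$ counting on $Y$), contradicting that $(A,B)$ is a separation — unless $v$ itself lies in the separator $A\cap B$. Counting forces the contradiction: the at most $k$ separator vertices cannot absorb all $k+1$ matching edges, so some edge must cross the separation improperly.

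The main obstacle I anticipate is bookkeeping the "$\pm k$" slack correctly so that the pigeonhole on $|X|=k+1$ actually closes: I need that the number of $X$-vertices or $Y$-vertices that can "escape" the clean side of the separation (by sitting in $A\cap B$, or by the indegree argument above) is at most $k$, leaving at least one matched pair $(v,f(v))$ with $v\in A\setminus B$ and $f(v)\in B\setminus A$, which directly violates $E(A\setminus B, B\setminus A)=\emptyset$. This will likely require choosing $\alpha$ carefully — most naturally $\alpha$ equal to (or one more than) $\max_{v\in X}d^+(v)$ — and then separately bounding $|X\cap(A\cap B)|\le |A\cap B|\le k$ and showing $X\cap(B\setminus A)$ contributes nothing extra because the outdegree gap of $k$ exactly compensates the separator size of $k$. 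If a single threshold does not cleanly separate both $X$ from $A\setminus B$-failures and $Y$ from $A$, I would instead take the threshold right at $|A|=d^+(v_0)$ for the specific vertex $v_0\in X$ realizing the maximum, and track its matched partner, pushing the remaining slack onto the $k$ separator vertices; the inequality $k+1 > k$ is what makes the argument go through.
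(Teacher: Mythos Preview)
Your overall architecture is exactly the paper's: pick a separation $(A,B)=W[\alpha]$ from a nice decomposition, force $X\subseteq A$ and $Y\subseteq B$, then observe that the $k+1$ vertex-disjoint matching arcs cannot all be covered by the $\leq k$ separator vertices. The gap is in the choice of $\alpha$. With your threshold $\alpha=\max_{v\in X}d^+(v)+1$, the inclusion $Y\subseteq B$ goes through, but the inclusion $X\subseteq A$ does not: for $v\in X\cap(B\setminus A)$ your indegree computation only gives $d^+(v)\geq \alpha-k=\max_{v'\in X}d^+(v')+1-k$, which is no contradiction for $k\geq 1$. Your fallback claim that ``at most $|A\cap B|\leq k$ vertices of $X$ can fail to lie in $A$'' is not justified by anything you wrote (the separator bound says nothing about $|X\cap(B\setminus A)|$), and even if you had $|X\cap A|\geq 1$ that would not suffice, since a single pair $(v,f(v))$ can easily be absorbed by the separator.

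The fix is exactly what you suspected---pick $\alpha$ more carefully---but the right value is at the \emph{other} end of the gap: take $\alpha=\min_{w\in Y}d^+(w)$ (the paper's choice), or more generally any $\alpha$ with $\max_{v\in X}d^+(v)+k<\alpha\leq \min_{w\in Y}d^+(w)$, which exists precisely because of the outdegree gap in the tangle definition. Then for $w\in Y\cap(A\setminus B)$ you get $d^+(w)\leq \alpha-1<\alpha$, contradicting the definition of $\alpha$; and for $v\in X\cap(B\setminus A)$ your own computation gives $d^+(v)\geq \alpha-k$, while the tangle condition gives $d^+(v)<\min_{w\in Y}d^+(w)-k=\alpha-k$, a clean contradiction. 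Thus $X\subseteq A$ and $Y\subseteq B$ both hold outright, no slack bookkeeping needed, and the $k+1$ disjoint pairs finish as you described.
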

\begin{proof}
For the sake of contradiction assume that $T$ has a (nice) path decomposition $W$ of width at most $k$. Let $\alpha=\min_{w\in Y} d^+(w)$ and let $(A,B)=W[\alpha]$. Recall that $|A\cap B|\leq k$.

Firstly, we claim that $X\subseteq A$. Assume otherwise that there exists some $v\in (B\setminus A)\cap X$. Note that all the vertices of $A\setminus B$ are outneighbours of $v$, so $d^+(v)\geq |A|-k=\alpha-k$. Hence $d^+(v)\geq d^+(w)-k$ for some $w\in Y$, which is a contradiction.

Secondly, we claim that $Y\subseteq B$. Assume otherwise that there exists some $w\in (A\setminus B)\cap Y$. Then all the outneighbours of $w$ must lie within $A$, so there is less than $\alpha$ of them. This is a contradiction with the definition of $\alpha$.

As $|A\cap B|\leq k$ and there are $k+1$ disjoint pairs of form $(v,f(v))\in E(T)$ for $v\in X$, we conclude that there must be some $v\in X$ such that $v\in A\setminus B$ and $f(v)\in B\setminus A$. This contradicts the fact that $(A,B)$ is a separation.
\end{proof}

\begin{lemma}\label{lem:matching-tangle-jungle}
Let $T$ be a semi-complete digraph and let $(X,Y)$ be a $(5k,3k)$-matching tangle in $T$. Then $Y$ contains a $(k,4)$-short jungle, which can be found in $O(k^3|V(T)|)$ time.
\end{lemma}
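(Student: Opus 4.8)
The plan is to let the short jungle be a large subset $Z\subseteq Y$ consisting of vertices with many in-neighbours \emph{inside} $Y$, and to join any ordered pair $z_1,z_2\in Z$ by $k$ internally vertex-disjoint paths of the shape
\[
z_1 \;\to\; a \;\to\; f^{-1}(z') \;\to\; z' \;\to\; z_2 ,
\]
where $z'$ ranges over $k$ distinct in-neighbours of $z_2$ in $T[Y]$ and $a$ is an auxiliary vertex. The reason to pass through $f^{-1}(z')\in X$ is that vertices of $X$ have small outdegree, so $z_1$ (which lies in $Y$ and therefore has large outdegree) reaches $f^{-1}(z')$ in two steps by Lemma~\ref{lem:degree-gap}; this is exactly what circumvents the fact that a target vertex of $Y$ need not have many direct in-neighbours in $T$ at all.

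First I would fix $Z$. Applying Lemma~\ref{lem:small-outdeg} to the reverse of $T[Y]$ (a semi-complete digraph on $5k$ vertices), at most $2k+1$ vertices of $Y$ have in-degree at most $k$ within $Y$, so at least $3k-1\ge k$ vertices $w\in Y$ satisfy $d^-_{T[Y]}(w)\ge k+1$; take $Z$ to be this set (or any $k$ of them), and note $Z\cap X=\emptyset$. Then for distinct $z_1,z_2\in Z$ the set $N^-_{T[Y]}(z_2)\setminus\{z_1\}$ has size at least $k$, so I may fix distinct $z'_1,\dots,z'_k\in Y\setminus\{z_1,z_2\}$, each with an arc to $z_2$.

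Next, for each $j$ I would choose $a_j$. Since $z_1\in Y$ and $f^{-1}(z'_j)\in X$, the matching-tangle condition gives $d^+(z_1)>d^+(f^{-1}(z'_j))+3k$, so by Lemma~\ref{lem:degree-gap} there are at least $3k$ vertices that are simultaneously out-neighbours of $z_1$ and in-neighbours of $f^{-1}(z'_j)$; each such $a$ yields the length-$4$ walk $z_1\to a\to f^{-1}(z'_j)\to z'_j\to z_2$ (using the matching arc $f^{-1}(z'_j)\to z'_j$ and the arc $z'_j\to z_2$). Picking the $a_j$ greedily, when choosing $a_j$ I only need to avoid $z_2$, the $k$ vertices $z'_1,\dots,z'_k$, the $k-1$ vertices $f^{-1}(z'_i)$ with $i\ne j$ (the vertex $f^{-1}(z'_j)$ is never a candidate, being a non-in-neighbour of itself), and the $j-1\le k-1$ previously chosen $a_i$'s — at most $3k-1$ vertices, strictly fewer than the $\ge 3k$ available candidates, so a valid $a_j$ always exists. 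One then checks routinely (using $X\cap Y=\emptyset$, injectivity of $f$, and the exclusions above) that the $k$ paths $z_1\to a_j\to f^{-1}(z'_j)\to z'_j\to z_2$ are simple and internally vertex-disjoint, so $Z$ is a $(k,4)$-short jungle. For the running time: computing all in-degrees within $Y$ costs $O(k^2)$, and for each of the $O(k^2)$ ordered pairs of $Z$ we spend $O(k\,|V(T)|)$ time computing the neighbourhoods $N^+(z_1)$, $N^-(f^{-1}(z'_j))$ and selecting the $a_j$'s, for a total of $O(k^3|V(T)|)$.

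The only genuine obstacle is the one flagged above: a target $z_2\in Y$ may have essentially no in-neighbours in $T$ except its matched partner, so it cannot be reached directly, and the matching must be exploited to route into $z_2$ via a well-chosen $Y$-neighbour $z'$ of $z_2$, itself reached cheaply through its low-outdegree partner $f^{-1}(z')\in X$. Once that routing scheme is in place, the rest is careful candidate-counting so that the greedy choice of auxiliary vertices never gets stuck — and the constants $5k$ and $3k$ in the hypothesis are precisely what make that counting go through.
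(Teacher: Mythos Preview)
Your proposal is correct and follows essentially the same approach as the paper: pick $Z\subseteq Y$ to be the vertices of high in-degree in $T[Y]$, then for $z_1,z_2\in Z$ route $k$ internally vertex-disjoint paths of the form $z_1\to a_j\to f^{-1}(z'_j)\to z'_j\to z_2$, choosing the auxiliary vertices $a_j$ greedily using Lemma~\ref{lem:degree-gap}. The only cosmetic differences are that the paper bounds $|Z|\ge k$ by a direct sum-of-indegrees count (rather than via Lemma~\ref{lem:small-outdeg} on the reverse), and that the paper does not explicitly exclude $z_2$ from the candidates for $a_j$ but instead handles the coincidence $a_j=z_2$ as a special case yielding a shorter path; your version is slightly cleaner in this regard.
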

\begin{proof}[Proof of Lemma~\ref{lem:matching-tangle-jungle}]
We present the proof of the existential statement; all the steps of the proof are easily constructive and can be performed within the claimed complexity bound.

Let $Z$ be the set of vertices of $Y$ that have indegrees at least $k+1$ in $T[Y]$. We claim that $|Z|\geq k$. Otherwise, the sum of indegrees in $T[Y]$ would be at most $k\cdot 5k+4k\cdot k=9k^2<\binom{5k}{2}$, so the total sum of indegrees would be strictly smaller than the number of arcs in the digraph. It remains to prove that $Z$ is a $(k,4)$-short jungle.

Take any $v,w\in Z$; we are to construct $k$ vertex-disjoint paths from $v$ to $w$, each of length at most $4$. Let $R_0=N^{-}_{T[Y]}(w)\setminus \{v\}$. Note that $|R_0|\geq k$, hence let $R$ be any subset of $R_0$ of cardinality $k$ and let $P=f^{-1}(R)$. We are to construct $k$ vertex-disjoint paths of length $2$ connecting $v$ with every vertex of $P$ and not passing through $P\cup R\cup \{w\}$. By concatenating these paths with arcs of the matching $f$ between $P$ and $R$ and arcs leading from $R$ to $w$, we obtain the family of paths we look for.

The paths from $v$ to $P$ are constructed in a greedy manner, one by one. Each path construction uses exactly one vertex outside $P\cup R$. Let us take the next, $i$-th vertex $p\in P$. As $d^+(v)>d^+(p)+3k$, by Lemma~\ref{lem:degree-gap} there exist at least $3k$ vertices in $T$ that are both outneighbours of $v$ and inneighbours of $p$. At most $2k$ of them can be inside $P\cup R$, at most $i-1\leq k-1$ of them were used for previous paths, so there is at least one that is still unused; let us denote it by $q$. If in fact $q=w$, we build a path of length $1$ directly from $v$ to $w$ thus ignoring vertex $p$; otherwise we can build the path of length $2$ from $v$ to $p$ via $q$ and proceed to the next vertex of $P$.
\end{proof}

\subsection{Backward tangles}

\begin{definition}
Let $T$ be a semi-complete digraph and $k$ be an integer. A $k${\em{-backward tangle}} is a partition $(X,Y)$ of $V(T)$ such that {\em{(i)}} there exist at least $k$ arcs directed from $X$ to $Y$; {\em{(ii)}} for every $v\in X$ and $w\in Y$ we have that $d^+(w)\geq d^+(v)$.
\end{definition}

\begin{lemma}\label{lem:backward-tangle-cutwidth}
Let $T$ be a semi-complete digraph. If $T$ contains an $(m+1)$-backward tangle $(X,Y)$ for $m=100k^2+22k+1$, then $\ctw(T)>k$.
\end{lemma}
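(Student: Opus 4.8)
The plan is to argue by contradiction. Suppose $\ctw(T)\le k$ and fix an ordering $\pi$ of $V(T)$ of width at most $k$; I will count the arcs directed from $X$ to $Y$ and contradict condition (i) in the definition of a backward tangle.

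The first (and essentially only) structural ingredient I would prove is that in a width-$k$ ordering the outdegree of a vertex almost determines its position: writing $\mathrm{pos}(v)$ for the position of $v$ in $\pi$ and $\alpha_v=\mathrm{pos}(v)-1$ for the number of vertices before it, one has $|d^+(v)-\alpha_v|\le k$. Indeed, the out-arcs of $v$ pointing to later vertices all lie in the cut $E(\pi[\mathrm{pos}(v)],V(T)\setminus\pi[\mathrm{pos}(v)])$, so there are at most $k$ of them, and since every arc from $v$ to an earlier vertex also counts towards $d^+(v)$, we get $d^+(v)\le\alpha_v+k$. Conversely, the arcs into $v$ coming from earlier vertices all lie in the cut $E(\pi[\alpha_v],V(T)\setminus\pi[\alpha_v])$, so at most $k$ of the $\alpha_v$ vertices before $v$ fail to be outneighbours of $v$ (here I use that $T$ is semi-complete, so a non-outneighbour lying before $v$ must be an inneighbour), which gives $d^+(v)\ge\alpha_v-k$.

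Now set $\beta=\max_{v\in X}d^+(v)$; condition (ii) of the backward tangle gives $d^+(w)\ge\beta$ for every $w\in Y$. Feeding this into the position bound, every vertex of $X$ occupies a position at most $\beta+k+1$ and every vertex of $Y$ a position at least $\beta-k+1$. Let $Z$ be the set of vertices occupying positions in the interval $[\beta-k+1,\beta+k+1]$, so $|Z|\le 2k+1$. Then every vertex of $X\setminus Z$ sits at position at most $\beta-k$ and every vertex of $Y\setminus Z$ at position at least $\beta+k+2$; hence every arc from $X\setminus Z$ to $Y\setminus Z$ is a forward arc crossing the cut at position $\beta$, and there are at most $k$ of those. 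Every remaining arc from $X$ to $Y$ is incident to $Z$, and I would bound these using the first ingredient once more: each vertex has at most $k$ out-arcs to later vertices and at most $k$ in-arcs from earlier vertices, while a backward arc from $X$ to $Y$ must have both endpoints in $Z$ (its $Y$-endpoint is at position $\ge\beta-k+1$, its $X$-endpoint at position $\le\beta+k+1$, and the latter is the later of the two). Since $|Z|\le 2k+1$, a routine tally bounds the number of arcs from $X$ to $Y$ by a quadratic polynomial in $k$ --- one checks it is at most $8k^2+7k+1$, which is certainly at most $m=100k^2+22k+1$ --- contradicting that there are at least $m+1$ such arcs. Therefore $\ctw(T)>k$.

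The only genuinely delicate part will be the final accounting: one must treat the degenerate cases where $\beta$ is within $k$ of $0$ or of $|V(T)|$, so that the interval defining $Z$ is truncated and $X\setminus Z$ or $Y\setminus Z$ may be empty, and one must be careful not to double-count arcs incident to $Z$ when classifying them by direction and by which endpoint lies in $Z$. Everything else follows mechanically from the single fact that a width-$k$ ordering agrees with the outdegree ordering up to an additive error of $k$ --- the cutwidth counterpart of the degree-tangle phenomenon exploited for pathwidth.
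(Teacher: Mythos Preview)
Your argument is correct and actually cleaner than the paper's. The key structural fact you isolate, that in a width-$k$ ordering $|d^+(v)-(\mathrm{pos}(v)-1)|\le k$, is exactly the right tool; both inequalities are verified just as you describe, and from it the localisation of $X$ and $Y$ to an interval of length $2k+1$ follows immediately. Your final count is sound: forward arcs from $X\setminus Z$ to $Y\setminus Z$ cross a fixed cut (at most $k$), forward arcs with an endpoint in $Z$ are bounded by $k$ per vertex of $Z$ using the two inequalities from your structural fact, and backward arcs from $X$ to $Y$ are forced to have both endpoints in $Z$. In fact the tally comes out well below $8k^2+7k+1$ (roughly $3k^2+3k$), so the bound $m=100k^2+22k+1$ is met with room to spare. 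The ``delicate'' cases you flag are not really delicate: truncation of $Z$ only shrinks the sets involved, and over-counting arcs incident to $Z$ is harmless for an upper bound.

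The paper takes a more indirect route. Instead of proving the position--outdegree correspondence directly, it invokes Lemma~\ref{lem:degree-gap} (many length-$2$ paths between vertices with a degree gap) to argue that vertices far from a certain pivot in outdegree must lie on one side of a cut, and then appeals to Lemma~\ref{lem:degree-tangle-pathwidth} (a $(10k+2,2k)$-degree tangle forces $\pw>2k$, hence $\ctw>k$) to bound the number of vertices with outdegree close to the pivot. This yields $|X\setminus X_\alpha|,|Y\setminus Y_\beta|\le 10k+1$ for suitable prefixes $X_\alpha,X_\beta$, and the final bound $2k+(10k+1)^2=100k^2+22k+1$ comes from a three-way split of $E(X,Y)$. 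Your approach avoids both auxiliary lemmas, is self-contained, and gives a substantially smaller constant; the paper's approach, on the other hand, illustrates how the obstacle machinery (degree tangles) can be reused as a black box across different width parameters.
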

\begin{proof}
For the sake of contradiction, assume that $V(T)$ admits an ordering $\pi$ of width at most $k$. Let $\alpha$ be the largest index such that $(X_\alpha,Y_\alpha)=(\pi[\alpha],V(T)\setminus \pi[\alpha])$ satisfies $Y\subseteq Y_\alpha$. Similarly, let $\beta$ be the smallest index such that $(X_\beta,Y_\beta)=(\pi[\beta],V(T)\setminus \pi[\beta])$ satisfies $X\subseteq X_\beta$. Note that $|E(X_\alpha,Y_\alpha)|,|E(X_\beta,Y_\beta)|\leq k$. Observe also that $\alpha\leq \beta$; moreover, $\alpha<|V(T)|$ and $\beta>0$, since $X,Y$ are non-empty.

Let $(X_{\alpha+1},Y_{\alpha+1})=(\pi[\alpha+1],V(T)\setminus \pi[\alpha+1])$. By the definition of $\alpha$ there is a unique vertex $w\in X_{\alpha+1}\cap Y$. Take any vertex $v\in V(T)$ and suppose that $d^+(w)>d^+(v)+(k+1)$. By Lemma~\ref{lem:degree-gap}, there exist $k+1$ vertex-disjoint paths of length $2$ from $w$ to $v$. If $v$ was in $Y_{\alpha+1}$, then each of these paths would contribute at least one arc to the set $E(X_{\alpha+1},Y_{\alpha+1})$, contradicting the fact that $|E(X_{\alpha+1},Y_{\alpha+1})|\leq k$. Hence, every such $v$ belongs to $X_{\alpha+1}$ as well. By Lemma~\ref{lem:degree-tangle-pathwidth} we have that the number of vertices with outdegrees in the interval $[d^+(w)-(k+1),d^+(w)]$ is bounded by $10k+1$, as otherwise they would create a $(10k+2,2k)$-degree tangle, implying that $\pw(T)>2k$ and, consequently, $\ctw(T)>k$ (here note that for $k=0$ the lemma is trivial). As $X_{\alpha}=X_{\alpha+1}\setminus \{w\}$ is disjoint with $Y$ and all the vertices of $X$ have degrees at most $d^+(w)$, we infer that $|X\setminus X_\alpha|\leq 10k+1$.

A symmetrical reasoning shows that $|Y\setminus Y_\beta|\leq 10k+1$. Now observe that
\begin{eqnarray*}
|E(X,Y)| & \leq & |E(X_\alpha,Y)|+|E(X,Y_\beta)|+|E(X\setminus X_\alpha,Y\setminus Y_\beta)| \\
         & \leq & |E(X_\alpha,Y_\alpha)|+|E(X_\beta,Y_\beta)|+|E(X\setminus X_\alpha,Y\setminus Y_\beta)| \\
         & \leq & k+k+(10k+1)^2 = 100k^2+22k+1.
\end{eqnarray*}
This is a contradiction with $(X,Y)$ being an $(m+1)$-backward tangle.
\end{proof}

\begin{lemma}\label{lem:backward-tangle-jungle}
Let $T$ be a semi-complete digraph and let $(X,Y)$ be an $m$-backward tangle in $T$ for $m=109^2k$. Then $X$ or $Y$ contains a $(k,4)$-short immersion jungle, which can be found in $O(k^3 |V(T)|^2)$ time.
\end{lemma}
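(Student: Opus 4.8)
The plan is to mimic the structure of the proof of Lemma~\ref{lem:degree-tangle-jungle}, exploiting the degree-monotonicity condition of a backward tangle to show that one of $X$, $Y$ contains a large set on which all short-path requirements can be met by degree counting alone, after which a short immersion jungle is extracted. First I would observe the basic size bound: by property (i) there are at least $m=109^2k$ arcs from $X$ to $Y$, so at least one of $X$, $Y$ has size at least $\sqrt{m}=109k$; pick the larger one. The two cases ($X$ large versus $Y$ large) are not quite symmetric, because in $X$ the degree condition $d^+(w)\geq d^+(v)$ for $w\in Y$ controls outdegrees from above, while in $Y$ it controls them from below; I expect to handle them separately, though with analogous arithmetic. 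In each case the aim is to find a subset $Z$ of size $k$, all of whose vertices have large indegree (at least, say, some $\Theta(k)$ bound) inside the induced subdigraph on the working set; then for any $z_1,z_2\in Z$, since $z_1$ has many outneighbours and $z_2$ many inneighbours in a set of controlled size, the outneighbourhood of $z_1$ and the inneighbourhood of $z_2$ overlap in at least $k$ vertices, yielding $k$ internally vertex-disjoint (hence edge-disjoint) paths of length $2$ from $z_1$ to $z_2$. This gives a $(k,4)$-short immersion jungle (in fact a vertex-disjoint one, so even stronger than required).

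In more detail, suppose $|X|\geq 109k$ (the case $|Y|\geq 109k$ being analogous). Fix any vertex $v_0\in X$ of maximum outdegree within $X$; by the degree condition every $w\in Y$ satisfies $d^+(w)\geq d^+(v)$ for all $v\in X$, and in particular $d^+(w)\geq d^+(v_0)$. Using Lemma~\ref{lem:small-outdeg} and Lemma~\ref{lem:degree-gap} one argues that $X$ cannot be too spread in outdegree without creating a degree tangle of the forbidden size, so after trimming $X$ to a subset $X'$ of size $109k$ we may assume all vertices of $X'$ have outdegrees within an interval of length $O(k)$; then the argument of Case~1 of Lemma~\ref{lem:degree-tangle-jungle} applies almost verbatim. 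Namely, a vertex $y\in X'$ has its outneighbours lying (up to an $O(k)$-sized exceptional set) inside a fixed large set, so $y$ has many outneighbours inside $X'$ itself; summing over $X'$ gives a large total outdegree in $T[X']$, forcing (by the indegree–outdegree balance in a semi-complete digraph) at least $k$ vertices of $X'$ to have indegree $\Omega(k)$ in $T[X']$. These $k$ vertices form the set $Z$, and the length-$2$-path argument above completes the construction. The constant $109^2$ is chosen so that all the "$O(k)$" slacks ($k$ for separators, $2k$ from Lemma~\ref{lem:degree-gap} gaps, $k$ for previously-used vertices, etc.) fit inside the budget with room to run the final averaging; I would not optimize it.

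The main obstacle, as in Lemma~\ref{lem:degree-tangle-jungle}, is the asymmetry between the tournament case and the general semi-complete case: a lower bound on outdegrees in $T[X']$ does \emph{not} immediately yield an upper bound on indegrees, so one cannot directly derive a contradiction and must instead explicitly locate the sub-jungle $Z$ via the indegree–outdegree sum argument. Getting the arithmetic to close in \emph{both} the $X$-case and the $Y$-case — with the degree inequality pointing in opposite directions — is the delicate bookkeeping step; in particular in the $Y$-case one uses that outdegrees in $Y$ are bounded \emph{below} by $d^+(v)$ for $v\in X$, which bounds $|X|$ and hence, via Lemma~\ref{lem:small-outdeg}, controls how many vertices of low outdegree can sit in $Y$, feeding back into the spread estimate. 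For the running time, all steps are either degree computations, trimming, or $k$ rounds of augmenting-path search over $T[X']$ or $T[Y']$, each costing $O(k|V(T)|^2)$ as in the earlier proof, giving the claimed $O(k^3|V(T)|^2)$ bound.
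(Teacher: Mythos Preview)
Your proposal has a genuine gap. The step ``after trimming $X$ to a subset $X'$ of size $109k$ we may assume all vertices of $X'$ have outdegrees within an interval of length $O(k)$'' is unjustified, and in fact false in general. The backward-tangle condition only says that every outdegree in $X$ is at most every outdegree in $Y$; it says nothing about the spread of outdegrees \emph{within} $X$. A set $X$ can easily contain vertices whose outdegrees range over an interval of length $\Theta(|V(T)|)$ while still satisfying the tangle condition. Once this step fails, the downstream claim that a vertex $y\in X'$ has many outneighbours inside $X'$ also fails, and the whole length-$2$-path construction collapses. (There is also a minor arithmetic slip: $\sqrt{109^2 k}=109\sqrt{k}$, not $109k$.)

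The paper's argument is structurally different from what you outline. It does not work with $X$ itself but with the set $P_0\subseteq X$ of tails of arcs in $E(X,Y)$; since $|E(X,Y)|\leq |P_0|\cdot |Q_0|$, one of $P_0,Q_0$ must be large. Within $P_0$ one then separates off the vertices whose outdegree is within $4k$ of $\alpha=\min_{w\in Y}d^+(w)$: if there are many, they form a genuine $(104k,4k)$-degree tangle and Lemma~\ref{lem:degree-tangle-jungle} applies. Otherwise one is left with a set $P\subseteq P_0$ of size $5k$ whose vertices all satisfy $d^+(w)>d^+(v)+4k$ for every $w\in Y$. The jungle $Z\subseteq P$ is then chosen by an outdegree-counting argument in $T[P]$, and the $k$ edge-disjoint paths from $v$ to $w$ are built of length \emph{four}, not two: one step inside $P$, then the guaranteed arc from $P_0$ into $Y$, then a length-$2$ path back from $Y$ to $w$ supplied by Lemma~\ref{lem:degree-gap} and the $4k$ degree gap. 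The detour through $Y$ --- using both that $P_0$ consists of tails of arcs to $Y$ and that $P$ has a large degree gap below $Y$ --- is the idea you are missing.
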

\begin{proof}
We present the proof of the existential statement; all the steps of the proof are easily constructive and can be performed within the claimed complexity bound.

Let $\alpha$ be the maximum outdegree in $X$. Let $P_0\subseteq X$ and $Q_0\subseteq Y$ be the sets of heads and of tails of arcs from $E(X,Y)$, respectively. As $|E(X,Y)|\leq |P_0|\cdot |Q_0|$, we infer that $|P_0|\geq 109k$ or $|Q_0|\geq 109k$. Here we consider the first case; the reasoning in the second one is symmetrical.

Let $P_1$ be the set of vertices in $P_0$ that have outdegree at least $\alpha-4k$, where $\alpha=\min_{w\in Y}d^+(w)$; note that $\alpha\geq \max_{v\in X}d^+(v)$ by the definition of a backward tangle. If there were more than $104k$ of them, they would create a $(104k,4k)$-degree tangle, which due to Lemma~\ref{lem:degree-tangle-jungle} contains a $(4k,3)$-short jungle, which is also a $(k,4)$-short immersion jungle. Hence, we can assume that $|P_1|<104k$. Let $P$ be any subset of $P_0\setminus P_1$ of size $5k$. We know that for any $v\in P$ and $w\in Y$ we have that $d^+(w)>d^+(v)+4k$.

Consider semi-complete digraph $T[P]$. We have that the number of vertices with outdegrees at least $k$ in $T[P]$ is at least $k$, as otherwise the sum of outdegrees in $T[P]$ would be at most $k\cdot 5k+4k\cdot k=9k^2<\binom{5k}{2}$, so the sum of outdegrees would be strictly smaller than the number of arcs in the digraph. Let $Z$ be an arbitrary set of $k$ vertices with outdegrees at least $k$ in $T[P]$. We prove that $Z$ is a $(k,4)$-short immersion jungle.

Let us take any $v,w\in Z$; we are to construct $k$ edge-disjoint paths from $v$ to $w$ of length~$4$. Since the outdegree of $v$ in $T[P]$ is at least $k$, as the first vertices on the paths we can take any $k$ outneighbours of $v$ in $P$; denote them $v^1_1,v^1_2,\ldots,v^1_k$. By the definition of $P_0$, each $v^1_i$ is incident to some arc from $E(X,Y)$. As the second vertices on the paths we choose the heads of these arcs, denote them by $v^2_i$, thus constructing paths $v\to v^1_i\to v^2_i$ of length $2$ for $i=1,2,\ldots,k$. Note that all the arcs used for constructions so far are pairwise different.

We now consecutively finish paths $v\to v^1_i\to v^2_i$ using two more arcs in a greedy manner. Consider path $v\to v^1_i\to v^2_i$. As $v^2_i\in Y$ and $w\in P$, we have that $d^+(v^2_i)>d^+(w)+4k$. Hence, by Lemma~\ref{lem:degree-gap} we can identify $4k$ paths of length $2$ leading from $v^2_i$ to $w$. At most $2k$ of them contain an arc that was used in the first phase of the construction (two first arcs of the paths), and at most $2(i-1)\leq 2k-2$ of them can contain an arc used when finishing previous paths. This leaves us at least one path of length $2$ from $v^2_i$ to $w$ with no arc used so far, which we can use to finish the path $v\to v^1_i\to v^2_i$.
\end{proof}

\section{Algorithms for cutwidth}\label{sec:cutwidth}

In this section we present the algorithms for computing cutwidth. We start with the approximation algorithm and then proceed to the exact algorithm.

\begin{theorem}
Let $T$ be a semi-complete digraph. Then any outdegree ordering of $V(T)$ has width at most $m(\ctw(T))$, where $m(t)=100t^2+22t+1$.
\end{theorem}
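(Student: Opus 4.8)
The plan is to argue by contradiction and reduce everything to the backward-tangle obstruction already established in Lemma~\ref{lem:backward-tangle-cutwidth}. Write $k=\ctw(T)$ and $m=m(k)=100k^2+22k+1$. Suppose, for contradiction, that some outdegree ordering $\pi$ of $V(T)$ has width exceeding $m$. By Definition~\ref{def:ctw} there is then an index $\alpha$ with $0\leq\alpha\leq|V(T)|$ such that the cut $|E(\pi[\alpha],V(T)\setminus\pi[\alpha])|\geq m+1$.

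Set $X=\pi[\alpha]$ and $Y=V(T)\setminus\pi[\alpha]$. First I would record that $(X,Y)$ is a partition of $V(T)$ into two nonempty parts: nonemptiness is forced because the cut contains at least $m+1\geq 2$ arcs, which is impossible if one of the parts is empty. Then comes the only structural observation needed: since $\pi$ lists the vertices in order of nondecreasing outdegree and every vertex of $X$ precedes every vertex of $Y$ in $\pi$, we have $d^+(v)\leq d^+(w)$ for all $v\in X$ and $w\in Y$. Combined with the fact that $|E(X,Y)|\geq m+1$, i.e. there are at least $m+1$ arcs directed from $X$ to $Y$, this says exactly that $(X,Y)$ is an $(m+1)$-backward tangle in the sense of the definition preceding Lemma~\ref{lem:backward-tangle-cutwidth}.

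Finally, I would invoke Lemma~\ref{lem:backward-tangle-cutwidth} with precisely this value $m=100k^2+22k+1$: the existence of an $(m+1)$-backward tangle yields $\ctw(T)>k$, contradicting $k=\ctw(T)$. Hence no outdegree ordering can have width larger than $m(\ctw(T))$, which is the claim.

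I do not expect a real obstacle here: the entire combinatorial content sits inside Lemma~\ref{lem:backward-tangle-cutwidth}, and the theorem is essentially its contrapositive, repackaged through the trivial fact that every prefix of an outdegree ordering induces a backward tangle. The only point deserving a moment's care is verifying that $(X,Y)$ satisfies all clauses of the backward-tangle definition — in particular that it is an honest partition with both sides nonempty — so that the proof of Lemma~\ref{lem:backward-tangle-cutwidth} applies without modification.
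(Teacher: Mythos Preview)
Your proposal is correct and follows exactly the same approach as the paper: both argue by contradiction, observe that a too-wide cut in an outdegree ordering is by definition an $(m(\ctw(T))+1)$-backward tangle, and invoke Lemma~\ref{lem:backward-tangle-cutwidth}. Your extra care about nonemptiness of $X$ and $Y$ is harmless but not strictly needed, since the existence of at least one arc from $X$ to $Y$ already forces both sides to be nonempty.
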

\begin{proof}
Let $\sigma$ be any outdegree ordering of $V(T)$. If $\sigma$ had width more than $m(\ctw(T))$, then one of the partitions $(\sigma[\alpha],V(T)\setminus \sigma[\alpha])$ would be a $(m(\ctw(T))+1)$-backward tangle. Existence of such a structure is a contradiction with Lemma~\ref{lem:backward-tangle-cutwidth}.
\end{proof}

This gives raise to a straightforward approximation algorithm for cutwidth of a semi-complete digraph that simply sorts the vertices with respect to outdegrees, and then scans through the ordering checking whether it has small width. Note that this scan may be performed in $O(|V(T)|^2)$ time, as we maintain the cut between the prefix and the suffix of the ordering by iteratively moving one vertex from the suffix to the prefix.

\begin{theorem}
There exists an algorithm which, given a semi-complete digraph $T$ and an integer $k$, in time $O(|V(T)|^2)$ outputs an ordering of $V(T)$ of width at most $m(k)$ or a $(m(k)+1)$-backward tangle in $T$, where $m(t)=100t^2+22t+1$. In the second case the algorithm concludes that $\ctw(T)>k$.
\end{theorem}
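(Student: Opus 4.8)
The plan is to turn the preceding theorem into an algorithm in the obvious way: compute an outdegree ordering, scan it while incrementally maintaining the current cut, and the moment the cut exceeds $m(k)$ report the corresponding prefix/suffix partition as a backward tangle; if the scan finishes without ever exceeding $m(k)$, report the ordering itself.

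First I would compute $d^+(v)$ for every $v\in V(T)$ by reading the arc set, which costs $O(|V(T)|^2)$ time since $T$ is semi-complete, and then sort $V(T)$ into an outdegree ordering $\sigma$; as the outdegrees lie in $\{0,1,\dots,|V(T)|-1\}$ this can be done in $O(|V(T)|)$ time by bucket sort, well within the budget. Next I would scan $\alpha$ from $0$ to $|V(T)|$ maintaining the value $c_\alpha = |E(\sigma[\alpha],V(T)\setminus\sigma[\alpha])|$. Passing from $\alpha$ to $\alpha+1$ transfers a single vertex $v$ from the suffix to the prefix, so $c_{\alpha+1}=c_\alpha + |N^+(v)\cap(V(T)\setminus\sigma[\alpha+1])| - |N^-(v)\cap\sigma[\alpha]|$; each such update is computed in $O(|V(T)|)$ time by inspecting the neighbourhood of $v$, so the whole scan costs $O(|V(T)|^2)$.

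If $c_\alpha\le m(k)$ for all $\alpha$, then by definition $\sigma$ has width at most $m(k)$ and we output $\sigma$. Otherwise, fix the first $\alpha$ with $c_\alpha>m(k)$ and set $X=\sigma[\alpha]$, $Y=V(T)\setminus\sigma[\alpha]$. Condition {\em{(i)}} of a backward tangle holds because $|E(X,Y)|=c_\alpha\ge m(k)+1$, and condition {\em{(ii)}} holds because $\sigma$ is an outdegree ordering: every vertex of the prefix $X$ precedes every vertex of the suffix $Y$ in $\sigma$, hence has outdegree at most that of any vertex of $Y$. Thus $(X,Y)$ is an $(m(k)+1)$-backward tangle, and Lemma~\ref{lem:backward-tangle-cutwidth} certifies that $\ctw(T)>k$.

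There is essentially no hard step here: the correctness of the incremental cut maintenance is immediate from Definition~\ref{def:ctw}, and the ordering property of $\sigma$ is exactly what is needed for condition {\em{(ii)}} of the backward tangle, so the witnessing partition is produced at no extra cost. The only point worth stressing is that the algorithm never fails spuriously — whenever $\ctw(T)\le k$, the preceding theorem guarantees that $\sigma$ already has width at most $m(k)$, so the scan will complete and output an ordering rather than a tangle.
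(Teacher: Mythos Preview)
Your proposal is correct and matches the paper's approach exactly: compute an outdegree ordering, scan it while incrementally maintaining the cut size in $O(|V(T)|)$ per step, and either return the ordering or the first offending prefix/suffix partition as a backward tangle. The paper gives this argument in a single sentence immediately preceding the theorem, and your write-up simply fills in the routine details.
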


We now present the exact algorithm for cutwidth.

\begin{theorem}\label{thm:cutwidth-exact}
There exists an algorithm, which given a semi-complete digraph $T$ and an integer $k$, in time $O(2^{O(k)}|V(T)|^2)$ outputs an ordering of $V(T)$ of width at most $k$, or correctly concludes that $\ctw(T)>k$.
\end{theorem}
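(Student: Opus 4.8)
The plan is to combine the approximation theorem with a dynamic programming routine that scans through a fixed outdegree ordering $\sigma$ of $V(T)$, maintaining only a bounded amount of information. First I would run the approximation algorithm: if it reports an $(m(k)+1)$-backward tangle, then by Lemma~\ref{lem:backward-tangle-cutwidth} we have $\ctw(T)>k$ and we are done; otherwise we have an outdegree ordering $\sigma$ of width at most $m(k)=O(k^2)$, and it remains to decide whether there is an ordering of width at most $k$, and if so, produce one. The key structural claim I would establish is that any optimal ordering $\pi$ can differ from $\sigma$ only \emph{locally}: if $\pi[\ell]$ is the set of the first $\ell$ vertices of an ordering of width at most $k$, then $\sigma[\ell - c k]\subseteq \pi[\ell]\subseteq \sigma[\ell + c k]$ for some absolute constant $c$. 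Informally, a vertex $v$ appearing among the first $\ell$ vertices of $\pi$ has outdegree roughly at most $\ell + k$ (its outneighbours outside $\pi[\ell]$ contribute to the cut of width $\leq k$), so it cannot sit too far to the right in the outdegree ordering; the symmetric statement bounds how far left it can sit. The degree-tangle bound of Lemma~\ref{lem:degree-tangle-pathwidth} (or a direct pigeonhole argument as in Lemma~\ref{lem:backward-tangle-cutwidth}) controls how many vertices of nearly equal outdegree there can be, which is what converts ``outdegree roughly $\ell$'' into ``position in $\sigma$ within $O(k)$ of $\ell$''.

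Given this locality property, the dynamic program is as follows. We process the vertices of $\sigma$ one at a time; having decided which of the first $\ell$ vertices of $\sigma$ belong to the prefix $\pi[\cdot]$ under construction, the only freedom left concerns the $O(k)$ vertices in a sliding window of $\sigma$ around the current frontier, so the state is a subset of this window --- a bit mask of length $O(k)$, i.e.\ $2^{O(k)}$ possibilities --- together with enough information to recover the current cut size. Concretely, a state records, for the current prefix $P$ of $\pi$ being built, exactly which window vertices lie in $P$; from this and $\sigma$ one can maintain $|E(P, V(T)\setminus P)|$ incrementally as the window advances and vertices are committed, rejecting any partial solution whose cut ever exceeds $k$. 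Transitions append the next $\sigma$-vertex either to the prefix or keep it for later, updating the mask and the running cut in $O(k)$ or $O(|V(T)|)$ time per transition (the latter when recomputing contributions of a newly committed vertex against the rest of the graph, which can be amortized). Summing over $|V(T)|$ positions and $2^{O(k)}$ states, and charging $O(|V(T)|)$ per state for the degree bookkeeping, yields the claimed $O(2^{O(k)}|V(T)|^2)$ running time; the optimal ordering itself is recovered by standard traceback.

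The main obstacle I expect is proving the locality claim with a clean constant and, relatedly, pinning down precisely what must be stored so that the cut size $|E(P,V(T)\setminus P)|$ can be maintained correctly and efficiently as the window slides: one must check that committing a vertex to the prefix only ever affects arcs to a controlled set (the window plus already-committed vertices, since all arcs to far-away not-yet-committed vertices are forced backward and contribute nothing new), and that no two consecutive frontiers force recomputation from scratch. A secondary technical point is handling semi-complete (as opposed to tournament) digraphs, where $d^+(v)+d^-(v)$ may exceed $|V(T)|-1$; the degree arguments must be phrased with inequalities, exactly as in Lemmas~\ref{lem:small-outdeg}, \ref{lem:degree-gap}, and \ref{lem:backward-tangle-cutwidth}, so that the window width stays $O(k)$ rather than blowing up. Once the window width is fixed at some explicit $O(k)$, correctness of the DP is a routine induction: every ordering of width $\leq k$ corresponds to a path through the state graph, and conversely every accepting path yields such an ordering.
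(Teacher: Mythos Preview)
Your proposal is correct and essentially matches the paper's proof: both establish that any width-$\le k$ cut $(X,Y)$ agrees with a prefix of the outdegree ordering up to an $O(k)$-size window (the paper makes this explicit via the degree-tangle check that $d^+(v_{i+10k+1})>d^+(v_i)+2k$, together with the observation that $d^+(x)\le d^+(y)+k+1$ whenever $x\in X,\ y\in Y$), and then perform reachability in a state graph of size $2^{O(k)}|V(T)|$ whose nodes are (position, bitmask) pairs. The only redundancy in your plan is the preliminary run of the approximation algorithm: the paper works directly with any outdegree ordering, and the degree-tangle test alone already either certifies $\ctw(T)>k$ or guarantees the spread needed to bound the window.
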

\begin{proof}
Let $\cutfam$ be the family of such partitions $(X,Y)$ of $V(T)$ that $|E(X,Y)|\leq k$. Let us define an auxiliary digraph $D$ with $\cutfam$ as the vertex set, where $((X_1,Y_1),(X_2,Y_2))\in E(D)$ if and only if $X_2=X_1\cup \{v\}$ for some element $v\notin X_1$ and, consequently, $Y_1=Y_2\cup \{v\}$. Clearly, orderings of $V(T)$ of width at most $k$ correspond to paths from $(\emptyset, V(T))$ to $(V(T),\emptyset)$ in $D$. We show that if $\ctw(T)\leq k$, then $|D|=O(2^{O(k)}|V(T)|)$ and $D$ can be computed in $O(2^{O(k)}|V(T)|^2)$ time. Hence, a linear-time reachability algorithm applied in $D$ works within the claimed complexity bound.

Let $\sigma=(v_1,v_2,\ldots,v_n)$ be any outdegree ordering of $V(T)$, where $n=|V(T)|$. Such an ordering can be computed in $O(|V(T)|^2)$ time. The crucial observation is the following: if there is an index $i$ such that $d^+(v_{i+10k+1})\leq d^+(v_i)+2k$, then the set $\{v_i,v_{i+1},\ldots,v_{i+10k+1}\}$ is a $(10k+2,2k)$-degree tangle. By Lemma~\ref{lem:degree-tangle-pathwidth} we have that $\pw(T)>2k$, hence also $\ctw(T)>k$ and the algorithm may safely provide a negative answer. Note that checking whether such a situation occurs may be performed in $O(|V(T)|)$ time once the degrees and the degree ordering is computed, so from now on we assume that it does not occur: $d^+(v_{i+10k+1})>d^+(v_i)+2k$ for every index $i$.

The second observation is as follows: if $(X,Y)\in \cutfam$ and $x\in X, y\in Y$, then $d^+(x)\leq d^+(y)+k+1$. Otherwise, by Lemma~\ref{lem:degree-gap} there would be $k+1$ vertex-disjoint paths of length $2$ from $x$ to $y$; each of these paths would contribute with at least one arc to $E(X,Y)$, thus contradicting $|E(X,Y)|\leq k$.

From both observations we infer that if $(X,Y)\in \cutfam$ and $\alpha$ is the smallest index of an element of $Y$, then $\{v_1,\ldots,v_{\alpha-1}\}\subseteq X$, $\{v_{\alpha+10k+1},v_{\alpha+10k+2},\ldots,v_n\}\subseteq Y$, and the set $\{v_\alpha,\ldots,v_{\alpha+10k}\}$ is split between $X$ and $Y$. Now it becomes clear why $|D|\leq O(2^{O(k)}|V(T)|)$: each partition from $\cutfam$ may be characterized by $\alpha$ (equal to $|V(T)|+1$ if $Y=\emptyset$) and a bit mask of length $10k$ denoting, which elements of $\{v_{\alpha+1},\ldots,v_{\alpha+10k}\}$ belong to $X$ and which belong to $Y$. We also need to observe that the outdegrees in $D$ are bounded by $10k+1$: only vertices from the set $\{v_\alpha,\ldots,v_{\alpha+10k}\}$ can be moved from $Y$ to $X$ without violating the property from the second observation.

The digraph $D$ may be constructed in $O(2^{O(k)}|V(T)|^2)$ time as follows: 
\begin{itemize}
\item We generate $\cutfam$ by checking for each of $O(2^{O(k)}|V(T)|)$ characterizations by $\alpha$ and the bit mask if it corresponds to a partition from $\cutfam$. A naive way to do it requires an $O(|V(T)|^2)$ check for every characterization. However, we may reorganize the computation as follows. First, calculate the sizes of sets $E(\sigma[i],V(T)\setminus \sigma[i])$ by moving vertices from the right side to the left side in order of $\sigma$, beginning with partition $(\emptyset,V(T))$ and finishing with $(V(T),\emptyset)$. At each step we update $|E(\sigma[i],V(T)\setminus \sigma[i])|$ by considering only arcs incident to vertex $v_i$, which takes $O(|V(T)|)$ time. Then, for every $\alpha=1,2,\ldots,|V(T)|+1$ we check every bit mask by consecutively moving at most $10k$ vertices from the right side of the partition $(\sigma[\alpha-1],V(T)\setminus \sigma[\alpha-1])$ to the left side. Each such check takes $O(k|V(T)|)$ time as, again, during every move we consider only arcs incident to the moved vertex. We store $\cutfam$ in the $(\alpha,\textrm{bit mask})$ representation.
\item For every element of $\cutfam$ represented by $\alpha$ and a bit mask, we construct outgoing arcs by trying to move every single element of $\{v_\alpha,\ldots,v_{\alpha+10k}\}$ that is not in the left side of the partition (this can be retrieved from the bit mask) to the left side. Note that the characterization of the new partition by $\alpha'$ and a new bit mask can be computed in $O(k)$ time, and we can efficiently check whether this characterization corresponds to a partition that was verified to be in $\cutfam$.
\end{itemize}
Hence, the algorithm constructs the digraph $D$ and applies any linear-time reachability algorithm to test, whether $(V(T),\emptyset)$ can be reached from $(\emptyset,V(T))$. If this is the case, the path corresponds to an optimal ordering; otherwise we provide a negative answer.
\end{proof}

\section{Algorithms for pathwidth}\label{sec:pathwidth}

\subsection{Subset selectors for bipartite graphs}\label{sec:pathwidth-subset}

In this subsection we propose a formalism for expressing selection of a subset of vertices of a bipartite graph. We choose to introduce this formal layer, as in the approximation and exact algorithms for pathwidth we use two different such concepts that share some properties.

Let $\bipclass$ be the class of undirected bipartite graphs with fixed bipartition, expressed as triples: left side, right side, the edge set. Let $\mu(G)$ be the size of a maximum matching in $G$.

\begin{definition}
A function $f$ defined on $\bipclass$ is called a {\em{subset selector}} if $f(G)\subseteq V(G)$ for every $G\in\bipclass$. A {\em{reversed}} subset selector $f^{\rev}$ is defined as $f^{\rev}((X,Y,E))=f((Y,X,E))$. We say that subset selector $f$ is 
\begin{itemize}
\item a {\em{vertex cover selector}} if $f(G)$ is a vertex cover of $G$ for every $G\in \bipclass$, i.e., every edge of $G$ has at least one endpoint in $f(G)$;
\item {\em{symmetric}} if $f=f^{\rev}$;
\item {\em{monotonic}} if for every graph $G=(X,Y,E)$ and its subgraph $G'=G\setminus w$ where $w\in Y$, we have that $f(G)\cap X\supseteq f(G')\cap X$ and $f(G)\cap (Y\setminus \{w\}) \subseteq f(G')\cap Y$.
\end{itemize}
\end{definition}

The following observation expresses, how monotonic subset selectors behave with respect to modifications of the graph. By addition of a vertex we mean adding a new vertex to the vertex set, together with an arbitrary set of edges connecting it to the old ones. 

\begin{lemma}\label{lem:monotonicity}
Assume that $f$ and $f^{\rev}$ are monotonic subset selector and let $G=(X,Y,E)$ be a bipartite graph.
\begin{itemize}
\item If $v\in f(G)\cap Y$ then $v$ stays chosen by $f$ after any sequence of additions of vertices to the left side and deletions of vertices (different from $v$) from the right side.
\item If $v\in X\setminus f(G)$ then $v$ stays not chosen by $f$ after any sequence of additions of vertices to the left side and deletions of vertices from the right side.
\end{itemize}
\end{lemma}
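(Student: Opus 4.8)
The statement is a straightforward consequence of the definition of monotonicity, obtained by iterating the single-step property. The plan is to prove the two bullet points separately, each by induction on the length of the sequence of modifications, and in each inductive step to split a combined modification into its two constituent atomic operations (deletion of a right-side vertex, addition of a left-side vertex) so that we can invoke the definitions of monotonicity for $f$ and for $f^{\rev}$ as appropriate.

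For the first bullet, suppose $v\in f(G)\cap Y$ and consider an atomic deletion of a right-side vertex $w\neq v$, producing $G'=G\setminus w$. By monotonicity of $f$ applied to $G$ and $G'$ we have $f(G)\cap(Y\setminus\{w\})\subseteq f(G')\cap Y$, and since $v\in Y\setminus\{w\}$ this gives $v\in f(G')$. For an atomic addition of a left-side vertex, say we add $u$ to the left side of $G'=(X',Y',E')$ to obtain $G''$; then $G'=G''\setminus u$ with $u$ on the \emph{left} side, so applying monotonicity of $f^{\rev}$ (which swaps the roles of the two sides) to $G''$ and $G'=G''\setminus u$ yields that right-side vertices chosen by $f$ in $G'$ remain chosen in $G''$; in particular $v$ stays in $f(G'')$. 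Composing these atomic steps along the given sequence keeps $v$ chosen throughout, which is the claim.

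For the second bullet, suppose $v\in X\setminus f(G)$. For an atomic addition of a left-side vertex $u$, with $G'=G''\setminus u$ and $u$ on the left side of $G''$, monotonicity of $f^{\rev}$ applied to $G''$ and $G''\setminus u$ gives $f(G'')\cap X' \supseteq$ (nothing useful directly) — here instead we use the monotonicity statement in the direction that says left-side vertices \emph{not} chosen stay not chosen when we pass from the smaller graph to the larger: precisely, monotonicity of $f^{\rev}$ for the pair $(G'', G''\setminus u)$ states $f(G'')\cap (X'\setminus\{u\})\subseteq f(G''\setminus u)\cap X'$ on the right-side-of-$f^{\rev}$, i.e.\ $v\notin f(G''\setminus u)=f(G')$ implies $v\notin f(G'')$. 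For an atomic deletion of a right-side vertex $w$ (now with no restriction $w\neq v$, since $v\in X$), monotonicity of $f$ applied to $G$ and $G'=G\setminus w$ gives $f(G')\cap X \supseteq f(G)\cap X$ — wait, this is the wrong direction; we instead note the contrapositive is not available, so we argue directly: monotonicity of $f$ gives $f(G)\cap X\supseteq f(G')\cap X$, hence $v\notin f(G)$ forces $v\notin f(G')$. Composing the atomic steps preserves $v\notin f(\cdot)$ throughout the sequence.

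\textbf{Main obstacle.} The only real subtlety is bookkeeping: making sure that when an atomic step adds a vertex to the left side we invoke monotonicity of $f^{\rev}$ (with sides swapped) rather than $f$, and that the two inclusions in the definition of monotonicity are used in the correct directions (one for ``chosen stays chosen among right-side vertices'', the other for ``not chosen stays not chosen among left-side vertices''). Once the atomic decomposition is set up and the correspondence between operations on $G$ and on $(Y,X,E)$ is stated cleanly, the induction is routine and involves no computation.
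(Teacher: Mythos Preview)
Your approach is correct and matches the paper's: handle each atomic step separately, invoking monotonicity of $f$ for right-side deletions and monotonicity of $f^{\rev}$ (viewing a left-side addition as a right-side deletion in the reversed graph) for left-side additions. The exposition is unnecessarily hesitant---the self-corrections (``wait, this is the wrong direction'') should be edited out and the notation $G,G',G''$ made consistent---but the argument underneath is exactly the paper's two-sentence proof unpacked.
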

\begin{proof}
For both claims, staying (not) chosen after a deletion on the right side follows directly from the definition of monotonicity of $f$. Staying (not) chosen after an addition on the left side follows from considering deletion of the newly introduced vertex and monotonicity of $f^{\rev}$.
\end{proof}

\subsubsection{The matching selector}

The subset selector that will be used for the approximation of pathwidth is the following:

\begin{definition}
By {\em{matching selector}} $\match$ we denote a subset selector that assigns to every bipartite graph $G$ the set of all the vertices of $G$ that are matched in {\bf{every}} maximum matching in $G$.
\end{definition}

Let us note that for any bipartite graph $G=(X,Y,E)$ we have that $|\match(G)\cap X|, |\match(G)\cap Y|\leq \mu(G)$. It appears that $\match$ is a symmetric and monotonic vertex cover selector. The symmetry is obvious. The crucial property of $\match$ is monotonicity: its proof requires technical and careful analysis of alternating and augmenting paths in bipartite graphs. $\match$ admits also an alternative characterization, expressed in Lemma~\ref{lem:characterization}: it can be computed directly from any maximum matching by considering alternating paths originating in unmatched vertices. This observation can be utilized to construct an algorithm that maintains $\match(G)$ efficiently during graph modifications. Moreover, from this alternative characterization it is clear that $\match$ is a vertex cover selector. The following lemma expresses all the vital properties of $\match$ that will be used in the approximation algorithm for pathwidth.

\begin{lemma}\label{lem:together}
$\match$ is a symmetric, monotonic vertex cover selector, which can be maintained together with a maximum matching of $G$ with updates times $O((\mu(G)+1)\cdot |V(G)|)$ during vertex additions and deletions. Moreover, $|\match(G)\cap X|, |\match(G)\cap Y|\leq \mu(G)$ for every bipartite graph $G=(X,Y,E)$.
\end{lemma}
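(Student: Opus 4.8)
The plan is to establish the four asserted properties of $\match$ separately, relying on the alternative characterization via alternating paths. First I would prove Lemma~\ref{lem:characterization} (the characterization not shown in the excerpt but alluded to): fix any maximum matching $M$ of $G=(X,Y,E)$, let $U_X$ and $U_Y$ be the $M$-unmatched vertices on the two sides, and let $R_X$ (resp.\ $R_Y$) be the set of vertices reachable from $U_Y$ (resp.\ from $U_X$) by $M$-alternating paths. The claim is that $V(G)\setminus\match(G) = U_X\cup U_Y\cup R_X\cup R_Y$, i.e.\ a vertex is avoidable by some maximum matching precisely when it lies on an $M$-alternating path from an unmatched vertex. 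One direction is immediate (flipping such a path yields a maximum matching missing that vertex); the converse uses that if $M'$ is another maximum matching missing $v$, then $M\triangle M'$ is a union of paths and even cycles, and $v$ is an endpoint of one of the paths, which must be $M$-alternating and start at an $M$-unmatched vertex since $M,M'$ have equal size. From this characterization it is immediate that $\match(G)$ is a vertex cover: any edge $uv$ with both endpoints unmatched-avoidable would let us find a larger matching, contradicting maximality — more carefully, if $u\in R_X\cup U_X$ and $v\in R_Y\cup U_Y$ for an edge $uv$, one checks the alternating structure produces an augmenting path. Symmetry $\match=\match^{\rev}$ is immediate since the definition does not distinguish the two sides, and the cardinality bound $|\match(G)\cap X|\le\mu(G)$ holds because matched vertices on side $X$ number exactly $\mu(G)$.

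Next I would prove monotonicity, which I expect to be the main obstacle. I need: deleting $w\in Y$ can only shrink $\match(G)\cap X$ and can only grow $\match(G)\cap(Y\setminus\{w\})$ — equivalently, $f(G)\cap X\supseteq f(G\setminus w)\cap X$ and $f(G)\cap(Y\setminus\{w\})\subseteq f(G\setminus w)\cap Y$. The approach is to track how a maximum matching and the reachability sets $R_X,R_Y$ change when $w$ is removed. If $w$ is $M$-unmatched, then $M$ remains maximum in $G\setminus w$; the set $U_Y$ loses $w$, so $R_X$ can only shrink, $R_Y$ is unchanged — hence the avoidable set on $X$ shrinks ($\match\cap X$ grows, consistent), and on $Y$ the avoidable set changes only by removing $w$ itself. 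If $w$ is $M$-matched, say to $x_0\in X$, then $\mu$ drops by exactly one and $M\setminus\{x_0w\}$ is a maximum matching of $G\setminus w$ in which $x_0$ is now unmatched; I then compare alternating reachability in the old and new graphs. The technical heart is a careful case analysis of alternating paths: a path that previously reached a vertex via $w$ must be rerouted or shown to still exist through $x_0$ now being a new source. I would phrase this cleanly by working directly with the characterization and the fact that $x_0$ joining the unmatched set on $X$ enlarges $R_Y$ while $w$ leaving enlarges nothing on the $Y$-avoidable side beyond bookkeeping — so $\match(G\setminus w)\cap(Y\setminus\{w\})\supseteq\match(G)\cap(Y\setminus\{w\})$ fails in the wrong direction unless I am careful, so I should double-check the inequality directions against the statement and possibly argue via Lemma~\ref{lem:monotonicity}'s symmetric consequence instead. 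The safe route is: prove $f$ monotonic directly, note $f^{\rev}=f$ by symmetry so $f^{\rev}$ is monotonic too, and invoke Lemma~\ref{lem:monotonicity} wherever a ``left-side addition'' is needed.

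Finally I would address the data-structure claim: $\match(G)$ together with a maximum matching can be maintained in time $O((\mu(G)+1)\cdot|V(G)|)$ per vertex addition or deletion. On an addition of a vertex $v$ (with its incident edges), run a single augmenting-path search from $v$ in $O(|E(G)|)=O(|V(G)|^2)$ — but we want $O(\mu|V(G)|)$, so instead I would maintain the matching explicitly and observe that an augmenting path search touching only edges incident to matched vertices or to $v$ costs $O((\mu+1)|V(G)|)$; at most one augmentation occurs per addition. On a deletion, at most one matched edge is destroyed and one augmenting search restores maximality, again in $O((\mu+1)|V(G)|)$. After updating $M$, recompute $\match(G)$ from scratch via one BFS/DFS for the alternating-reachability sets $R_X,R_Y$ from the unmatched vertices; since the relevant part of the graph explored has $O(\mu)$ matched vertices plus the $O(\mu)$ unmatched sources and their neighbourhoods, this is $O((\mu+1)|V(G)|)$ as well. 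I would keep the description at this level, noting that the constant-factor routing details are standard. The combination of the characterization lemma, the monotonicity case analysis, the symmetry observation, and this incremental-maintenance argument yields all four clauses of Lemma~\ref{lem:together}.
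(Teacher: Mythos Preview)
Your outline covers the right ingredients, but the monotonicity argument has a genuine gap, and your stated characterization is off by a side-swap.

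First, the characterization: a vertex $x\in X$ is avoidable iff it is reachable by an $M$-alternating path from an unmatched vertex \emph{on the same side} $U_X$, not from $U_Y$. Your own $M\triangle M'$ argument shows exactly this: the path in the symmetric difference containing $v$ has even length, so its other endpoint (which is $M$-unmatched) lies on the same side as $v$. With your swapped definition, $R_X$ would contain every matched $X$-vertex adjacent to some unmatched $Y$-vertex, and those need not be avoidable (take $X=\{x\}$, $Y=\{y_1,y_2\}$, both edges present).

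Second, and more seriously, the monotonicity proof is not carried out. In your ``$w$ $M$-unmatched'' case the reachability bookkeeping yields $\match(G')\cap X\supseteq\match(G)\cap X$, the reverse of what monotonicity requires; it happens that equality holds here, but you do not prove the needed containment. In the ``$w$ $M$-matched'' case you explicitly flag confusion about directions and do not resolve it; moreover, the correct case split is on whether $w\in\match(G)$ (so $\mu$ drops by one) or $w\notin\match(G)$ (so $\mu$ is preserved), not on whether $w$ is matched in one particular $M$---these differ precisely when $w$ is matched in $M$ but avoidable. The paper avoids the characterization entirely for this part: it proves monotonicity first, by a direct argument that fixes a maximum matching $M$ of $G$ and a maximum matching $N$ of $G'$ (or of $G$) witnessing the supposed failure, then builds a maximal $M$/$N$-alternating path from the vertex in question and derives a contradiction (an augmenting path for the smaller matching, or an alternating path unmatching a vertex of $\match$). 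This two-matching alternating-path argument is the missing idea.

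The remaining pieces---symmetry, the cardinality bound, vertex-cover from the (corrected) characterization, and the incremental maintenance---are essentially as in the paper; the paper obtains the $O((\mu+1)|V|)$ bound simply by invoking K\"onig's theorem to get $|E|\le\mu\cdot|V|$, which is cleaner than tracking which edges a BFS touches.
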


We now proceed to the proof of Lemma~\ref{lem:together}; we assume reader's knowledge of basic concepts and definitions from the classical matching theory~(see \cite{diestel} for reference). We split the proof into several lemmas. First, we show that $\match$ is indeed monotonic.

\begin{lemma}\label{lem:match-mon}
$\match$ is monotonic.
\end{lemma}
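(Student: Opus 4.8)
The plan is to prove monotonicity of $\match$ directly from the structure of alternating paths with respect to a fixed maximum matching. Fix a bipartite graph $G=(X,Y,E)$ and let $G'=G\setminus w$ for some $w\in Y$. Let $M$ be a maximum matching in $G$. The key classical fact I will use is the alternative characterization (the forthcoming Lemma~\ref{lem:characterization}): a vertex $u$ lies in $\match(G)$ if and only if $u$ is matched by $M$ and there is \emph{no} $M$-alternating path from an $M$-unmatched vertex to $u$ (equivalently, $u$ is not reachable by an alternating path from the set of exposed vertices). Vertices not matched by $M$, and vertices reachable by an even-length alternating path from an exposed vertex on the same side (hence themselves avoidable), are exactly the ones outside $\match(G)$.

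First I would handle the right side: I want $\match(G)\cap(Y\setminus\{w\})\subseteq \match(G')\cap Y$, i.e. deleting $w$ cannot \emph{drop} a vertex $v\in Y\setminus\{w\}$ out of the selected set. Here one must be a little careful because $\mu$ may decrease by $1$ when $w$ is removed. I would split into two cases according to whether $w$ was matched by $M$. If $w$ is $M$-unmatched, then $M$ is still a maximum matching of $G'$, and removing an exposed vertex can only \emph{shrink} the set of vertices reachable by alternating paths from exposed vertices on each side; hence any $v\in\match(G)$ stays unreachable, so $v\in\match(G')$. If $w$ is matched, say to $x_0\in X$, then $M'=M\setminus\{wx_0\}$ is a matching of $G'$ of size $\mu(G)-1$; I claim it is maximum in $G'$ (otherwise an augmenting path for $M'$ in $G'$ would, together with $wx_0$, contradict maximality of $M$ in $G$ — this needs the standard augmenting-path argument). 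Then $x_0$ is exposed by $M'$, and I must check that introducing $x_0$ as a new exposed vertex does not make some previously-protected $v\in Y$ reachable. The point is that any $M'$-alternating path from an exposed vertex through $x_0$ must start $x_0 \to (\text{via a non-}M'\text{ edge}) \to \cdots$; prepending the edge $x_0w$... — no, $w$ is deleted — so such a path continues into the old graph, and concatenating it with $x_0w$ would have given an $M$-alternating path from an exposed vertex reaching $v$ already in $G$, contradicting $v\in\match(G)$. This reduction of alternating paths in $G'$ to alternating paths in $G$ (re-attaching $w$) is the technical heart of the argument.

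Next I would handle the left side: $\match(G)\cap X\supseteq \match(G')\cap X$, i.e. deleting $w$ from the right cannot \emph{add} a left vertex to the selected set. Equivalently, if $v\in X\setminus\match(G)$ then $v\in X\setminus\match(G')$ (so $v$ remains avoidable). If $v$ is $M$-unmatched this is immediate since $M$ (or $M'$) still exposes $v$. If $v$ is matched and avoidable in $G$, there is an $M$-alternating path $Q$ from an exposed vertex of $X$ to $v$; I want to transfer it to $G'$. If $Q$ does not use $w$, it is already an $M'$-alternating path (when $w$ unmatched, $M'=M$; when $w$ matched to $x_0$, $Q$ avoiding $w$ also avoids the edge $wx_0$, so it is $M'$-alternating, and the exposed endpoint is still exposed). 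If $Q$ uses $w$, then $w$ is matched, $w$ appears on $Q$ incident to its matching edge $wx_0$, and truncating $Q$ just before reaching $w$ leaves an $M$-alternating path to $x_0$'s predecessor; but in $G'$ the vertex $x_0$ is now exposed by $M'$, so $v$ is reachable from the exposed vertex $x_0$ by the tail of $Q$ after $x_0$. Either way $v$ stays avoidable in $G'$, as required.

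I expect the main obstacle to be the bookkeeping in the case where $w$ is matched by $M$: one must simultaneously (a) argue $M'=M\setminus\{wx_0\}$ is maximum in $G'$, (b) track the newly-exposed vertex $x_0$ and show it does not enlarge the reachable set on the $Y$ side beyond what was already reachable in $G$, and (c) make sure the characterization via alternating paths is applied consistently on both sides (reachability from exposed $X$-vertices governs the $X$ side, from exposed $Y$-vertices governs the $Y$ side). All of these are standard augmenting/alternating-path manipulations, but they must be stated precisely; I would isolate the statement ``$v\in\match(G)$ iff $v$ is $M$-matched and unreachable by an $M$-alternating path from an exposed vertex on $v$'s side'' as a preliminary claim (this is exactly Lemma~\ref{lem:characterization}) and then the monotonicity proof becomes the two short path-transfer arguments sketched above.
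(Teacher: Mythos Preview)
Your route is genuinely different from the paper's: the paper argues directly from the definition of $\match$ by taking, for each claim, \emph{two} maximum matchings (one in $G$, one certifying the hypothetical failure) and analysing a maximal path in their symmetric difference. It never invokes the alternating-path characterisation (Lemma~\ref{lem:characterization}); that lemma is proved afterwards and independently. Your plan instead fixes a single maximum matching $M$ and transfers reachability-from-exposed-vertices between $G$ and $G'$. This is a perfectly reasonable alternative strategy, and when it works it is arguably cleaner, but as written it has a real gap.

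The gap is your claim that $M'=M\setminus\{wx_0\}$ is a maximum matching of $G'$. Your justification (``an augmenting path for $M'$ in $G'$ together with $wx_0$ contradicts maximality of $M$'') is false: if the $M'$-augmenting path in $G'$ has $x_0$ as one endpoint, switching along it yields a matching of size $|M|$ in $G'\subseteq G$ that simply does not cover $w$, which is no contradiction. Concretely, take $X=\{x_1\}$, $Y=\{y_1,y_2\}$, edges $x_1y_1,x_1y_2$, $M=\{x_1y_1\}$, $w=y_1$: then $M'=\emptyset$ is not maximum in $G'$. Since Lemma~\ref{lem:characterization} only applies to maximum matchings, the rest of your argument in that subcase collapses. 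The fix is to case-split on whether $w\in\match(G)$ rather than on whether $M$ happens to match $w$: if $w\notin\match(G)$, choose $M$ to avoid $w$ and land in your easy case; if $w\in\match(G)$, then every maximum matching of $G$ covers $w$, hence $\mu(G')=\mu(G)-1$ and $M'$ really is maximum. (This is exactly the case split the paper uses for the $X$-side.) A smaller point: in your right-side argument you worry that the newly exposed $x_0\in X$ might make some $v\in Y$ reachable, but in the characterisation $v\in Y$ lies in $\match$ iff $v\notin B$, where $B$ is reachability from exposed \emph{$Y$-vertices}; $x_0$ enlarges $A_0$, not $B_0$, so that concern (and the ``prepend $x_0w$'' manoeuvre) is misplaced. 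Once $M'$ is known to be maximum, $B_0'=B_0$ and any $M'$-alternating path from $B_0$ to $v$ in $G'$ avoids $x_0$ automatically (it would have to leave $x_0$ along an $M'$-edge, and there is none), hence is already $M$-alternating in $G$.
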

\begin{proof}
Let $G'=G\setminus w$, where $G=(X,Y,E)$ is a bipartite graph and $w\in Y$. 

Firstly, we prove that $\match(G)\cap (Y\setminus \{w\}) \subseteq \match(G')\cap Y$. For the sake of contradiction, assume that there is some $v\in \match(G)\cap Y, v\neq w$, such that $v\notin \match(G')\cap Y$. So there exists a maximum matching $N$ of $G'$ in which $v$ is unmatched; we will also consider $N$ as a (possibly not maximum) matching in $G$. Let $M$ be any maximum matching in $G$; as $v\in \match(G)$ we have that $v$ is matched in $M$. Construct a maximum path $P$ in $G$ that begins in $v$ and alternates between matchings $M$ and $N$, starting with $M$. As both $w$ and $v$ are not matched in $N$, this path does not enter $w$ or $v$, hence it ends somewhere else; note that $P$ has length at least one as $v$ is matched in $M$. Let $x\notin \{v,w\}$ be the last vertex of $P$. We consider two cases.

Assume first that $x\in X$, i.e., $x$ is a vertex of the left side that is not matched in $N$. Then $P$ is an augmenting path for $N$ fully contained in $G'$. This contradicts maximality of $N$.

Assume now that $x\in Y$, i.e., $x$ is a vertex of the right side that is not matched in $M$. Then $P$ is an alternating path for $M$ in $G$, whose switching leaves $v$ unmatched. This contradicts the fact that $v\in \match(G)$.

Now we prove that $\match(G)\cap X\supseteq \match(G')\cap X$. We consider two cases:

Assume first that $w\in \match(G)$. As $w$ is matched in every maximum matching of $G$, after deleting $w$ the size of the maximum matching drops by one: if there was a matching of the same size in $G'$, it would constitute also a maximum matching in $G$ that does not match $w$. Hence, if we take any matching $M$ of $G$ and delete the edge incident to $w$, we obtain a maximum matching of $G'$. We infer that $\match(G)\cap X\supseteq \match(G')\cap X$, as every vertex of $X$ that is unmatched in some maximum matching in $G$ is also unmatched in some maximum matching $G'$. 

Assume now that $w\notin \match(G)$. Let $M$ be any maximum matching of $G$ in which $w$ is unmatched. As $M$ is also a matching in $G'$, we infer that $M$ is also a maximum matching in $G'$ and the sizes of maximum matchings in $G$ and $G'$ are equal. Take any $v\in \match(G')\cap X$ and for the sake of contradiction assume that $v\notin \match(G)$. Let $N$ be any maximum matching in $G$ in which $v$ is unmatched. Note that since $v\in \match(G')\cap X$ and $M$ is a maximum matching in $G'$, then $v$ is matched in $M$. Let us now construct a maximum path $P$ in $G$ that begins in $v$ and alternates between $M$ and $N$, starting with $M$. As $v$ is unmatched in $N$ and $w$ is unmatched in $M$, this path does not enter $w$ or $v$, hence it ends somewhere else; note that $P$ has length at least one as $v$ is matched in $M$. Let $x\notin \{v,w\}$ be the last vertex of $P$. Again, we consider two subcases.

In the first subcase we have $x\in X$, i.e., $x$ is a vertex of the left side that is not matched in $M$. Then $P$ is an alternating path for $M$ in $G'$, whose switching leaves $v$ unmatched. This contradicts $v\in \match(G')$.

In the second subcase we have $x\in Y$, i.e., $x$ is a vertex of the right side that is not matched in $N$. Then $P$ is an augmenting path for $N$ in $G$, which contradicts maximality of $N$.
\end{proof}

In order to prove that $\match$ is a vertex cover selector and can be computed efficiently, we prove the following alternative characterization. The following lemma might be considered a folklore corollary of the classical matching theory, but for the sake of completeness we include its proof. By $V(F)$ for $F\subseteq E(G)$ we denote the set of endpoints of edges of $F$.

\begin{lemma}\label{lem:characterization}
Let $G=(X,Y,E)$ be a bipartite graph and let $M$ be any maximum matching in $G$. Let $A_0=X\setminus V(M)$ be the set of unmatched vertices on the left side, and $B_0=Y\setminus V(M)$ be the set of unmatched vertices on the right side. Moreover, let $A$ be the set of vertices of $X$ that can be reached via an alternating path (with respect to $M$) from $A_0$, and symmetrically let $B$ be the set of vertices of $Y$ reachable by an alternating path from $B_0$. Then $\match(G)=V(G)\setminus(A\cup B)$.
\end{lemma}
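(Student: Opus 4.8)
The plan is to prove the two inclusions $\match(G)\subseteq V(G)\setminus(A\cup B)$ and $V(G)\setminus(A\cup B)\subseteq \match(G)$ separately, working throughout with a fixed maximum matching $M$ and exploiting the standard fact that alternating paths from unmatched vertices behave well in bipartite graphs (in particular, because $M$ is maximum, there are no augmenting paths, so an alternating path starting at an unmatched vertex of $X$ can never end at an unmatched vertex of $Y$, and vice versa).

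\medskip

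\noindent\textbf{First inclusion: $\match(G)\subseteq V(G)\setminus(A\cup B)$.} Equivalently, I will show that every vertex in $A\cup B$ is unmatched in some maximum matching. By symmetry (swapping the sides) it suffices to handle $A$. Take $v\in A$ and let $P$ be an alternating path from some $a_0\in A_0$ to $v$. If $v=a_0$ then $v$ is already unmatched in $M$ and we are done. Otherwise $P$ has positive length; since $a_0$ is unmatched and $G$ is bipartite, $P$ starts with a non-matching edge and, by the usual parity argument, $v\in X$ is reached via a matching edge, so $P$ has even length and alternates non-matching/matching/$\dots$/matching. Switching $M$ along $P$ (replacing $M\cap E(P)$ by $E(P)\setminus M$) yields a matching $M'$ of the same size as $M$ — hence maximum — in which $v$ is unmatched. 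Therefore $v\notin\match(G)$. The same argument with the roles of $X$ and $Y$ exchanged handles $b\in B$.

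\medskip

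\noindent\textbf{Second inclusion: $V(G)\setminus(A\cup B)\subseteq\match(G)$.} Suppose $v\notin A\cup B$ but $v\notin\match(G)$; then there is a maximum matching $N$ in which $v$ is unmatched. First note $v$ is matched in $M$: if not, $v\in A_0\subseteq A$ or $v\in B_0\subseteq B$, contradicting $v\notin A\cup B$. Now build a maximal path $P$ in $G$ starting at $v$ and alternating between $M$ and $N$, beginning with the $M$-edge at $v$. Such a path is well-defined and cannot revisit a vertex (a standard fact: the symmetric difference $M\triangle N$ is a disjoint union of paths and even cycles, and $v$, being $M$-matched but $N$-unmatched, is an endpoint of a path component of $M\triangle N$). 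Let $x$ be the other endpoint of that path component. Since $v$ is $N$-unmatched, the component is a path whose two endpoints are: one vertex unmatched in $N$ (namely… in fact $v$ is the $N$-unmatched endpoint) and one vertex unmatched in $M$ — so $x$ is unmatched in $M$. If $x\in X$, then $x\in A_0$, and $P$ is an alternating path (with respect to $M$) from $A_0$ reaching $v$, so $v\in A\cup B$ depending on which side $v$ lies — either way a contradiction with $v\notin A\cup B$. (Here one checks the parity: $P$ alternates $M,N,M,N,\dots$; ignoring which matching the edges come from, $P$ is alternating with respect to $M$, so it witnesses reachability of $v$ from the $M$-unmatched vertex $x$.) If $x\in Y$, then $x\in B_0$ and symmetrically $P$ witnesses $v\in A\cup B$, again a contradiction. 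Hence no such $v$ exists.

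\medskip

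\noindent\textbf{Where the care is needed.} The routine calculations are trivial; the one genuinely delicate point is the parity/endpoint bookkeeping in the second inclusion — namely, verifying that the maximal $M$/$N$-alternating path from $v$ terminates at a vertex unmatched in $M$ (not in $N$), and that this path, read as an $M$-alternating path, correctly witnesses that $v$ is reachable from $A_0$ or $B_0$. This is where one must invoke that $G$ is bipartite (so side-parity along the path is forced) and that $v$ lies on a path — not a cycle — component of $M\triangle N$ because it has different matched-status in $M$ and $N$. Once this is pinned down, both inclusions follow, and combined with Lemma~\ref{lem:match-mon} (monotonicity) and the obvious symmetry, Lemma~\ref{lem:characterization} immediately gives that $\match$ is a vertex cover selector: every edge $uv$ with both endpoints in $A\cup B$ would, via the alternating paths to $u$ and $v$, produce an augmenting path for $M$, contradicting maximality — so at least one endpoint of every edge lies in $V(G)\setminus(A\cup B)=\match(G)$.
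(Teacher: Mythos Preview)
Your proof is correct, but the second inclusion is argued by a genuinely different route than the paper's. The paper proves $V(G)\setminus(A\cup B)\subseteq\match(G)$ structurally via K\"onig's theorem: it shows that $(X\setminus A)\cup N(A)$ and $(Y\setminus B)\cup N(B)$ are both minimum vertex covers of size $|M|$, observes that any minimum vertex cover is saturated by every maximum matching, and then notes that the union of these two covers is exactly $V(G)\setminus(A\cup B)$. Your argument instead takes a hypothetical maximum matching $N$ missing $v$ and analyses the path component of $M\triangle N$ through $v$. Your approach is more elementary (no K\"onig) and arguably more direct; the paper's approach has the side benefit of explicitly exhibiting the two minimum vertex covers, which is what immediately yields the vertex-cover-selector property.

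Two places in your write-up deserve one extra sentence each. First, the assertion that the other endpoint $x$ of the $M\triangle N$-path is $M$-unmatched is not automatic: you must say that if $x$ were also $N$-unmatched then the path would be an augmenting path for $N$, contradicting maximality of $N$. Second, ``$v\in A\cup B$ depending on which side $v$ lies'' is only justified once you note that $v$ and $x$ lie on the \emph{same} side of the bipartition: the path begins at $v$ with an $M$-edge and ends at $x$ with an $N$-edge, hence has even length, so $v\in X\Leftrightarrow x\in X$. You flag the parity issue in your closing paragraph, but it should appear in the argument itself, since reachability from $A_0$ places a vertex in $A$ only if that vertex is in $X$ (and symmetrically for $B_0$ and $Y$). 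With these two clarifications the proof is complete.
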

\begin{proof}
On one hand, every vertex $v$ belonging to $A$ or $B$ is not matched in some maximum matching: we just modify $M$ along the alternating path connecting $v$ with a vertex unmatched in $M$, obtaining another maximum matching in which $v$ is unmatched. Hence, $\match(G)\subseteq V(G)\setminus(A\cup B)$.

We now proceed to the proof that $\match(G)\supseteq V(G)\setminus(A\cup B)$. Assume that there is an edge between $A$ and $B$, i.e., there is a vertex $w$ in the set $N(A)\cap B$. Let $v$ be the neighbour of $w$ in $A$. We claim that $w$ is reachable from $A_0$ via an alternating path. Assume otherwise, and take an alternating path $P$ from $A_0$ reaching $v$. Observe that this path does not traverse $w$ and, moreover, $vw\notin M$, as then this edge would be used in $P$ to access $v$. Hence, we can prolong $P$ by the edge $vw$, thus reaching $w$ by an alternation path from $A_0$, a contradiction. By the definition of $B$, $w$ is also reachable also from $B_0$ via an alternating path. The concatenation of these two paths is an alternating walk from $A_0$ to $B_0$, which contains an alternating simple subpath from $A_0$ to $B_0$. This subpath is an augmenting path for $M$, which contradicts maximality of $M$. 

Consider the set $A$ and the set $N(A)$. We already know that $N(A)$ is disjoint with $B$, so also from $B_0$; hence, every vertex of $N(A)$ must be matched in $M$. Moreover, we have that every vertex of $N(A)$ must be in fact matched to a vertex of $A$, as the vertices matched to $N(A)$ are also reachable via alternating paths from $A_0$. Similarly, every vertex of $N(B)$ is matched to a vertex of $B$. 

We now claim that $|X\setminus A|+|N(A)|=|M|$. As $A_0\subseteq A$, every vertex of $X\setminus A$ as well as every vertex of $N(A)$ is matched in $M$. Moreover, as there is no edge between $A$ and $Y\setminus N(A)$, every edge of $M$ has an endpoint either in $X\setminus A$ or in $N(A)$. It remains to show that no edge of $M$ can have one endpoint in $X\setminus A$ and second in $N(A)$; this, however follows from the fact that vertices of $N(A)$ are matched to vertices of $A$, proved in the previous paragraph. Similarly we have that $|Y\setminus B|+|N(B)|=|M|$.

It follows that sets $(X\setminus A)\cup N(A)$ and $N(B)\cup (X\setminus B)$ are vertex covers of $G$ of size $|M|$. Note that every vertex cover $C$ of $G$ of size $|M|$ must be fully matched by any matching $N$ of size $|M|$, as every edge of $N$ is incident to at least one vertex from $C$. Hence, every vertex of $(X\setminus A)\cup N(A) \cup N(B)\cup (X\setminus B)=V(G)\setminus(A\cup B)$ is matched in every maximum matching of $G$, so $\match(G)\supseteq V(G)\setminus(A\cup B)$.
\end{proof}

From now on we use the terminology introduced in Lemma~\ref{lem:characterization}. Note that Lemma~\ref{lem:characterization} implies that sets $A,B$ do not depend on the choice of matching $M$, but only on graph $G$. Also, from the proof of Lemma~\ref{lem:characterization} it follows that $\match$ is a vertex cover selector: there is no edge between $A$ and $B$, because it would close an augmenting path from $A_0$ to $B_0$. We now present an incremental algorithm that maintains $\match(G)$ together with a maximum matching of $G$ during vertex additions and deletions.

\begin{lemma}\label{lem:match-maintaining}
There exists an incremental algorithm, which maintains a maximum matching $M$ and $\match(G)$ for a bipartite graph $G=(X,Y,E)$ during vertex addition and deletion operations. The update time is $O((|M|+1)\cdot |V(G)|)$.
\end{lemma}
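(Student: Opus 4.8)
The plan is to describe the data structure, the two update routines (addition and deletion of a vertex), and to argue that each runs in time $O((|M|+1)\cdot|V(G)|)$ while correctly restoring both a maximum matching $M$ and the set $\match(G)$. Throughout I would keep stored explicitly: the current graph $G=(X,Y,E)$ (say in adjacency-list form), a maximum matching $M$, and the partition of $V(G)$ into $A\cup B$ and $\match(G)=V(G)\setminus(A\cup B)$ as characterized in Lemma~\ref{lem:characterization}. The key point is that, given $M$, the sets $A$ and $B$ can be recomputed from scratch by two alternating BFS/DFS searches from the unmatched vertices $A_0=X\setminus V(M)$ and $B_0=Y\setminus V(M)$ respectively; each such search visits every vertex and every edge at most a constant number of times, hence costs $O(|V(G)|+|E(G)|)=O(|V(G)|^2)$ in the worst case, but I will want the sharper bound and so will observe that an alternating search explores only edges incident to the at most $2|M|+|A_0|+|B_0|$ relevant vertices together with one extra "fan-out" layer --- more simply, recomputing $A,B$ after a single vertex change costs $O(|V(G)|^2)$, which is already dominated if we are careful, but to hit $O((|M|+1)|V(G)|)$ I would instead recompute $A$ and $B$ incrementally as sketched below, or simply note that the alternating searches from $A_0$ and $B_0$ traverse at most $O((|M|+1)|V(G)|)$ edges since every alternating path alternates into a matched edge, so each "step of progress" consumes one of the $|M|$ matching edges and one incident non-matching edge, and each vertex is entered once.

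For a \textbf{vertex deletion}, say we delete $w$ (the case $w\in Y$; the case $w\in X$ is symmetric, or handled by working with $G^{\rev}$). If $w\notin V(M)$, the matching $M$ remains maximum and we only recompute $A,B$ as above. If $w\in V(M)$, let $ww'$ be its matching edge; removing it leaves a matching $M'=M\setminus\{ww'\}$ of size $|M|-1$, which may or may not be maximum in $G\setminus w$. We run one augmenting-path search from the newly unmatched vertex $w'$: if it finds an augmenting path, we augment along it (cost $O(|V(G)|)$) and obtain a maximum matching of size $|M|-1$... wait, of size $|M|$ minus one of the original edge but plus one augmentation, i.e.\ still $|M|-1$; more precisely a single augmenting search either increases $|M'|$ by one (back to $|M|-1+1$) or certifies $M'$ is already maximum. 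Either way, after at most one augmentation we have a maximum matching of $G\setminus w$, and we recompute $A,B,\match$. For a \textbf{vertex addition}, say we add $v$ to $X$ together with its edges to $Y$: the previous matching $M$ is still a matching; we run one augmenting-path search starting from $v$, which either augments (increasing $|M|$ by one) or certifies $M$ remains maximum; then recompute $A,B,\match$. Adding a vertex to $Y$ is symmetric.

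The \textbf{running-time accounting} is the routine-but-essential part: a single augmenting-path search in a bipartite graph with matching $M$ can be implemented to run in $O(|M|\cdot|V(G)|)$ time (it is a single DFS/BFS in which the "forward" non-matching edges and the unique "backward" matching edges are interleaved, and one bounds the exploration by charging visited vertices and the at most $|M|$ matching edges), and similarly the two alternating searches that rebuild $A$ and $B$ cost $O((|M|+1)\cdot|V(G)|)$ by the same charging argument; since each update performs a constant number of such searches plus $O(|V(G)|)$ bookkeeping to apply the resulting changes to the stored structures, the total is $O((|M|+1)\cdot|V(G)|)$ as claimed. \textbf{Correctness} reduces to two facts already available: that a single augmenting-path search suffices to restore maximality after a unit change to the matching (standard, since a unit change to $G$ changes $\mu(G)$ by at most one), and that once a maximum matching is in hand the stored $A,B$ are exactly the sets of Lemma~\ref{lem:characterization}, so $\match(G)=V(G)\setminus(A\cup B)$ is correctly maintained. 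The main obstacle I anticipate is purely in the time bound: one must be slightly careful that the alternating searches rebuilding $A$ and $B$ do not degenerate to $\Theta(|V(G)|^2)$ on dense instances, and the resolution is the charging argument above, tightened by the observation that alternating exploration only ever follows a non-matching edge immediately after a matching edge (or from an initially unmatched vertex), so the number of matching edges $|M|$ --- not the number of all edges --- governs the branching.
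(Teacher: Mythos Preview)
Your overall approach matches the paper's: maintain a maximum matching $M$ by performing at most one augmenting-path search after each vertex addition or deletion, and recompute $\match(G)$ from scratch using the characterization of Lemma~\ref{lem:characterization} via two alternating BFS searches from $A_0$ and $B_0$.

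The gap is in your running-time argument. You correctly identify the potential obstacle --- that the alternating searches and the augmenting-path search are naively $O(|E(G)|)$, which could be $\Theta(|V(G)|^2)$ --- but your proposed charging argument does not close it. The claim that ``the number of matching edges $|M|$ governs the branching'' is not justified: the branching happens at left-side vertices along non-matching edges, and there can be up to $|X|$ such vertices (all of $A_0$, plus those reached via matching edges), each with large degree. Nothing in your charging bounds the total fan-out by $O(|M|\cdot|V(G)|)$ rather than $O(|E(G)|)$.

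The paper's resolution is a one-line observation you are missing: by K\"onig's theorem, $G$ has a vertex cover of size $|M|$, and since every edge is incident to that cover, $|E(G)|\le |M|\cdot|V(G)|$. Hence $|G|=O((|M|+1)\cdot|V(G)|)$ automatically, and any $O(|G|)$-time BFS already meets the claimed bound. With this observation the ``main obstacle'' you anticipate simply disappears, and no structural charging on alternating paths is needed.
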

\begin{proof}
Observe that, by Lemma~\ref{lem:characterization}, $\match(G)$ can be computed from $M$ in $O(|G|)$ time: we simply compute sets $A_0,B_0$ and apply breadth-first search from the whole $A_0$ to compute $A$, and breadth-first search from the whole $B_0$ to compute $B$. Both of these searches take $O(|G|)$ time; note that by K\"onig's theorem a bipartite graph with maximum matching $M$ can have at most $O(|M|\cdot |V(G)|)$ edges, so $|G|=O((|M|+1)\cdot |V(G)|)$. Hence, we just need to maintain a maximum matching.

During vertex addition, the size of the maximum matching can increase by at most $1$, so we may simply add the new vertex and run one iteration of the standard breadth-first search procedure checking, whether there is an augmenting path in the new graph; this takes $O(|G|)$ time. If this is the case, we modify the matching along this path and we know that we obtained a maximum matching in the new graph. Otherwise, the size of the maximum matching does not increase, so we do not need to modify the current one. Similarly, during vertex deletion we simply delete the vertex together with possibly at most one edge of the matching incident to it. The size of the stored matching might have decreased by~$1$; in this case again we run one iteration of checking whether there is an augmenting path, again in $O(|G|)$ time. If this is the case, we augment the matching using this path, obtaining a new matching about which we know that it is maximum. Otherwise, we know that the current matching is maximum, so we do not need to modify it.\end{proof}

Lemmas~\ref{lem:match-mon},~\ref{lem:characterization} and~\ref{lem:match-maintaining} together with previous observations prove Lemma~\ref{lem:together}.

\subsubsection{The Buss selector}

In this subsection we introduce the subset selector that will be used in the exact algorithm for pathwidth. This selector is inspired by the classical kernelization algorithm for the vertex cover problem of Buss~\cite{buss}. By $d(v)$ we denote the (undirected) degree of vertex $v$.

\begin{definition}
Let $G=(X,Y,E)$ be a bipartite graph. A vertex $v$ is called $\ell${\em{-important}} if $d(v)>\ell$, and $\ell${\em{-unimportant}} otherwise. A {\em{Buss selector}} is a subset selector $\buss_\ell$ that returns all vertices of $X$ that are either $\ell$-important, or have at least one $\ell$-unimportant neighbour.
\end{definition}

Note that Buss selector is highly non-symmetric, as it chooses vertices only from the left side. However, both $\buss_\ell$ and $\buss_\ell^\rev$ behave in a nice manner.

\begin{lemma}\label{lem:buss-mon}
Both $\buss_\ell$ and $\buss_\ell^{\rev}$ are monotonic.
\end{lemma}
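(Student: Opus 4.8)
The plan is to unwind the definition of monotonicity in both directions. Recall that $\buss_\ell$ chooses exactly the vertices of $X$ that are either $\ell$-important (degree exceeding $\ell$) or have an $\ell$-unimportant neighbour in $Y$, and that the relevant modification is passing from $G=(X,Y,E)$ to $G'=G\setminus w$ for some $w\in Y$. For $\buss_\ell$ I need two inclusions: $\buss_\ell(G)\cap X\supseteq \buss_\ell(G')\cap X$ and $\buss_\ell(G)\cap Y\subseteq \buss_\ell(G')\cap Y$; the latter is vacuous since $\buss_\ell$ never selects right-side vertices, so only the first inclusion carries content. For $\buss_\ell^{\rev}$, which operates on $(Y,X,E)$, the roles swap: deleting a vertex from the right side of $(Y,X,E)$ means deleting an element of $X$ from the original graph, and here the non-vacuous inclusion is the $Y\setminus\{w\}$ part, i.e. that selected vertices of $Y$ persist.

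First I would handle $\buss_\ell$. Take $v\in \buss_\ell(G')\cap X$; I must show $v\in\buss_\ell(G)$. If $v$ is $\ell$-important in $G'$, then since $G\supseteq G'$ its degree in $G$ is at least as large, so $v$ is $\ell$-important in $G$ and hence selected. Otherwise $v$ has an $\ell$-unimportant neighbour $u\in Y\setminus\{w\}$ in $G'$; then $u$ is still a neighbour of $v$ in $G$, and its degree in $G$ is at most one more than in $G'$. The subtle point is that adding back $w$ could in principle raise $u$'s degree above $\ell$. However, $u\neq w$ and $w$ is the only vertex whose incidences change, so the degree of $u$ is \emph{identical} in $G$ and $G'$ — deleting $w$ does not affect edges incident to $u$. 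Hence $u$ remains $\ell$-unimportant in $G$ and $v$ is selected. (The only edges lost in passing to $G'$ are those incident to $w$ itself.) This gives the desired inclusion, and the $Y$-side inclusion holds trivially, so $\buss_\ell$ is monotonic.

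Next I would handle $\buss_\ell^{\rev}$. Unfolding the definition: for a graph $H=(X',Y',E')\in\bipclass$ and $w\in Y'$, monotonicity of $\buss_\ell^{\rev}$ means $\buss_\ell^{\rev}(H)\cap X'\supseteq\buss_\ell^{\rev}(H\setminus w)\cap X'$ and $\buss_\ell^{\rev}(H)\cap(Y'\setminus\{w\})\subseteq\buss_\ell^{\rev}(H\setminus w)\cap Y'$. Writing $\buss_\ell^{\rev}(H)=\buss_\ell((Y',X',E'))$, the first condition says: vertices of $X'$ that are $\ell$-important in $(Y',X',E')$ or have an $\ell$-unimportant $Y'$-neighbour are preserved when we delete $w\in Y'$ — but $\buss_\ell$ applied to $(Y',X',E')$ selects only from its left side $Y'$, so $\buss_\ell^{\rev}(H)\cap X'=\emptyset$ and this is vacuous. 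The second condition is the real one: if $v\in Y'\setminus\{w\}$ lies in $\buss_\ell((Y',X',E'))$, I must show it lies in $\buss_\ell((Y'\setminus\{w\},X',E''))$ where $E''$ drops edges at $w$. Here $v$ is either $\ell$-important in $(Y',X',E')$ — but wait, that would require its degree to survive deletion of a left-side vertex $w$, which it need not — so I should look more carefully. Actually deleting $w$ from the left side $Y'$ of $(Y',X',E')$ can only decrease degrees of $X'$-vertices, not of $v\in Y'$, whose degree is unchanged since $v\neq w$ and edges at $v$ go to $X'$. So if $v$ is $\ell$-important it stays $\ell$-important; if instead $v$ has an $\ell$-unimportant neighbour $u\in X'$, then after deleting $w$ the degree of $u$ can only drop, so $u$ stays $\ell$-unimportant and $v$ stays selected.

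The main obstacle, such as it is, is purely bookkeeping: keeping straight which side is ``left'' and which is ``right'' after applying $(\cdot)^{\rev}$, and noticing that in every case the degree of the \emph{candidate} vertex is unaffected by the relevant deletion (because that deletion removes a vertex on the opposite side or equal to nothing incident to it), while the degree of its \emph{neighbour used as a witness} can only move in the harmless direction. Once this observation is isolated, both cases are one line each. I would present it by first reducing each direction to its non-vacuous inclusion, then invoking the ``degrees change monotonically and the witness stays a witness'' argument.
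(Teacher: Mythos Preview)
Your argument is correct and essentially the same as the paper's: both reduce each direction to its single non-vacuous inclusion and then observe that the candidate vertex's degree is unaffected while its witness neighbour's degree can only move in the harmless direction. The only cosmetic difference is that the paper phrases the $\buss_\ell$ case contrapositively (if $v\notin\buss_\ell(G)$ then $v\notin\buss_\ell(G')$), whereas you argue the inclusion directly; the underlying observation---that deleting $w\in Y$ leaves the degrees of all surviving $Y$-vertices unchanged---is identical.
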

\begin{proof}
Monotonicity of $\buss_\ell$ is equivalent to the observation that if $v\in X$ is $\ell$-unimportant and has only $\ell$-important neighbours, then deletion of any vertex of $Y$ cannot make $v$ $\ell$-important or create $\ell$-unimportant neighbours of $v$. This holds because the degrees of surviving neighbours of $v$ do not change.

Monotonicity of $\buss_\ell^\rev$ is equivalent to the observation that if $w\in Y$ is chosen by $\buss_\ell^\rev$ because it is $\ell$-important, then after deletion of any other $w'\in Y$ its degree does not change so it stays $\ell$-important, and if it had an $\ell$-unimportant neighbour, then after deletion of any other $w'\in Y$ this neighbour will still be $\ell$-unimportant.
\end{proof}

We now prove that $\buss_\ell$ does not choose too many vertices unless $G$ contains a large matching.

\begin{lemma}\label{lem:buss-retrieving}
If $|\buss_\ell(G)|>\ell^2+\ell$, then $G$ contains a matching of size $\ell+1$.
\end{lemma}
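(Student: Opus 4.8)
The plan is to use the classical Buss kernelization argument. Recall that $\buss_\ell(G)$ consists of two types of vertices on the left side $X$: the $\ell$-\emph{important} ones (degree $>\ell$), and the $\ell$-unimportant ones having at least one $\ell$-unimportant neighbour in $Y$. I would denote these two sets $L_{\mathrm{imp}}$ and $L_{\mathrm{small}}$, so that $\buss_\ell(G) = L_{\mathrm{imp}} \cup L_{\mathrm{small}}$ and $|L_{\mathrm{imp}}| + |L_{\mathrm{small}}| \geq |\buss_\ell(G)| > \ell^2 + \ell$. The goal is to produce a matching of size $\ell+1$ by building it partly from $L_{\mathrm{imp}}$ and partly from $L_{\mathrm{small}}$.

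First I would handle the important vertices. If $|L_{\mathrm{imp}}| \geq \ell+1$, then I can greedily match $\ell+1$ of them: process them one by one, and when matching the $i$-th important vertex $v$ (with $i \leq \ell+1$), note that $v$ has more than $\ell$ neighbours in $Y$, while at most $i-1 \leq \ell$ of them are already used by the partial matching; hence a free neighbour exists. This already gives the claim, so from now on assume $|L_{\mathrm{imp}}| \leq \ell$. Then $|L_{\mathrm{small}}| > \ell^2 + \ell - \ell = \ell^2$, i.e., $|L_{\mathrm{small}}| \geq \ell^2+1$.

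Now I would build the matching of size $\ell+1$ greedily using only $L_{\mathrm{small}}$: each $v \in L_{\mathrm{small}}$ has an $\ell$-unimportant neighbour $w(v) \in Y$ with $d(w(v)) \leq \ell$. Process the vertices of $L_{\mathrm{small}}$ one at a time, maintaining a partial matching; when we reach $v$, if $w(v)$ is still free we add the edge $vw(v)$ and move on. The only way to get stuck is if, at some point, the partial matching $M'$ has size at most $\ell$ and yet \emph{every} remaining vertex $v \in L_{\mathrm{small}}$ has its designated neighbour $w(v)$ already matched. But each matched right-vertex $w$ has degree at most $\ell$ among the $\ell$-unimportant ones (in fact $d(w)\leq \ell$ overall), so $w$ can be the designated partner of at most $\ell$ vertices of $L_{\mathrm{small}}$; since at most $|M'| \leq \ell$ right-vertices are matched, they block at most $\ell \cdot \ell = \ell^2$ vertices of $L_{\mathrm{small}}$. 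As $|L_{\mathrm{small}}| \geq \ell^2+1$, some vertex of $L_{\mathrm{small}}$ is still extendable, so the matching grows until it reaches size $\ell+1$.

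The argument is essentially elementary counting; the only point requiring a little care is making sure the two cases ($|L_{\mathrm{imp}}|$ large versus small) together cover the threshold $\ell^2+\ell$, and that in the second case the "blocking" count $\ell\cdot\ell$ is strictly below $|L_{\mathrm{small}}|$. There is no real obstacle — the standard Buss bound $\ell^2+\ell$ is exactly tuned so that one of the two greedy procedures succeeds.
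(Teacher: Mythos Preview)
Your proof is correct and follows essentially the same approach as the paper's: the paper also splits into the case of at least $\ell+1$ $\ell$-important vertices (handled by the identical greedy matching) versus at least $\ell^2+1$ vertices with an $\ell$-unimportant neighbour (handled by a greedy procedure that, after matching $v$ to an unimportant neighbour $w$, marks all $\leq \ell$ neighbours of $w$ as used --- the same $\ell\cdot\ell$ blocking count you use). The only cosmetic difference is that the paper's second set may include $\ell$-important vertices with an unimportant neighbour, whereas your $L_{\mathrm{small}}$ excludes them; since the sets are disjoint in your partition and overlapping in the paper's cover, the arithmetic $\ell + \ell^2$ works out identically in both.
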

\begin{proof}
As $|\buss_\ell(G)|>\ell^2+\ell$, in $\buss_\ell(G)$ there are at least $\ell+1$ $\ell$-important vertices, or at least $\ell^2+1$ vertices with an $\ell$-unimportant neighbour. In both cases we construct the matching greedily.

In the first case we iteratively take an $\ell$-important vertex of $\buss_\ell(G)$ and match it with any its neighbour that is not matched so far. As there are at least $\ell+1$ these neighbours and at most $\ell$ were used so far, we can always find one not matched so far.

In the second case we take any vertex $v_1$ of $\buss_\ell(G)$ and find any its $\ell$-unimportant neighbour $w_1$. We add $v_1w_1$ to the constructed matching and mark all the at most $\ell$ neighbours of $w_1$ as used. Then we take any unused vertex $v_2$ of $\buss_\ell(G)$ that has an $\ell$-unimportant neighbour, find any its $\ell$-unimportant neighbour $w_2$ (note that $w_2\neq w_1$ as $v_2$ was not marked), add $v_2w_2$ to the constructed matching and mark all the at most $\ell$ neighbours of $w_2$ as used. We continue in this manner up to the point when a matching of size $\ell+1$ is constructed. Note that there will always be an unmarked vertex of $\buss_\ell(G)$ with an $\ell$-unimportant neighbour, as at the beginning there are at least $\ell^2+1$ of them and after $i$ iterations at most $i\cdot\ell$ are marked as used.
\end{proof}

We prove that $\buss_\ell$ can be also evaluated efficiently.

\begin{lemma}\label{lem:buss-maintaining}
There exists an incremental algorithm, which maintains $\buss_\ell(G)$ for a bipartite graph $G=(X,Y,E)$ during operations of vertex addition and vertex deletion. The update times are $O(\ell|V(G)|)$.
\end{lemma}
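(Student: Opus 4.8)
\noindent\emph{Proof proposal.}
The plan is to store, alongside the graph $G$ itself (kept as adjacency lists over a dynamic vertex set, where every vertex carries a label telling which side of the fixed bipartition it lies on), a small bundle of auxiliary data: the current degree $d(v)$ of every vertex, a flag marking whether $v$ is $\ell$-important, and, for every $v\in X$, a counter $c(v)$ equal to the number of $\ell$-unimportant neighbours of $v$. The set $\buss_\ell(G)$ is represented explicitly, e.g.\ by a Boolean array indexed by $X$ together with a list. The invariant tying everything together is that $v\in\buss_\ell(G)$ if and only if $d(v)>\ell$ or $c(v)>0$. Once this invariant holds, re-evaluating membership of a single $v\in X$ after $d(v)$ or $c(v)$ changed costs $O(1)$, so the whole task reduces to propagating the effect of one vertex insertion/deletion on the values $d(\cdot)$, on the importance flags, and on the counters $c(\cdot)$, and then re-checking membership only for the vertices of $X$ that were touched.

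Handling an operation on the right side $Y$ is easy. Inserting or deleting a vertex $w\in Y$ changes only the edges incident to $w$, hence alters $d(v)$ by $\pm 1$ for each $v\in N(w)\subseteq X$ and for no other vertex, and adds or removes $w$ itself. A vertex of $X$ changing its importance status affects only its own membership, since it is never counted by any $c(\cdot)$; so we just walk over $N(w)$, update the degrees and importance flags, update $c(v)$ by $\pm 1$ whenever $w$ is $\ell$-unimportant, re-check membership of each $v\in N(w)$, and (for an insertion) initialise $d(w)$ and its flag, with nothing further to do because $\buss_\ell$ never selects from $Y$. This costs $O(|N(w)|)=O(|V(G)|)$ and triggers no cascade, as no degree inside $Y$ is touched.

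An operation on the left side $X$ is the delicate case, and is where the factor $\ell$ in the bound comes from. Inserting or deleting $v\in X$ changes $d(u)$ by $\pm 1$ for every $u\in N(v)\subseteq Y$, so some of these $u$ may flip their importance status; and a vertex $u\in Y$ flipping its status changes $c(v')$ by $\pm 1$ for every $v'\in N(u)$, possibly altering the memberships of those $v'$. A priori this cascade could cost $\Theta(|V(G)|^2)$. The key observation that saves us is that a vertex $u$ changes importance status only when $d(u)$ crosses the threshold $\ell$, i.e.\ only when $d(u)$ equals $\ell$ or $\ell+1$ immediately after the change; hence at the instant $u$ flips it has at most $\ell+1$ neighbours, so updating all counters $c(v')$ for $v'\in N(u)$ and re-checking their memberships costs only $O(\ell)$. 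Since a single operation on $X$ changes the degree of at most $|N(v)|\le|V(G)|$ vertices of $Y$, each of which flips at most once, the total cost of the cascaded updates is $O(\ell\cdot|V(G)|)$; adding the $O(|N(v)|)$ work of fixing the adjacency lists and the degrees of $N(v)$ and, for an insertion, computing $c(v)$ from scratch after the neighbours' flags are current, the whole update still runs in $O(\ell|V(G)|)$ time. The main obstacle in the argument is precisely ruling out the naive quadratic cascade, and it is dispatched by this threshold bound on the degree of a vertex at the moment it changes status.
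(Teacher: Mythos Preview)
Your proposal is correct and follows essentially the same approach as the paper's proof: both maintain degrees, importance flags, and for each $v\in X$ a counter of $\ell$-unimportant neighbours, and both hinge on the same key observation that a vertex of $Y$ can flip its importance status only when its degree is $\ell$ or $\ell+1$, so the cascade triggered by that flip touches at most $\ell+1$ counters. Your write-up is in fact slightly more explicit than the paper's in isolating this threshold argument as the obstacle that rules out a quadratic cascade, but the substance and data structures are identical.
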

\begin{proof}
With every vertex of $X$ we maintain its degree and a counter of $\ell$-unimportant neighbours. We also maintain degrees of vertices of $Y$. The degree gives us information whether the vertex is $\ell$-important. 

Let us first examine adding vertex $v$ to the left side. We need to increase the degrees of neighbours of $v$, so some of them may become $\ell$-important. For every such vertex that became $\ell$-important --- note that its degree is exactly $\ell+1$ --- we examine its $\ell+1$ neighbours and decrement their $\ell$-unimportant neighbours' counters. If it drops to zero, we delete this vertex from $\buss_\ell(G)$ unless it's $\ell$-important. Finally, we count how many neighbours of $v$ are $\ell$-unimportant, set the counter appropriately and add $v$ to $\buss_\ell(G)$ if necessary.

Let us now examine adding vertex $w$ to the right side. We need to increase the degrees of neighbours of $w$, so some of them may become $\ell$-important and thus chosen by $\buss_\ell(G)$. Moreover, if $w$ is $\ell$-unimportant, then we increment the $\ell$-unimportant neighbours' counters of neighbours of $w$; if it is incremented from $0$ to $1$, then the vertex becomes chosen by $\buss_\ell(G)$, assuming that it was not already $\ell$-important.

Now we examine deleting vertex $v$ from the left side. We iterate through the neighbours of $v$ and decrement their degrees. If any of them ceases to be $\ell$-important, we iterate through all its $\ell$ neighbours and increment the counters of $\ell$-unimportant neighbours. If for some of these neighbours the counter was incremented from $0$ to $1$, the neighbour becomes chosen by $\buss_\ell(G)$, assuming that it was not already $\ell$-important.

Finally, we examine deleting vertex $w$ from the right side. Firstly, we check if $w$ was $\ell$-important. Then we iterate through neighbours of $w$ decreasing the degree and $\ell$-unimportant neighbours' counters, if necessary. If any of the neighbours ceases to be $\ell$-important or have an $\ell$-unimportant neighbour, it becomes not chosen to $\buss_\ell(G)$.
\end{proof}

\subsection{Algorithms}\label{sec:pathwidth-algo}

In this subsection we present the algorithms for computing pathwidth. We begin with approximation algorithm and then proceed to the exact algorithm. We introduce the approximation algorithm with an additional parameter $\ell$; taking $\ell=5k$ gives the promised $7$-approximation, but modifying the parameter $\ell$ may be useful to improve quality of the obtained degree tangle.

\begin{theorem}\label{thm:pathwidth-approx}
There exists an algorithm, which given a semi-complete digraph $T$ and integers $k$ and $\ell\geq 5k$, in time $O(k|V(T)|^2)$ outputs one of the following:
\begin{itemize}
\item an $(\ell+2,k)$-degree tangle in $T$;
\item a $(k+1,k)$-matching tangle in $T$;
\item a path decomposition of $T$ of width at most $\ell+2k$.
\end{itemize}
In the first two cases the algorithm can correctly conclude that $\pw(T)>k$.
\end{theorem}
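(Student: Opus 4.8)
The plan is to realise the intuition from the introduction literally: slide a window of $\ell$ consecutive vertices through an outdegree ordering, and turn each position of the window into a bag by adding to it a small, \emph{consistently chosen} cover of the arcs that jump over the window. First I would compute an outdegree ordering $\sigma=(v_1,\dots,v_n)$ of $V(T)$ (with $n=|V(T)|$), which costs $O(n^2)$ for reading $T$ plus a sort, and then look for an $(\ell+2,k)$-degree tangle in the cheapest possible way: if some index $i$ satisfies $d^+(v_{i+\ell+1})\le d^+(v_i)+k$, then $v_i,\dots,v_{i+\ell+1}$ are $\ell+2$ vertices whose outdegrees lie in an interval of length $k$, i.e.\ an $(\ell+2,k)$-degree tangle; since $\ell\ge 5k$ this is in particular a $(5k+2,k)$-degree tangle, so Lemma~\ref{lem:degree-tangle-pathwidth} certifies $\pw(T)>k$. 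From now on I may assume the ``spread'' property $d^+(v_{i+\ell+1})>d^+(v_i)+k$ for all $i$, which drives everything else.

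For $t=0,1,\dots,n+\ell$ split the relevant prefix of $\sigma$ into the deep prefix $L_t=\{v_i:1\le i\le t-\ell\}$, the window $W_t=\{v_i:\max(1,t-\ell+1)\le i\le\min(t,n)\}$, and the deep suffix $R_t=\{v_i:t+1\le i\le n\}$, and let $H_t$ be the bipartite graph on $(L_t,R_t)$ with an edge $v_iv_j$ exactly when $i\le t-\ell<t<j$ and $(v_i,v_j)\in E(T)$; these are the left-to-right arcs jumping over the window. Moving from $H_t$ to $H_{t+1}$ is a single addition of a vertex ($v_{t-\ell+1}$) to the left side together with a single deletion of a vertex ($v_{t+1}$) from the right side, so I can maintain $H_t$, a maximum matching $M_t$, and $\match(H_t)$ incrementally using Lemma~\ref{lem:together}. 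If at any step $|M_t|\ge k+1$, take $k+1$ vertex-disjoint edges of $H_t$; their endpoints $X\subseteq L_t$, $Y\subseteq R_t$ are disjoint, matched by arcs of $T$, and by the spread property every $w\in Y$ has $d^+(w)\ge d^+(v_{t+1})>d^+(v_{t-\ell})+k\ge d^+(v)+k$ for every $v\in X$, so $(X,Y)$ is a $(k+1,k)$-matching tangle and Lemma~\ref{lem:matching-tangle-pathwidth} certifies $\pw(T)>k$. Otherwise $\mu(H_t)\le k$ for all $t$, and $C_t:=\match(H_t)$ is a vertex cover of $H_t$ with $|C_t|\le |\match(H_t)\cap L_t|+|\match(H_t)\cap R_t|\le 2\mu(H_t)\le 2k$.

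In the remaining case I would build a path decomposition by setting $A_t=L_t\cup W_t\cup(C_t\cap R_t)$ and $B_t=R_t\cup W_t\cup(C_t\cap L_t)$ and showing that $((A_0,B_0),\dots,(A_{n+\ell},B_{n+\ell}))$ is a separation chain of width at most $\ell+2k$, to which Lemma~\ref{lem:sepchain} then applies. Each $(A_t,B_t)$ is a separation because $A_t\setminus B_t=L_t\setminus C_t$, $B_t\setminus A_t=R_t\setminus C_t$, an arc from the former to the latter would be an edge of $H_t$ left uncovered by $C_t$ (impossible), and arcs in the other direction run ``backwards'' in $\sigma$ and are permitted in a separation. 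The endpoints $(A_0,B_0)=(\emptyset,V(T))$ and $(A_{n+\ell},B_{n+\ell})=(V(T),\emptyset)$ are immediate from the conventions on $L_t,W_t,R_t$. Monotonicity $A_t\subseteq A_{t+1}$, $B_t\supseteq B_{t+1}$ is where the formalism of Section~\ref{sec:pathwidth-subset} does the work: $L_t\cup W_t=\{v_1,\dots,v_{\min(t,n)}\}$ grows and $R_t\cup W_t$ shrinks; a right-side vertex in $C_t$ that is still on the right side of $H_{t+1}$ stays chosen by $\match$ by the first bullet of Lemma~\ref{lem:monotonicity} (we add on the left and delete a \emph{different} vertex on the right), while $v_{t+1}$, if chosen, simply moves into $W_{t+1}$; symmetrically a left-side vertex of $H_t$ not chosen by $\match$ stays unchosen in $H_{t+1}$ by the second bullet, and the freshly introduced left vertex $v_{t-\ell+1}$ lies in $W_t\subseteq B_t$. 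Finally, the bags produced by Lemma~\ref{lem:sepchain} are $A_t\cap B_{t-1}$, and unwinding the definitions this equals $\{v_i:\max(1,t-\ell)\le i\le\min(t,n)\}\cup(C_t\cap R_t)\cup(C_{t-1}\cap L_{t-1})$: at most $\ell+1$ window vertices plus at most $k$ from each of two covers, so every bag has at most $\ell+2k+1$ vertices and the width is at most $\ell+2k$.

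The only genuinely delicate ingredient is precisely the consistency of the covers $C_t$ under sliding, that is, the monotonicity of $\match$, which has already been isolated as Lemmas~\ref{lem:match-mon} and~\ref{lem:monotonicity}; everything else is bookkeeping about indices and separations. For the running time, each of the $O(n+\ell)$ steps does one vertex addition and one vertex deletion on $H_t$, each costing $O((|M_t|+1)\cdot n)=O(kn)$ by Lemma~\ref{lem:together} (the incident edges of the new left vertex are listed in $O(n)$ time by scanning the arcs of $T$ incident to it), and reading out the tangle in the two failure cases is cheap; in the intended regime $\ell=O(k)$ this gives $O(kn^2)=O(k|V(T)|^2)$ overall (for $\ell\ge|V(T)|$ one may simply output the single bag $V(T)$, which is valid since then $\ell+2k\ge\pw(T)$).
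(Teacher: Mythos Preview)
Your proposal is correct and follows essentially the same approach as the paper: compute an outdegree ordering, test the spread condition to detect a degree tangle, slide an $\ell$-window while maintaining the auxiliary bipartite graph $H$ together with $\match(H)$, output a matching tangle if $\mu(H)>k$, and otherwise assemble the separation chain whose separators are window $\cup\ \match(H)$, using the monotonicity of $\match$ (Lemma~\ref{lem:monotonicity}) to verify the chain condition. The only cosmetic difference is indexing: you let $t$ run over $0,\dots,n+\ell$ so that the trivial endpoint separations $(\emptyset,V(T))$ and $(V(T),\emptyset)$ arise naturally, whereas the paper restricts to $i=0,\dots,n-\ell$ and appends the endpoints by hand; your width computation via $A_t\cap B_{t-1}$ and the paper's via $A_i\cap B_{i+1}$ yield the same bound $\ell+2k$.
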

\begin{proof}
The last sentence follows from Lemmas~\ref{lem:degree-tangle-pathwidth} and~\ref{lem:matching-tangle-pathwidth}. We proceed to the algorithm.

The algorithm first computes any outdegree ordering $\sigma=(v_1,v_2,\ldots,v_n)$ of $V(T)$ in $O(|V(T)|^2)$ time, where $n=|V(T)|$. Then in $O(|V(T)|)$ time we check if there is an index $i$ such that $d^+(v_{i+\ell+1})\leq d^+(v_i)+k$. If this is true, then $\{v_i,v_{i+1},\ldots,v_{i+\ell+1}\}$ is an $(\ell+2,k)$-degree tangle which can be safely output by the algorithm. From now on we assume that such a situation does not occur, i.e, $d^+(v_{i+\ell+1})>d^+(v_i)+k$ for every index $i$.

We define a separation sequence $R_0=((A_0,B_0),(A_1,B_1),\ldots,(A_{n-\ell},B_{n-\ell}))$ as follows. Let us define $S^0_i=\{v_{i+1},v_{i+2},\ldots,v_{i+\ell}\}$ and let $H_i=(X_i,Y_i,E_i)$ be a bipartite graph, where $X_i=\{v_1,\ldots,v_i\}$, $Y_i=\{v_{i+\ell+1},v_{i+\ell+2},\ldots,v_n\}$ and $xy\in E_i$ if and only if $(x,y)\in E(T)$. If $\mu(H_i)>k$, then vertices matched in a maximum matching of $H_i$ form a $(k+1,k)$-matching tangle in $T$, which can be safely output by the algorithm. Otherwise, let $S_i=S^0_i\cup \match(H_i)$ and we set $A_i=X_i\cup S_i$ and $B_i=Y_i\cup S_i$; the fact that $(A_i,B_i)$ is a separation follows from the fact that $\match$ is a vertex cover selector. Finally, we add separations $(\emptyset,V(T))$ and $(V(T),\emptyset)$ at the ends of the sequence, thus obtaining separation sequence $R$. We claim that $R$ is a separation chain. Note that if we prove it, the width of the corresponding path decomposition is upper bounded by $\max_{0\leq i\leq n-\ell-1}|\{v_{i+1},v_{i+2},\ldots,v_{i+\ell+1}\}\cup (\match(H_i)\cap X_i) \cup (\match(H_{i+1})\cap Y_{i+1})|-1\leq \ell+1+2k-1=\ell+2k$, by monotonicity of $\match$.

It suffices to show that for every $i$ we have that $A_i\subseteq A_{i+1}$ and $B_i\supseteq B_{i+1}$. This, however, follows from Lemma~\ref{lem:monotonicity} and the fact that $\match$ is monotonic. $H_{i+1}$ differs from $H_i$ by deletion of one vertex on the right side and addition of one vertex on the left side, so we have that $A_{i+1}$ differs from $A_i$ only by possibly incorporating vertex $v_{i+\ell+1}$ and some vertices from $Y_{i+1}$ that became chosen by $\match$, and $B_{i+1}$ differs from $B_i$ only by possibly losing vertex $v_{i+1}$ and some vertices from $X_i$ that ceased to be chosen by $\match$.   

Separation chain $R$ can be computed in $O(k|V(T)|^2)$ time: we consider consecutive sets $S^0_i$ and maintain the graph $H_i$ together with a maximum matching in it and $\match(H_i)$. As going to the next set $S^0_i$ can be modelled by one vertex deletion and one vertex additions in graph $H_i$, by Lemma~\ref{lem:together} we have that the time needed for an update is $O(k|V(T)|)$; note that whenever the size of the maximum matching exceeds $k$, we terminate the algorithm by outputting the obtained matching tangle. As we make $O(|V(T)|)$ updates, the time bound follows. Translating a separation chain into a path decomposition can be done in $O(\ell|V(T)|)$ time, assuming that we store the separators along with the separations.
\end{proof}

We now present the exact algorithm for pathwidth.

\begin{theorem}\label{thm:pathwidth-exact}
There exists an algorithm that, given a semi-complete digraph $T$ and an integer $k$, in $O(2^{O(k\log k)}|V(T)|^2)$ time computes a path decomposition of $T$ of width at most $k$, or correctly concludes that no such exists.
\end{theorem}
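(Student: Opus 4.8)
The plan is to mimic the structure of the exact algorithm for cutwidth (Theorem~\ref{thm:cutwidth-exact}), but working with the separation-chain viewpoint for path decompositions (Lemma~\ref{lem:sepchain}) rather than with orderings. First I would fix an outdegree ordering $\sigma=(v_1,\ldots,v_n)$, computable in $O(|V(T)|^2)$ time. The key preprocessing step is to invoke the \emph{degree tangle} obstacle: if there is an index $i$ with $d^+(v_{i+Ck+1})\le d^+(v_i)+k$ for a suitable constant $C$ (so that $\{v_i,\ldots,v_{i+Ck+1}\}$ forms a $(5k+2,k)$-degree tangle, e.g.\ $C=5$), then by Lemma~\ref{lem:degree-tangle-pathwidth} we have $\pw(T)>k$ and we stop. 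Hence we may assume the outdegree ordering is ``spread'': outdegrees jump by more than $k$ over every window of $5k+1$ consecutive vertices.

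The main body of the algorithm enumerates all separations $(A,B)$ of $T$ that can plausibly appear in a separation chain of width at most $k$. The crucial \emph{localization} claim, analogous to the one in Theorem~\ref{thm:cutwidth-exact}, is: if $(A,B)$ has $|A\cap B|\le k$ and it is part of a valid chain, then letting $\alpha$ be the least index with $v_\alpha\in B$, a prefix $\{v_1,\ldots,v_{\alpha-1-O(k)}\}$ is forced into $A$ and a suffix is forced into $B$, so that $(A,B)$ is described by $\alpha$ together with a decision, for each vertex in a window of $O(k^2)$ vertices around position $\alpha$, of whether it lies in $A\setminus B$, $B\setminus A$, or $A\cap B$. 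The degree spread guarantees the window is short. However --- and this is where the exact algorithm for pathwidth genuinely differs from cutwidth --- we cannot afford to carry a full ternary mask over $O(k^2)$ positions, since that would give $3^{O(k^2)}$ states, not $2^{O(k\log k)}$. This is precisely the role of the \textbf{Buss selector} $\buss_\ell$ developed in Section~\ref{sec:pathwidth-subset}: around the sliding window we apply $\buss_\ell$ (and $\buss_\ell^{\rev}$) with $\ell=O(k)$ to the auxiliary bipartite graph of arcs jumping over the window, producing a canonical superset of $O(k^2)$ vertices that provably contains every sensible choice of separator; then, by Lemma~\ref{lem:buss-retrieving}, if this set is too big we have found a large matching, hence a matching tangle, hence $\pw(T)>k$ by Lemma~\ref{lem:matching-tangle-pathwidth}. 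Within the retained $O(k^2)$-vertex pool the separator $A\cap B$ is a subset of size at most $k$, giving only $\binom{O(k^2)}{\le k}=2^{O(k\log k)}$ choices per window position, and the monotonicity of $\buss_\ell$ (Lemma~\ref{lem:buss-mon}) ensures the pool behaves consistently as the window slides, so the retained separations really do form the vertices of an auxiliary digraph $D$ in which valid path decompositions of width at most $k$ correspond to $(\emptyset,V(T))$-to-$(V(T),\emptyset)$ paths.

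Thus the algorithm builds $D$: vertices are the $2^{O(k\log k)}|V(T)|$ retained separations, and arcs connect $(A,B)$ to $(A',B')$ when $A'=A\cup\{v\}$ for a single $v$ such that both orders stay $\le k$; we add the trivial endpoints $(\emptyset,V(T))$ and $(V(T),\emptyset)$. By Lemma~\ref{lem:sepchain}, a path from source to sink yields a path decomposition of width at most $k$, and conversely a nice path decomposition of width $\le k$ gives such a path (its separations satisfy the localization claim, using the degree-spread and matching-tangle-free assumptions). Running a linear-time reachability algorithm on $D$ then either returns the path, from which we read off the decomposition, or certifies that none exists. For the running time, maintaining the bipartite graph $H_i$ of jumping arcs, its Buss pool, and the set of retained separations incrementally as $i$ increases costs $O(k)$ amortized graph updates per step (Lemma~\ref{lem:buss-maintaining}), each touching $O(|V(T)|)$ incident arcs, and generating/checking the $2^{O(k\log k)}$ masks per position is done by the same incremental bookkeeping as in Theorem~\ref{thm:cutwidth-exact}; summing over $|V(T)|$ positions gives $O(2^{O(k\log k)}|V(T)|^2)$ overall, which is linear in the input size as promised.

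The step I expect to be the main obstacle is making the localization claim precise enough that the Buss pool is \emph{simultaneously} (a) small, i.e.\ $O(k^2)$, so that the $\binom{O(k^2)}{\le k}$ count holds, (b) a genuine superset of every separator occurring in \emph{some} optimal nice decomposition, not merely in one fixed decomposition, and (c) consistent across consecutive window positions so that arcs of $D$ can be recognized locally. Ensuring (b) and (c) together is the delicate point: we need that whenever a vertex is excluded from the pool at position $i$, it can be safely assumed to be on the ``obvious'' side of \emph{any} low-width separation whose least $B$-index is near $i$, and this is exactly what the monotonicity and the vertex-cover-superset behaviour of $\buss_\ell$ (together with the degree gap across the window, which forces the ``obvious'' side) are designed to deliver. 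Verifying these interlocking invariants --- and checking that $H_{i+1}$ differs from $H_i$ by one deletion on the right and one addition on the left, so Lemmas~\ref{lem:buss-mon} and~\ref{lem:buss-maintaining} apply verbatim --- is where the real work lies; the graph construction and reachability wrap-up are then routine.
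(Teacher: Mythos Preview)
Your plan is essentially the paper's own proof: fix an outdegree ordering, use Lemma~\ref{lem:degree-tangle-pathwidth} to rule out degree tangles so the ordering is spread, use the Buss selector with threshold $m=\Theta(k)$ on the bipartite graph of jumping arcs to isolate an $O(k^2)$-size pool of candidate separator vertices at each window position (falling back to a matching tangle via Lemma~\ref{lem:buss-retrieving} if the pool is too large), enumerate the resulting $2^{O(k\log k)}|V(T)|$ separations, and run reachability. You have also correctly located the crux: the monotonicity of $\buss_m$ and $\buss_m^{\rev}$ is what makes the pools cohere along the chain.

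Two technical points where the paper differs from your sketch are worth noting. First, the paper parametrises a separation $(A,B)$ not by the least $\sigma$-index lying in $B$, but by a \emph{canonical index} $\beta$ determined by $|A|$ and the degree sequence; this is what cleanly yields $X_\alpha\subseteq A$ and $Y_\beta\subseteq B$ from the separation property. Second, and more substantively, your arc rule ``$A'=A\cup\{v\}$ for a single $v$'' does not quite work. The paper does not show that the pool contains every separator of \emph{the} optimal nice chain; rather it shows that any feasible separation chain can be \emph{thinned} (by deleting pool-excluded vertices from each $A_i,B_i$ separately) into a chain of ``thin'' separations, and this thinning can destroy the single-vertex-step structure. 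Accordingly the paper puts an arc from $(A,B)$ to $(A',B')$ whenever $A\subseteq A'$, $B\supseteq B'$ and $|A'\cap B|\le k+1$, and bounds outdegrees by arguing that the canonical indices satisfy $|\beta'-\beta|=O(k)$. With these two adjustments your outline matches the paper.
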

\begin{proof}
We say that a separation chain $R$ is {\em{feasible}} if it corresponds to a path decomposition of width at most $k$ in the sense of the second claim of Lemma~\ref{lem:sepchain}. Note that the transformations in the first and in the second claim of Lemma~\ref{lem:sepchain} are inverse to each other, so by Lemma~\ref{lem:sepchain} we may look for a feasible separation chain. Turning such a separation chain into a path decomposition can be trivially done in $O(k|V(T)|)$ time, assuming that we store the separators along with the separations.

We define a subclass of separation chains called {\em{thin}}. Every feasible separation chain can be adjusted to a thin separation chain of at most the same width by deleting some vertices from sets $A_i,B_i$; note that the new separation chain created in such a manner will also be feasible, as bags of the corresponding path decomposition can only get smaller. Hence, we may safely look for a separation chain that is thin.

The opening step of the algorithm is fixing some outdegree ordering $\sigma=\{v_1,v_2,\ldots,v_n\}$ of $V(T)$, where $n=|V(T)|$. For $i=0,1,\ldots,n$, let $H_i$ be a bipartite graph with left side $X_i=\{v_1,v_2,\ldots,v_i\}$ and right $Y_i=\{v_{i+1},v_{i+2},\ldots,v_n\}$, where $xy\in E(H_i)$ if $(x,y)\in E(T)$. As in the proof of Theorem~\ref{thm:pathwidth-approx}, in $O(|V(T)|)$ time we check if there is an index $i$ such that $d^+(v_{i+5k+1})\leq d^+(v_i)+k$. If yes, then $\{v_i,v_{i+1},\ldots,v_{i+5k+1}\}$ is a $(5k+2,k)$-degree tangle and the algorithm may safely provide a negative answer. From now on we assume that such a situation does not occur.

We proceed to the definition of thinness. We fix $m=6k+1$. Let $(A,B)$ be any separation in $T$ and $\alpha,\beta$ be any indices between $0$ and $|V(T)|$ such that $\alpha=\max(0,\beta-(5k+1))$.  We say that $(A,B)$ is {\em{thin with respect to}} $(\alpha,\beta)$  if {\em{(i)}} $Y_\beta\subseteq B\subseteq Y_\alpha\cup \buss_m(H_{\alpha})$ and {\em{(ii)}} $X_\alpha \subseteq A\subseteq X_\beta \cup \buss^\rev_m(H_{\beta})$. $(A,B)$ is {\em{thin}} if it is thin with respect to any such pair $(\alpha,\beta)$. A separation chain is {\em{thin}} if every its separation is thin.

For a separation $(A,B)$ let us define the {\em{canonical index}} $\beta=\beta((A,B))$ as the only integer between $0$ and $|V(T)|$ such that $d^+(v_j)<|A|$ for $j\leq \beta$ and $d^+(v_j)\geq |A|$ for $j>\beta$. Similarly, the {\em{canonical index}} $\alpha=\alpha((A,B))$ is defined as $\alpha((A,B))=\max(0,\beta((A,B))-(5k+1))$. Observe that by the assumed properties of ordering $\sigma$ we have that vertices in $X_\alpha$ have outdegrees smaller than $|A|-k$.

Firstly, we observe that if $(A,B)$ is a separation and $\alpha,\beta$ are its canonical indices, then $X_\alpha\subseteq A$ and $Y_\beta\subseteq B$. This follows from the fact that vertices in $A\setminus B$ have outdegrees smaller than $|A|$, hence they cannot be contained in $Y_\beta$, while vertices in $B\setminus A$ have outdegrees at least $|A\setminus B|\geq |A|-k$, hence they cannot be contained in $X_\alpha$. Concluding, vertices of $X_\alpha$ may belong only to $A\setminus B$ or $A\cap B$ (left side or the separator), vertices of $Y_\beta$ may belong only to $B\setminus A$ or $A\cap B$ (right side or the separator), while for vertices of the remaining part $Y_\alpha\cap X_\beta$ neither of the three possibilities is excluded.

We now show how to transform any separation chain into a thin one. Let us take some separation chain $R$ and obtain a sequence of separations $R'$ as follows. We take every separation $(A,B)$ from $R$; let $\alpha,\beta$ be its canonical indices. We delete $X_{\alpha}\setminus \buss_m(H_{\alpha})$ from $B$ and $Y_\beta\setminus \buss^\rev_m(H_{\beta})$ from $A$, thus obtaining a new pair $(A',B')$. We need to prove that every such pair $(A',B')$ is a separation, and that all these separations form a separation chain. The fact that such a separation chain is thin follows directly from the definition of the performed operation and the observation of the previous paragraph.

First, we check that $A'\cup B'=V(T)$. This follows from the fact from $A$ we remove only vertices of $Y_\beta$ while from $B$ we remove only vertices from $X_\alpha$, but after the removal $X_\alpha$ is still covered by $A'$ and $Y_\beta$ by $B'$.

Now we check that $E(A'\setminus B',B'\setminus A')=\emptyset$. Assume otherwise, that there is a pair $(v,w)\in E(T)$ such that $v\in A'\setminus B'$ and $w\in B'\setminus A'$. By the construction of $(A',B')$ and the fact that $(A,B)$ was a separation we infer that either $v\in X_{\alpha}\setminus\buss_m(H_{\alpha})$ or $w\in Y_{\beta}\setminus \buss^\rev_m(H_{\beta})$. We consider the first case, as the second is symmetrical.

Since $w\notin A'$ and $X_\alpha\subseteq A'$, we have that $w\in Y_\alpha$, so $vw$ is an edge in $H_\alpha$. As $v\notin \buss_m(H_\alpha)$, we have that $v$ is $m$-unimportant and has only $m$-important neighbours. Hence $w$ is $m$-important in $H_\alpha$. Observe now that $w$ cannot be contained in $B\setminus A$, as there is more than $m>k$ vertices in $X_\alpha$ being tails of arcs directed toward $w$, and only $k$ of them can be in separator $A\cap B$ leaving at least one belonging to $A\setminus B$ (recall that vertices from $X_\alpha$ cannot belong to $B\setminus A$). Hence $w\in A$. As $w\notin A'$, we have that $w\in Y_{\beta}\setminus \buss^\rev_m(H_{\beta})$. However, $w$ was an $m$-important vertex on the right side of $H_\alpha$, so as it is also on the right side of $H_\beta$, it is also $m$-important in $H_\beta$. This is a contradiction with $w\notin \buss^\rev_m(H_{\beta})$.

We conclude that $(A',B')$ is indeed a separation.

Finally, we check that $R'$ is a separation chain. Consider two separations $(A_1,B_1)$, $(A_2,B_2)$ in $R$, such that $A_1\subseteq A_2$ and $B_1\supseteq B_2$. Let $\alpha_1,\beta_1,\alpha_2,\beta_2$ be canonical indices of  $(A_1,B_1)$ and $(A_2,B_2)$, respectively. It follows that $\alpha_1\leq \alpha_2$ and $\beta_1\leq \beta_2$. Hence, graph $H_{\alpha_2}$ can be obtained from $H_{\alpha_1}$ via a sequence of vertex deletions on the right side and vertex additions on the left side. As $\buss_m$ and $\buss_m^\rev$ are monotonic (Lemma~\ref{lem:buss-mon}), by Lemma~\ref{lem:monotonicity} we have that every vertex deleted from $B_1$ while constructing $B_1'$ is also deleted from $B_2$ while constructing $B_2'$ (assuming it belongs to $B_2$). Hence, $B_2'\subseteq B_1'$. A symmetric argument shows that $A_2'\supseteq A_1'$.

We proceed to the algorithm itself. As we argued, we may look for a thin feasible separation chain.

Let $\cutfam$ be the family of thin separations of $T$ of order at most $k$. We construct an auxiliary digraph $D$ with vertex set $\cutfam$ by putting an arc $((A,B),(A',B'))\in E(D)$ if and only if $A\subseteq A'$, $B\supseteq B'$ and $|A'\cap B|\leq k+1$. By Lemma~\ref{lem:sepchain}, paths in $D$ from $(\emptyset, V(T))$ to $(V(T),\emptyset)$ correspond to feasible thin separation chains.

We prove that either the algorithm can find an obstacle for admitting path decomposition of width at most $k$, or $D$ has size at most $O(2^{O(k\log k)}|V(T)|)$ and can be constructed in time $O(2^{O(k\log k)}|V(T)|^2)$. Hence, any linear-time reachability algorithm in $D$ runs within claimed time complexity bound.

Consider any indices $\alpha,\beta$ such that $\alpha=\max(0,\beta-(5k+1))$. Observe that if $|\buss_m(H_\alpha)|>m^2+m$, by Lemma~\ref{lem:buss-retrieving} we can find a matching of size $m+1$ in $H_\alpha$. At most $5k+1=m-k$ edges of this matching have the right endpoint in $Y_\alpha\cap X_\beta$, which leaves us at least $k+1$ edges between $X_\alpha$ and $Y_\beta$. Such a structure is a $(k+1,k)$-matching tangle in $T$, so by Lemma~\ref{lem:matching-tangle-pathwidth} the algorithm may provide a negative answer. A symmetrical reasoning shows that if $|\buss^\rev_m(H_\beta)|>m^2+m$, then the algorithm can also provide a negative answer.

If we assume that these situations do not occur, we can characterize every thin separation $(A,B)$ of order at most $k$ by:
\begin{itemize}
\item a number $\beta$, where $0\leq \beta\leq |V(T)|$;
\item a mask on the vertices from $Y_\alpha\cap X_\beta$, denoting for each of them whether it belongs to  $A\setminus B$, $A\cap B$ or to $B\setminus A$ (at most $3^{5k+1}$ options);
\item subsets of size at most $k$ of $X_\alpha\cap \buss_m(H_\alpha)$ and $Y_\beta\cap \buss^\rev_m(H_\beta)$, denoting which vertices belong to $A\cap B$ (at most $\left(k\binom{O(k^2)}{k}\right)^2=2^{O(k\log k)}$ options).
\end{itemize}
Moreover, if $(A',B')$ is an outneighbour of $(A,B)$ in $D$, then it must have parameter $\beta'$ not larger than $\beta+(6k+2)$, as otherwise we have a guarantee that $|A'\cap B|\geq |X_{\alpha'}\cap Y_\beta|\geq k+2$, and also not smaller than $\beta-(6k+2)$, as otherwise we have a guarantee that $|A|\geq |X_\alpha|>|A'|$. Hence, the outdegrees in $D$ are bounded by $2^{O(k\log k)}$.

This gives raise to the following algorithm constructing $D$ in time $O(2^{O(k\log k)}|V(T)|^2)$.
\begin{itemize}
\item First, we enumerate $\cutfam$. We scan through the order $\sigma$ with an index $\beta$ maintaining graphs $H_\alpha, H_\beta$ for $\alpha=\max(0,\beta-(5k+1))$, along with $\buss_m(H_\alpha)$ and $\buss^\rev_m(H_\beta)$. Whenever cardinality of any of these sets exceeds $m^2+m$, we terminate the algorithm providing a negative answer. By Lemma~\ref{lem:buss-maintaining} we can bound the update time by $O(k|V(T)|)$. For given index $\beta$, we list all $2^{O(k\log k)}$ pairs $(A,B)$ having this particular $\beta$ in characterization from the previous paragraph. For every such pair, in $O(k|V(T)|)$ we check whether it induces a separation of order $k$, by testing emptiness of set $E(A\setminus B,B\setminus A)$ using at most $O(k)$ operations of vertex deletion/addition on the graph $H_\alpha$. We discard all the pairs that do not form such a separation; all the remaining ones are exactly separations that are thin with respect to $(\alpha,\beta)$.
\item For every separation $(A,B)$ characterized by parameter $\beta$, we check for all the $2^{O(k\log k)}$ separations $(A',B')$ with parameter $\beta'$ between $\beta-(6k+2)$ and $\beta+(6k+2)$, whether we should put an arc from $(A,B)$ to $(A',B')$. Each such a check can be performed in $O(k)$ time, assuming that we store the separator along with the separation.
\end{itemize}
Having constructed $D$, we run a linear-time reachability algorithm to check whether $(V(T),\emptyset)$ can be reached from $(\emptyset,V(T))$. If not, we provide a negative answer; otherwise, the path corresponds to a feasible separation chain which can be transformed into a path decomposition.
\end{proof}

\section{Topological containment and immersion}\label{sec:top}

\begin{theorem}
There exists an algorithm which, given a semi-complete $T$ and a digraph $H$ with $k=|H|$, in time $O(2^{O(k\log k)}|V(T)|^2)$ checks whether $H$ is topologically contained in $T$.
\end{theorem}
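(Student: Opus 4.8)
\noindent\emph{Proof plan.} Set $K=|H|$; edge cases with $V(H)=\emptyset$ being trivial, we may assume $K\ge 1$. The approach is to run the pathwidth approximation algorithm of Theorem~\ref{thm:pathwidth-approx} with both parameters chosen as suitable multiples of $K$, to observe that each obstacle it may return is — once its size is a fixed constant times $K$ — already a certificate that $H$ is topologically contained in $T$, and to fall back on a dynamic programming routine in the only remaining case, when a path decomposition of width $O(K)$ is produced.

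Concretely, I would take $k=20K$ and $\ell=520K$, so that $\ell\ge 5k$ and $\ell+2\ge 26k$, and call the algorithm of Theorem~\ref{thm:pathwidth-approx} on $(T,k,\ell)$, which costs $O(k|V(T)|^2)=O(K|V(T)|^2)$. If it returns a $(k+1,k)$-matching tangle $(X,Y)$ with matching $f$, then choosing any $20K$-element subset $X'\subseteq X$ and putting $Y'=f(X')$ yields, after the restriction (which can only enlarge the outdegree gap), a $(20K,12K)=(5\cdot 4K,\,3\cdot 4K)$-matching tangle; by Lemma~\ref{lem:matching-tangle-jungle} the set $Y'$ then contains a $(4K,4)$-short jungle, so by Lemma~\ref{lem:sj-ts} with $d=4$ the digraph $T$ contains $H$ as a topological subgraph. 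If instead it returns an $(\ell+2,k)$-degree tangle $Z$, then, since $\ell+2\ge 26k$, the set $Z$ is in particular a $(26k,k)$-degree tangle, so Lemma~\ref{lem:degree-tangle-jungle} gives a $(k,3)$-short jungle inside $Z$; as $k=20K\ge 3K$ this is a fortiori a $(3K,3)$-short jungle, and Lemma~\ref{lem:sj-ts} with $d=3$ again shows that $H$ is topologically contained in $T$. In both obstacle cases the algorithm reports that $H$ is topologically contained, and — crucially — we need only the \emph{existential} content of these lemmas, never an explicit jungle or expansion. (Any $k,\ell=\Theta(K)$ satisfying $\ell\ge 5k$, $\ell+2\ge 26k$ and $k\ge 20K$ would serve equally well; the particular constants are immaterial.)

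In the remaining outcome the algorithm returns a path decomposition $W$ of $T$ of width at most $\ell+2k=O(K)$. I would then make $W$ nice without changing its width, at the cost of $O(K|V(T)|)$ bags, and run the dynamic programming routine for topological containment from Appendix~\ref{app:dp}. That routine decides, given a semi-complete digraph together with a path decomposition of width $p$, whether $H$ is topologically contained in it, in time $2^{O((p+|H|)\log(p+|H|))}\cdot|V(T)|$; with $p=O(K)$ and $|H|=K$ this is $2^{O(K\log K)}|V(T)|$. Summing the three cases, the total running time is $O(K|V(T)|^2)+2^{O(K\log K)}|V(T)|=O(2^{O(K\log K)}|V(T)|^2)$, as claimed.

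The one genuinely technical step is the dynamic program of Appendix~\ref{app:dp}. Its states must encode a partial expansion model: a partial injection from $V(H)$ into the vertices of $T$ processed so far, together with, for every arc of $H$ whose image path is still under construction, a record of which vertex of the current bag that path currently runs through. The work lies in bounding the number of such states by $2^{O((p+|H|)\log(p+|H|))}$ and in showing that introduce and forget transitions can be evaluated in time polynomial in $p+|H|$; here semi-completeness of $T$ is what makes the transitions cheap, since it fixes the orientation of the arcs between a freshly introduced vertex and the rest of its bag. Because this routine is essentially the immersion dynamic program of~\cite{my} with ``edge-disjoint'' replaced by ``internally vertex-disjoint'', I would only spell out the differences and relegate the full description to the appendix; everything else is the parameter bookkeeping above, built on Theorem~\ref{thm:pathwidth-approx} and Lemmas~\ref{lem:sj-ts},~\ref{lem:degree-tangle-jungle} and~\ref{lem:matching-tangle-jungle}.
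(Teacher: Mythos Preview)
Your proposal is correct and follows essentially the same route as the paper: run the approximation of Theorem~\ref{thm:pathwidth-approx} with $k=20|H|$, $\ell=520|H|$, invoke Lemmas~\ref{lem:degree-tangle-jungle} or~\ref{lem:matching-tangle-jungle} together with Lemma~\ref{lem:sj-ts} on the obstacle case, and otherwise run the dynamic program of Appendix~\ref{app:dp} on a decomposition of width $O(|H|)$. You spell out the trimming of the matching tangle to size exactly $20|H|$ and the monotonicity of the short-jungle parameters more explicitly than the paper does, but the argument is the same.
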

\begin{proof}
We run the algorithm given by Theorem~\ref{thm:pathwidth-approx} for parameters $20k$ and $520k$, which either returns a $(520k+2,20k)$-degree tangle, a $(20k+1,20k)$-matching tangle or a decomposition of width at most $560k$. If the last is true, we run the dynamic programming routine, which works in $O(2^{O(k\log k)}|V(T)|)$ time. The routine can be constructed in a similar manner to the one for immersion in~\cite{my}; for the sake of completeness, we include a full description in Appendix~\ref{app:dp}. However, if the approximation algorithm returned an obstacle, by Lemmas~\ref{lem:degree-tangle-jungle},~\ref{lem:matching-tangle-jungle} and~\ref{lem:sj-ts} we can provide a positive answer: existence of a $(520k+2,20k)$-degree tangle or a $(20k+1,20k)$-matching tangle ensures that $H$ is topologically contained in~$T$.
\end{proof}

By plugging in the dynamic programming routine for immersion of~\cite{my} instead of topological containment, we obtain the following.

\begin{theorem}
There exists an algorithm which given a semi-complete $T$ and a digraph $H$ with $k=|H|$, in time $O(2^{O(k^2\log k)}|V(T)|^2)$ checks whether $H$ can be immersed into $T$.
\end{theorem}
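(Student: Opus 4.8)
The plan is to re-run the argument proving the preceding theorem almost verbatim, changing only the dynamic-programming subroutine invoked in the ``small pathwidth'' branch. Concretely, I would first apply the approximation algorithm of Theorem~\ref{thm:pathwidth-approx} to $T$ with parameters $20k$ and $520k$ (the choice $\ell = 520k \geq 5\cdot 20k$ being legal, and making the obstacles it may output large enough for the jungle-extraction lemmas). This costs $O(k|V(T)|^2)$ time and has three possible outcomes. If it returns a $(520k+2,20k)$-degree tangle or a $(20k+1,20k)$-matching tangle, then by Lemmas~\ref{lem:degree-tangle-jungle} and~\ref{lem:matching-tangle-jungle} this structure contains a short (immersion) jungle of the size demanded by Lemma~\ref{lem:sj-ts}, so $T$ admits $H$ as a topological subgraph, hence also as an immersion, and the algorithm answers ``yes''. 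Otherwise the algorithm hands back a path decomposition of $T$ of width at most $560k$.

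In the latter case I would run the dynamic-programming routine for immersion on semi-complete digraphs of bounded pathwidth described in~\cite{my}, which decides exactly whether $H$ immerses into $T$ and runs in time $O(2^{O(k^2\log k)}|V(T)|)$. Adding the two contributions, the total running time is $O(2^{O(k^2\log k)}|V(T)|^2)$, as claimed. Correctness is immediate: a ``yes'' produced in the obstacle case is always correct because a topological subgraph is in particular an immersion, and in the remaining case the routine is exact.

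The single point that differs from the topological-containment theorem is the worse, $2^{O(k^2\log k)}$ rather than $2^{O(k\log k)}$, dependence on $k$ of the dynamic program, and for this I would simply cite~\cite{my}. The reason is that an immersion model is only required to use edge-disjoint, not internally vertex-disjoint, paths, so a single vertex in a bag of the decomposition may lie on the images of many arcs of $H$ at once; the state of the dynamic program must therefore record how each of the up to $\binom{k}{2}$ arcs of $H$ currently routes through the $O(k)$ vertices of the bag, giving $(O(k))^{O(k^2)} = 2^{O(k^2\log k)}$ states per bag instead of the $2^{O(k\log k)}$ arising when the paths are vertex-disjoint. I do not foresee a real obstacle here: the whole argument is this one substitution on top of the previous theorem together with a reference to the immersion routine of~\cite{my}, so the substantive work has already been done in Theorem~\ref{thm:pathwidth-approx}, in the jungle lemmas, and in~\cite{my}.
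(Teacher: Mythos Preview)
Your proposal is correct and matches the paper's own argument essentially verbatim: the paper simply says to rerun the proof of the topological-containment theorem with the immersion dynamic program of~\cite{my} substituted in, which is exactly what you do. Your added justification of why the obstacle case also yields a ``yes'' for immersion (topological containment implies immersion) and your explanation of the $2^{O(k^2\log k)}$ dependence are fine elaborations of points the paper leaves implicit.
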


Finally, we can reduce the polynomial factor in the running time of the algorithm for rooted immersion of~\cite{my} by substituting the approximation routine for pathwidth with the one given by Theorem~\ref{thm:pathwidth-approx}. We remark that one needs to be a bit careful with running times of all the components of the algorithm; therefore, we include a full proof.

\begin{theorem}\label{thm:rimmersion}
There exists an algorithm which given a rooted semi-complete $\mathbf{T}$ and a rooted digraph $\mathbf{H}$ with $k=|H|$, in time $O(f(k)|V(T)|^3)$ checks whether $\mathbf{H}$ is a rooted immersion of in $\mathbf{T}$, for some elementary function $f$.
\end{theorem}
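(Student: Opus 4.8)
The plan is to keep the overall architecture of the rooted immersion algorithm of~\cite{my} intact and merely replace its pathwidth approximation subroutine --- the component responsible for the large dependence on $|V(T)|$ and the super\nobreakdash-exponential dependence on $k$ --- by the routine of Theorem~\ref{thm:pathwidth-approx}, and then redo the running\nobreakdash-time bookkeeping. Fix $p=p(k)$, a sufficiently large polynomial in $k$; the precise value only affects the elementary function $f$. The algorithm maintains a rooted semi\nobreakdash-complete digraph $\mathbf{T}'$, initially $\mathbf{T}$, and repeats the following round. First I would run Theorem~\ref{thm:pathwidth-approx} on the digraph underlying $\mathbf{T}'$ with parameters $p$ and $\ell=Cp$ for a suitable constant $C$. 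If it returns a path decomposition, its width is $O(p)$ and I would finish by calling the dynamic programming routine for rooted immersion of~\cite{my}, which decides the instance in time $f(k)\cdot|V(\mathbf{T}')|$. Otherwise it returns a degree tangle or a matching tangle, and the choice of $C$ guarantees it is large enough to feed into Lemma~\ref{lem:degree-tangle-jungle} or Lemma~\ref{lem:matching-tangle-jungle}, yielding in time $O(p^3|V(\mathbf{T}')|^2)$ a $(k',d)$-short jungle $X$ with $k'=\Theta(p)$ and $d\le 4$, stored together with the witnessing families of internally vertex\nobreakdash-disjoint paths.

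Next I would make $X$ root-free: deleting from $X$ the at most $h\le k$ roots destroys at most one of the $k'$ internally vertex\nobreakdash-disjoint paths per pair for each deleted vertex (vertex\nobreakdash-disjointness is crucial here), so what remains is still a $(k'-h,d)$-short jungle, hence a $(k'-h,d)$-short immersion jungle, and with $p$ chosen large this is still enormous compared to $|E(H)|+k$. Then I would invoke the irrelevant\nobreakdash-vertex argument of~\cite{my}: since this root\nobreakdash-free, richly edge\nobreakdash-linked set $X$ is so large, any rooted immersion model of $\mathbf{H}$ in $\mathbf{T}'$ can be rerouted --- by pushing, through the spare paths of the jungle, the at most $|E(H)|$ image paths that pass through a chosen vertex $v\in X$ so that they avoid $v$ --- whence $\mathbf{H}$ is a rooted immersion of $\mathbf{T}'$ if and only if it is a rooted immersion of $\mathbf{T}'-v$, which is again rooted semi\nobreakdash-complete with the same roots. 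I would delete $v$ and iterate. Each round deletes one vertex, so there are at most $|V(\mathbf{T})|$ rounds; moreover once $|V(\mathbf{T}')|$ drops below $\Theta(p)$ no tangle of the required size can exist, so the approximation must then return a decomposition and the loop terminates.

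For the running time, I would argue that every round costs $O(p\,|V(\mathbf{T}')|^2)$ for Theorem~\ref{thm:pathwidth-approx}, $O(p^3|V(\mathbf{T}')|^2)$ for the jungle extraction via Lemmas~\ref{lem:degree-tangle-jungle} and~\ref{lem:matching-tangle-jungle}, and $f(k)\cdot|V(\mathbf{T}')|^2$ for locating the irrelevant vertex and carrying out the rerouting analysis of~\cite{my} on the stored path families; as $p$ is polynomial in $k$, the per\nobreakdash-round cost is $f(k)\cdot|V(\mathbf{T})|^2$, and over at most $|V(\mathbf{T})|$ rounds plus the single final dynamic\nobreakdash-programming call of cost $f(k)\cdot|V(\mathbf{T})|$ this sums to $O(f(k)\cdot|V(\mathbf{T})|^3)$ with $f$ elementary. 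The delicate points --- and the main obstacle --- are, first, pinning down $C$ and $p$ so that the degree/matching tangles returned by Theorem~\ref{thm:pathwidth-approx} comfortably meet the size hypotheses of Lemmas~\ref{lem:degree-tangle-jungle} and~\ref{lem:matching-tangle-jungle} even after the roots are removed from $X$, and, second, re\nobreakdash-verifying the irrelevant\nobreakdash-vertex rerouting of~\cite{my} in this setting and --- this is exactly where the warning about being ``a bit careful with running times of all the components'' bites --- checking that it stays within the $O(|V(\mathbf{T})|^2)$ per\nobreakdash-round budget rather than forcing an extra factor of $|V(\mathbf{T})|$.
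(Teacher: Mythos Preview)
Your overall architecture is right, but the step where you ``invoke the irrelevant-vertex argument of~\cite{my}'' directly on the short jungle is a genuine gap. The irrelevant-vertex machinery of~\cite{my} does not operate on jungles or short jungles; it operates on \emph{triples}. What you sketch --- rerouting the at most $|E(H)|$ image paths through the spare paths of the jungle so that they avoid a chosen $v\in X$ --- is not the argument of~\cite{my}, and in fact the paper's Conclusions section singles out exactly this as an open problem: whether one can replace the triple-based irrelevant-vertex rule by a direct rule on a short jungle of polynomial size. So your proposal is relying on an unproved statement at the crucial step.

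The paper's proof instead takes the detour tangle $\to$ short jungle $\to$ triple $\to$ irrelevant vertex, invoking the established Lemma~2 and Lemma~6 of~\cite{my}. The nontrivial point --- and this is precisely the ``bit careful with running times'' remark you quoted --- is that Lemma~2 of~\cite{my} finds a triple in a jungle in time $O(|V(S)|^3\log|V(S)|)$, which applied to $T$ would blow the per-round budget to $|V(T)|^3\log|V(T)|$. The paper circumvents this by observing that a $(k,4)$-short jungle $X$, together with its stored families of length-$\le 4$ paths, lives inside an induced subdigraph $S$ of size $O(k^3)$; running Lemma~2 inside $S$ costs only $k^{O(1)}$ and yields a $p(|H|)$-triple that is also a triple in $T$, after which Lemma~6 of~\cite{my} locates the irrelevant vertex in $O(p(|H|)^2|V(T)|^2)$ time. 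You are missing both this detour through triples and the small-subdigraph trick that makes it fit the time budget. (A minor further difference: the paper removes the roots \emph{before} running the approximation rather than excising them from the jungle afterwards, which sidesteps your root-removal bookkeeping entirely.)
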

\begin{proof}
Let $f$ be the function given by Lemma~2 of~\cite{my}; i.e., basing on a $f(t)$-jungle in any semi-complete digraph $S$, one can find a $t$-triple in $S$ in time $O(|V(S)|^3\log |V(S)|)$. Moreover, let $p$ be the polynomial given by Lemma~$6$ of~\cite{my}; i.e., in a $p(|H|)$-triple that is disjoint from the roots one can find an irrelevant vertex for the {\sc{Rooted Immersion}} problem in $O(p(|H|)^2|V(T)|^2)$ time.

Given the input semi-complete rooted digraph $T$ and a rooted digraph $H$, we run the approximation algorithm of Theorem~\ref{thm:pathwidth-approx} for parameters $20k$ and $520k$ on $T$ with the roots removed, where $k=f(p(|H|))$; this takes $O(g(|H|)\cdot |V(T)|^2)$ time for some elementary function $g$. If the algorithm returns a decomposition of $T$ without roots of width at most $560k$, we include all the roots in every bag of the decomposition and finalize the algorithm by running the dynamic programming routine for rooted immersion from~\cite{my}, which takes $O(h(|H|)\cdot |V(T)|)$ time for some elementary function $h$. Otherwise, using Lemma~\ref{lem:degree-tangle-jungle} or Lemma~\ref{lem:matching-tangle-jungle} we extract a $(k,4)$-short jungle $X$ from the output $(520k+2,20k)$-degree tangle or $(20k+1,20k)$-matching tangle; this takes $O(k^{3}|V(T)|^2)$ time. 

Obviously, $X$ is also a $k$-jungle in the sense of Fradkin and Seymour, so we are tempted to run the algorithm of Lemma~2 of~\cite{my} to extract a triple; however, the original running time is a bit too much. We circumvent this obstacle in the following manner. As $X$ is a $(k,4)$-short jungle, then if we define $S$ to be the subdigraph induced in $T$ by $X$ and, for every pair $v,w$ of vertices in $X$, $k$ paths of length at most $4$ from $v$ to $w$, then $X$ is still a $(k,4)$-short jungle in $S$, but $S$ has size $O(k^3)$. As we store the short jungle together with the corresponding family of paths between the vertices, we can construct $S$ in $O(k^{O(1)})$ time and, using Lemma~2 of~\cite{my}, in $O(k^{O(1)})$ time find a $p(|H|)$-triple inside $S$. This $p(|H|)$-triple is of course also a $p(|H|)$-triple in $T$. We apply Lemma~$6$ of~\cite{my} to find an irrelevant vertex in this triple in $O(p(|H|)^2|V(T)|^2)$ time, delete it, and restart the algorithm.

As there are $|V(T)|$ vertices in the graph, the algorithm makes at most $|V(T)|$ iterations. As every iteration takes $O(h(|H|)|V(T)|^2)$ time for some elementary function $h$, the claim follows.
\end{proof}

\section{Conclusions}\label{sec:conclusions}

The natural question stemming from this work is whether the new set of obstacles can give raise to more powerful irrelevant vertex rules. For example, if we consider the {\sc{Rooted Immersion}} problem, it is tempting to try to replace finding an irrelevant vertex in a triple, as presented in~\cite{my} (see this work for an exact definition of a triple), with a direct irrelevant vertex rule on a short jungle of size polynomial in the size of the digraph to be immersed. If this was possible, the running time for the algorithm for {\sc{Rooted Immersion}} could be trimmed to single-exponential in terms of the size of the digraph to be immersed.

\subparagraph*{Acknowledgements} The author thanks Marek Cygan, Fedor V. Fomin, and Marcin Pilipczuk for helpful discussions about this project.

\bibliographystyle{plain}
\bibliography{pathwidth}

\begin{thebibliography}{10}

\bibitem{buss}
Jonathan~F. Buss and Judy Goldsmith.
\newblock Nondeterminism within {P}.
\newblock {\em SIAM J. Comput.}, 22(3):560--572, 1993.

\bibitem{ChudnovskyFS2011}
Maria Chudnovsky, Alexandra Fradkin, and Paul Seymour.
\newblock Tournament immersion and cutwidth.
\newblock {\em J. Comb. Theory Ser. B}, 102(1):93--101, 2012.

\bibitem{ChudnovskySS2011}
Maria Chudnovsky, Alex Scott, and Paul Seymour.
\newblock Vertex disjoint paths in tournaments, 2011.
\newblock Manuscript.

\bibitem{ChudnovskyS11}
Maria Chudnovsky and Paul~D. Seymour.
\newblock A well-quasi-order for tournaments.
\newblock {\em J. Comb. Theory, Ser. B}, 101(1):47--53, 2011.

\bibitem{diestel}
Reinhard Diestel.
\newblock {\em Graph {T}heory}.
\newblock Springer, 2005.

\bibitem{my}
Fedor~V. Fomin and Micha\l{} Pilipczuk.
\newblock Jungles, bundles, and fixed parameter tractability.
\newblock {\em CoRR}, abs/1112.1538, 2011.
\newblock To appear in the proceedings of SODA 2013.

\bibitem{FortuneHW80}
Steven Fortune, John~E. Hopcroft, and James Wyllie.
\newblock The directed subgraph homeomorphism problem.
\newblock {\em Theor. Comput. Sci.}, 10:111--121, 1980.

\bibitem{fradkin-seymourEDP}
Alexandra Fradkin and Paul Seymour.
\newblock Edge-disjoint paths in digraphs with bounded independence number,
  2010.
\newblock Manuscript.

\bibitem{fradkin-seymour}
Alexandra Fradkin and Paul Seymour.
\newblock Tournament pathwidth and topological containment, 2011.
\newblock Manuscript.

\bibitem{ganian-i-kumple}
Robert Ganian, Petr Hlin{\v{e}}n{\'y}, Joachim Kneis, Daniel Meister, Jan
  Obdr{\v{z}}{\'a}lek, Peter Rossmanith, and Somnath Sikdar.
\newblock Are there any good digraph width measures?
\newblock In {\em 5th International Symposium on Parameterized and Exact
  Computation (IPEC 2010)}, volume 6478, pages 135--146. Springer, 2010.

\bibitem{kim-seymour-minors}
Ilhee Kim and Paul Seymour.
\newblock Tournament minors.
\newblock {\em CoRR}, abs/1206.3135, 2012.

\bibitem{thilikos-i-kumple}
Dimitrios~M. Thilikos, Maria~J. Serna, and Hans~L. Bodlaender.
\newblock Cutwidth {I}: A linear time fixed parameter algorithm.
\newblock {\em J. Algorithms}, 56(1):1--24, 2005.

\end{thebibliography}

\newpage

\appendix

\section{Dynamic programming routine for topological containment}\label{app:dp}

First, we need to introduce definitions that will enable us to encode all the possible interactions between a model of a digraph $H$ and a separation. Let $(A,B)$ be a separation of $T$, where $T$ is a given semi-complete digraph.

In the definitions we use two special symbols: $\FF,\UU$; the reader can think of them as an arbitrary element of $A\setminus B$ ({\emph{forgotten}}) and $B\setminus A$ ({\emph{unknown}}), respectively. Let $\iota:V(T)\to (A\cap B)\cup \{\FF,\UU\}$ be defined as follows: $\iota(v)=v$ if $v\in A\cap B$, whereas $\iota(v)=\FF$ for $v\in A\setminus B$ and $\iota(v)=\UU$ for $v\in B\setminus A$.

\begin{definition}[\textbf{Traces and signatures}] Let $P$ be  a path.
A sequence of paths $(P_1,P_2,\ldots,P_h)$ is a {\emph{\ptrace}} of $P$ with respect to $(A,B)$, if $P_i$ for $1\leq i\leq h$ are all maximal subpaths of $P$ that are fully contained in $A$, and the indices in the sequence reflect their ordering on path $P$.

Let $(P_1,P_2,\ldots,P_h)$ be the \ptrace{} of $P$ with respect to $(A,B)$. A {\emph{signature}} of $P$ on $(A,B)$ is a sequence of pairs $((b_1,e_1),(b_2,e_2),\ldots,(b_h,e_h))$, where $b_h,e_h\in (A\cap B)\cup\{\FF\}$, such that for every $i\in\{1,2,\ldots,h\}$:
\begin{itemize}
\item $b_i$ is the beginning of path $P_i$ if $b_i\in A\cap B$, and $\FF$ otherwise;
\item $e_i$ is the end of path $P_i$ if $e_i\in A\cap B$, and $\FF$ otherwise.
\end{itemize}
\end{definition}

In other words,  $b_i,e_i$ are images of the beginning and the end of the path $P_i$ in the mapping $\iota$. Observe following properties of the introduced notion:
\begin{itemize}
\item Signature of path $P$ on separation $(A,B)$ depends only on its \ptrace; therefore, we can also consider signatures of \ptraces.
\item It can happen that $b_i=e_i\neq \FF$ only if $P_i$ consists of only one vertex $b_i=e_i$.
\item From the definition of separation it follows that only for $i=h$ it can happen that $e_i=\FF$, as there is no arc from $A\setminus B$ to $B\setminus A$.
\item The empty signature corresponds to $P$ entirely contained in $B\setminus A$.
\end{itemize}

Now we are able to encode relevant information about a given model of $H$.

\begin{definition}
Let $\eta$ be an expansion of a digraph $H$ in $T$. A {\emph{signature}} of $\eta$ on $(A,B)$ is a mapping $\rho$ such that:
\begin{itemize}
\item for every $v\in V(H)$, $\rho(v)=\iota(\eta(v))$;
\item for every $e\in E(H)$, $\rho(e)$ is a signature of $\eta(e)$.
\end{itemize}
\end{definition}

Observe that non-forgotten beginnings and endings in all the signatures of paths from $\rho(E(H))$ can be equal only if they are in the same pair or correspond to a vertex from $\rho(V(H))$. Therefore, we have the following claim.

\begin{lemma}\label{lem:signature-bound}
If $|V(H)|=k, |E(H)|=\ell, |A\cap B|=m$, then the number of possible different signatures on $(A,B)$ is at most
$$(m+2)^k \cdot (m+2)^m\cdot m^\ell\cdot m!\cdot (m+2)^\ell=2^{O((k+\ell+m)\log m)}.$$
Moreover, all of them can be enumerated in $2^{O((k+\ell+m)\log m)}$ time.
\end{lemma}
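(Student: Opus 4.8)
The plan is to count signatures by examining each piece of the data defining a signature $\rho$ separately and multiplying the resulting bounds, mirroring the structure of the definition. A signature assigns data to the $k$ vertices of $H$ and to each of the $\ell$ arcs of $H$, so I will bound the number of choices for the vertex-part and the arc-part, and the arc-part will itself split further according to how traces sit in the separator.

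First I would handle the vertices: for each $v\in V(H)$, $\rho(v)=\iota(\eta(v))$ is an element of $(A\cap B)\cup\{\FF,\UU\}$, giving at most $m+2$ possibilities, hence at most $(m+2)^k$ choices in total for the vertex-part of $\rho$. Next I would handle a single arc $e\in E(H)$: $\rho(e)$ is the signature of the path $\eta(e)$, i.e.\ a sequence $((b_1,e_1),\ldots,(b_h,e_h))$ of pairs from $(A\cap B)\cup\{\FF\}$. The key structural observation from the discussion preceding the lemma is that the non-forgotten endpoints $b_i,e_i$ appearing across \emph{all} paths in $\rho(E(H))$ are essentially distinct separator vertices — two of them can coincide only if they form one of the pairs $(b_i,e_i)$ with $b_i=e_i$ (a one-vertex subpath), or if the vertex is the image of some vertex of $H$ already accounted for. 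So, globally, the set of separator vertices used as non-forgotten beginnings/endings of trace-pieces has size at most $m$, and I would charge: a choice of which separator vertices are used and in which role — bounded by $(m+2)^m$ to record, for each of the $m$ separator vertices, whether it is unused, or which "slot" it occupies; the relative order in which these separator vertices are threaded along all the trace-pieces, bounded by $m!$; the assignment of the remaining $\FF$-markers, of which there is essentially one per arc (only the last piece $e_h$ of a path can be $\FF$, plus the first piece's beginning), contributing at most $(m+2)^\ell$; and finally the grouping of the used separator vertices into the $\ell$ arcs and into pairs, which costs at most $m^\ell$ (each of at most $m$ used vertices is assigned to one of $\ell$ arcs, say, or one bounds the number of trace-pieces per arc by $m$). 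Multiplying these gives $(m+2)^k\cdot(m+2)^m\cdot m^\ell\cdot m!\cdot(m+2)^\ell$, and since $m!\le m^m\le 2^{m\log m}$ and each remaining factor is of the form $2^{O(\cdot\log m)}$, the product is $2^{O((k+\ell+m)\log m)}$, as claimed.

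For the enumeration statement, I would note that the above counting argument is entirely constructive: one iterates over all $(m+2)^k$ vertex-assignments, then over all ways to distribute and order separator vertices among arcs and pairs within the stated bounds, then over all placements of $\FF$-markers, emitting each resulting signature; the total work is proportional to the number of objects produced up to a polynomial factor, hence $2^{O((k+\ell+m)\log m)}$ time. One should also discard, during enumeration, any tuple that is not realizable as a genuine signature (for instance violating the rule that only the final piece of a path may end in $\FF$, or that $b_i=e_i\neq\FF$ forces a single-vertex piece); these checks are local and cheap.

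I expect the main obstacle to be stating the global disjointness bookkeeping cleanly — precisely, that at most $m$ separator vertices appear as non-forgotten trace-endpoints across all paths simultaneously, so that the factor $m!$ (ordering) and the factor bounding the grouping into arcs are both justified rather than being naively $(\text{per-arc bound})^\ell$ with a per-arc bound of $m$, which would already be absorbed but is wasteful. The cleanest route is: fix $\eta$, look at the union $\bigcup_{e\in E(H)}\eta(e)$ restricted to $A\cap B$; every separator vertex lies on at most the two trace-pieces adjacent to it, and each such vertex contributes at most a bounded number of $(b_i,e_i)$-coordinates; conversely every non-$\FF$ coordinate is such a vertex. Making this correspondence explicit is the only genuinely delicate step; everything else is routine multiplication of per-coordinate bounds.
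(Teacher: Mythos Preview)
Your approach is essentially the same as the paper's: both decompose the count by first fixing $\rho$ on $V(H)$ (the factor $(m+2)^k$), then exploiting the global vertex-disjointness of the trace pieces to bound the non-$\FF$ endpoints by the $m$ separator vertices and counting their pairings, their ordering, and their distribution among the $\ell$ arcs, with a final per-arc correction for the terminal pair; the enumeration argument is likewise the same (iterate over all tuples and discard the inconsistent ones in polynomial time). Your per-factor attributions are slightly looser than the paper's---it assigns $(m+2)^m$ to choosing, for each separator vertex outside $\rho(V(H))$, whether it is the end $e_i$ of some pair and if so the value of the matching $b_i$, and $(m+2)^\ell$ to deciding for each arc $(v,w)$ whether to append a last pair $(b,\rho(w))$ and the value of $b$---but the structure and the conclusion match.
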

\begin{proof}
The consecutive terms correspond to:
\begin{enumerate}
\item the choice of mapping $\rho$ on $V(H)$;
\item for every element of $(A\cap B)\setminus \rho(V(H))$, choice whether it will be the end of some subpath in some path signature, and in this case, the value of corresponding beginning (a vertex from $A\cap B$ or $\FF$);
\item for every pair composed in such manner, choice to which $\rho(e)$ it will belong;
\item the ordering of pairs along the path signatures;
\item for every $(v,w)\in E(H)$, choice whether to append a pair of form $(b,\iota(\rho(w)))$ at the end of the signature $\rho((v,w))$, and in this case, the value of $b$ (a vertex from $A\cap B$ or $\FF$).
\end{enumerate}
It is easy to check that using all these information one can reconstruct the whole signature. For every object constructed in the manner above we can check in time polynomial in $k,\ell,m$, whether it corresponds to a possible signature. This yields the enumeration algorithm.
\end{proof}

The set of possible signatures on separation $(A,B)$ will be denoted by $\vf{A}{B}$. We are now ready to present the dynamic programming routine.

\begin{lemma}\label{lem:dp}
There exists an algorithm, which given a digraph $H$ with $k$ vertices and $\ell$ edges, and a semi-complete digraph $T$ together with its path decomposition of width $p$, checks whether $H$ is topologically contained in $T$ in time $O(2^{O((p+k+\ell)\log p)}|V(T)|)$.
\end{lemma}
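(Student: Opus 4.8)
The plan is to perform a standard dynamic programming over a nice path decomposition, where the states capture exactly the information isolated in the definitions of \emph{\ptrace}, \emph{signature of a path}, and \emph{signature of an expansion}. Concretely, let $W=(W_1,\ldots,W_r)$ be a nice path decomposition of $T$ of width $p$, which we may assume is given (or construct in $O(p|V(T)|)$ time), and let $((A_0,B_0),\ldots,(A_r,B_r))$ be the associated separation chain from Lemma~\ref{lem:sepchain}, so $A_i\cap B_i=W_i\cap W_{i+1}$ has size at most $p$. For each $i$ I would maintain the set $S_i\subseteq \vf{A_i}{B_i}$ of signatures $\rho$ on $(A_i,B_i)$ that are \emph{realizable}, i.e.\ for which there exists an expansion $\eta$ of $H$ in $T[A_i]$ whose restriction to the separator behaves as prescribed by $\rho$ --- meaning $\iota(\eta(v))=\rho(v)$ for $v\in V(H)$, and the \ptraces{} of the path-images $\eta(e)$ restricted to $A_i$ have signature $\rho(e)$, where additionally every subpath of $\eta(e)$ not yet ``closed off'' must be allowed to leave through the separator (its last recorded endpoint is $\FF$ or lies in $A_i\cap B_i$). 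By Lemma~\ref{lem:signature-bound}, $|\vf{A_i}{B_i}|\le 2^{O((k+\ell+p)\log p)}$, and all signatures can be enumerated in that time.

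The core of the argument is the transition between $S_i$ and $S_{i+1}$, which comes in two flavours according to whether bag $W_{i+1}$ introduces a vertex $v$ or forgets a vertex $w$. When $v$ is introduced, $A_{i+1}=A_i\cup\{v\}$ and $v$ enters the separator; a realizable signature on $(A_{i+1},B_{i+1})$ is obtained from one on $(A_i,B_i)$ by deciding the role of $v$: whether it is $\eta(u)$ for some $u\in V(H)$, through which path-images $\eta(e)$ it passes and in what order, and how these uses merge with or extend the already-recorded subpaths; since $v$ is adjacent (in $T$, which is semi-complete) to every vertex of $A_i$, all arcs needed for merging are available, and we only need to check that the resulting structure is consistent (degrees into the partial paths, internal vertex-disjointness of path-images, correct endpoints). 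When $w$ is forgotten, $A_{i+1}=A_i$ but $w$ leaves the separator: every occurrence of $w$ as a recorded endpoint $b_j$ or $e_j$ of a subpath in some $\rho(e)$ must now be ``finalized'' --- either this endpoint is genuinely the beginning or end of the corresponding arc-path (only legal for the very first/last subpath, and only if the matching $H$-vertex is also $w$ or already forgotten), or two consecutive recorded subpaths with shared forgotten endpoint $w$ get concatenated into one; also $w$ may not be forgotten if some $\rho(v)=w$ for $v\in V(H)$ and the edges at $v$ are not all accounted for. Each such update touches a single signature and costs time polynomial in $p,k,\ell$, so processing all of $S_i$ costs $2^{O((k+\ell+p)\log p)}$, and over $r=O(p|V(T)|)$ bags the total is $2^{O((p+k+\ell)\log p)}|V(T)|$ as claimed. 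Finally $H$ is topologically contained in $T$ iff $S_r$ contains the empty signature together with all $H$-vertices and $H$-edges fully realized (equivalently: the signature on $(V(T),\emptyset)$ recording a complete expansion is realizable), which is read off in constant time.

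The main obstacle I expect is purely bookkeeping: getting the transition rules exactly right so that the maintained invariant --- ``$\rho\in S_i$ iff there is an expansion in $T[A_i]$ with the prescribed boundary behaviour and all currently-open subpaths legitimately extendable into $B_i$'' --- is preserved in both directions. In particular one must be careful that (i) distinct subpaths recorded in the signatures, and the vertices $\rho(V(H))$, never collide on a non-forgotten vertex of the separator except when the definition permits it (this is precisely the observation preceding Lemma~\ref{lem:signature-bound}), which is what guarantees internal vertex-disjointness survives the forgetting step; (ii) when a vertex is introduced we correctly range over \emph{all} ways it can be used, which is a finite but somewhat intricate case analysis; and (iii) the condition ``$e_i=\FF$ only for $i=h$'' and similar structural constraints on signatures, inherited from the fact that $(A,B)$ is a separation with no arc from $A\setminus B$ to $B\setminus A$, are maintained. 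Once the transitions are pinned down, correctness follows by a routine induction on $i$: the base case $S_0$ (for $A_0=\emptyset$, the only realizable signature being the all-$\UU$ one with empty path-signatures) is immediate, and the inductive step is exactly the verification that the transition rule matches the definition of realizability. This routine is essentially the one used for immersion in~\cite{my}; the only difference is that subpaths of distinct arc-images must be internally vertex-disjoint rather than merely edge-disjoint, which is captured by the stronger collision restriction on signatures noted above and does not affect the size bound of Lemma~\ref{lem:signature-bound}.
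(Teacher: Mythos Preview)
Your proposal is correct and follows essentially the same approach as the paper: dynamic programming over a nice path decomposition, with states given by the signatures of Lemma~\ref{lem:signature-bound} and transitions at introduce/forget bags. One small caveat is that your description of the forget step is slightly off (no ``concatenation of subpaths'' happens there --- the subpaths are already maximal in $A$, so forgetting $w$ only relabels endpoints from $w$ to $\FF$ and prunes infeasible configurations; the merging/splitting of subpaths occurs at the introduce step), but you have already flagged this level of bookkeeping as the place where care is needed, and the paper's Cases~1.1--1.4 and~2 make this precise.
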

\begin{proof}
Let $W = (W_1, \dots , W_r)$ be a path decomposition of $T$ of width $p$. Without loss of generality we assume that  $W$ is a nice path decomposition.

By Lemma~\ref{lem:signature-bound}, for every separation $(A,B)=(\bigcup_{j=1}^i W_j, \bigcup_{j=i}^r W_j)$ with separator $W_j$ the number of possible signatures is $2^{O((p+k+\ell)\log p)}$. We will consecutively compute the values of a binary table $D_{(A,B)}:\vf{A}{B}\to \{\bot,\top\}$ with the  following meaning. For $\rho\in \vf{A}{B}$, $D_{(A,B)}[\rho]$ tells, whether there exists a mapping $\overline{\rho}$ with the following properties:
\begin{itemize}
\item for every $v\in V(H)$, $\overline{\rho}(v)=\rho(v)$ if $\rho(v)\in (A\cap B)\cup \{\UU\}$ and $\overline{\rho}(v)\in A\setminus B$ if $\rho(v)=\FF$;
\item for every $e=(v,w)\in E(H)$, $\overline{\rho}(e)$ is a correct path trace with signature $\rho(e)$, beginning in $\overline{\rho}(v)$ if $\overline{\rho}(v)\in A$ and anywhere in $A$ otherwise, ending in $\overline{\rho}(w)$ if $\overline{\rho}(w)\in A$ and anywhere in $A\cap B$ otherwise;
\item path traces $\overline{\rho}(e)$ are internally vertex-disjoint.
\end{itemize}
Such mapping $\overline{\rho}$ will be called a {\emph{partial expansion}} of $H$ on $(A,B)$.

For the first separation $(\emptyset,V(T))$ we have exactly one signature with value $\top$, being the signature which maps all the vertices into $\UU$ and all the arcs into empty signatures. The result of the whole computation should be the value for the signature for the last separation $(V(T),\emptyset)$, which maps all vertices into $\FF$ and arcs into signatures consisting of one pair $(\FF,\FF)$. Therefore, it suffices to show how to fill the values of the table for {\bf{introduce vertex}} step and {\bf{forget vertex}} step.

\paragraph*{Introduce vertex step}

Let us introduce vertex $v\in B\setminus A$ to the separation $(A,B)$, i.e., we consider the new separation $(A\cup\{v\},B)$. Let $\rho\in \vf{A\cup\{v\}}{B}$. We show that $D_{(A\cup\{v\},B)}[\rho]$ can be computed by analyzing the way signature $\rho$ interferes with the vertex $v$.

\begin{itemize}
\item[{\emph{Case $1$:}}] $v\notin \rho(V(H))$, that is, $v$ is not an image 
of  a  vertex of $H$.
\begin{itemize}
\item[{\emph{Case $1.1$:}}] $b_i=v=e_i$ for some pair $(b_i,e_i)\in \rho(e)$ and some $e\in E(H)$. This means that the signature of the partial expansion truncated to separation $(A,B)$ must look exactly like $\rho$, but without this subpath of length zero. Thus $D_{(A\cup\{v\},B)}[\rho]=D_{(A,B)}[\rho']$, where $\rho'$ is constructed from $\rho$ by deleting this pair from the corresponding signature.
\item[{\emph{Case $1.2$:}}] $b_i=v\neq e_i$ for some pair $(b_i,e_i)\in \rho(e)$ and some $e\in E(H)$. This means that the partial expansion truncated to separation $(A,B)$ has to look the same but for the path corresponding to this very pair, which needs to be truncated by vertex $v$. The new beginning has to be either a vertex in $A\cap B$, or a forgotten vertex from $A\setminus B$. As $T$ is semi-complete and $(A,B)$ is a separation, there is an arc from $v$ to every vertex of $A\setminus B$. Therefore, $D_{(A\cup\{v\},B)}[\rho]=\bigvee_{\rho'} D_{(A,B)}[\rho']$, where the disjunction is taken over all signatures $\rho'$ such that in $\rho'$ the pair $(b_i,e_i)$ is substituted with $(b_i',e_i)$, where $b_i'=\FF$ or $b_i'$ is any vertex of $A\cap B$ such that there is an arc $(v,b_i')$.
\item[{\emph{Case $1.3$:}}] $b_i\neq v=e_i$ for some pair $(b_i,e_i)\in \rho(e)$ and some $e\in E(H)$. Similarly as before, partial expansion truncated to separation $(A,B)$ has to look the same but for the path corresponding to this very pair, which needs to be truncated by vertex $v$. As $(A,B)$ is a separation, the previous vertex on the path has to be in the separator $A\cap B$. Therefore, $D_{(A\cup\{v\},B)}[\rho]=\bigvee_{\rho'} D_{(A,B)}[\rho']$, where the disjunction is taken over all signatures $\rho'$ such that in $\rho'$ the pair $(b_i,e_i)$ is substituted with $(b_i,e_i')$, where $e_i'$ is any vertex of $A\cap B$ such that there is an arc $(e_i',v)$.
\item[{\emph{Case $1.4$:}}] $v$ is not contained in any pair in any signature from $\rho(E(H))$. Either $v$ lies on some path in the partial expansion, or it does not. In the first case the corresponding path in partial expansion on $(A\cup\{v\},B)$ has to be split into two subpaths, when truncating the expansion to $(A,B)$. Along this path, the arc that was used to access $v$ had to go from inside $A\cap B$, due to $(A\cup\{v\},B)$ being a separation; however, the arc used to leave $v$ can go to $A\cap B$ and to any forgotten vertex from $A\setminus B$, as $(A,B)$ is a separation and $T$ is a semi-complete digraph. In the second case, the signature of the truncated expansion stays the same. Therefore, $D_{(A\cup\{v\},B)}[\rho]=D_{(A,B)}[\rho]\vee \bigvee_{\rho'} D_{(A,B)}[\rho']$, where the disjunction is taken over all signatures $\rho'$ such that in $\rho'$ exactly one pair $(b_i,e_i)$ is substituted with two pairs $(b_i,e_i')$ and $(b_i',e_i)$, where $e_i'\in A\cap B$ with $(e_i',v)\in E(T)$, whereas $b_i'=\FF$ or $b_i'\in A\cap B$ with $(v,b_i')\in E(T)$.
\end{itemize}
\item[{\emph{Case $2$:}}] $v=\rho(u)$ for some $u\in V(H)$. For every $(u,u')\in E(H)$, $v$ has to be the beginning of the first pair of $\rho((u,u'))$; otherwise, $D_{(A\cup\{v\},B)}(\rho)=\bot$. Similarly, for every $(u',u)\in E(H)$, $v$ has to be the end of the last pair of $\rho((u',u))$; otherwise, $D_{(A\cup\{v\},B)}(\rho)=\bot$. Therefore, $D_{(A\cup\{v\},B)}[\rho]=\bigvee_{\rho'} D_{(A,B)}[\rho']$, where the disjunction is taken over all signatures $\rho'$ such that the first pairs of all $\rho((u,u'))$ are truncated as in the case $1.2$ for all $(u,u')\in E(H)$ (or as in case $1.1$, if the beginning and the end coincide), while the last pairs of all $\rho((u',u))$ are truncated as in the case $1.3$ for all $(u',u)\in E(H)$ (or as in case $1.1$, if the beginning and the end coincide). Moreover, we impose a condition that $\rho'(u)=\UU$.
\end{itemize}

\paragraph*{Forget vertex step}

Let us forget vertex $w\in A\setminus B$ from the separation $(A\cup \{w\},B)$, i.e., we consider the new separation $(A,B)$. Let $\rho\in \vf{A}{B}$; we argue that $D_{(A,B)}[\rho]=\bigvee_{\rho'\in \Gg} D_{(A\cup\{w\},B)}[\rho']$ for some set $\Gg\subseteq \vf{A\cup\{w\}}{B}$, which corresponds to possible extensions of $\rho$ to the previous, bigger separation. We now discuss, which signatures $\rho'$ are needed in $\Gg$ by considering all the signatures $\rho'\in\vf{A\cup\{w\}}{B}$ partitioned with respect to behaviour on vertex $w$.

\begin{itemize}
\item[{\emph{Case $1$:}}] $w\notin \rho'(V(H))$, that is, $w$ is not in the image of $V(H)$.
\begin{itemize}
\item[{\emph{Case $1.1$:}}] $b_i=w=e_i$ for some pair $(b_i,e_i)\in \rho'(e)$ and some $e\in E(H)$. This means that in the corresponding partial expansions $w$ had to be left to $B\setminus A$; however, in $w$ there is no arc from $w$ to $B\setminus A$. Therefore, in $\Gg$ we consider no signatures $\rho'$ of this type.
\item[{\emph{Case $1.2$:}}] $b_i=w\neq e_i$ for some pair $(b_i,e_i)\in \rho'(e)$ and some $e\in E(H)$. If we are to consider $\rho'$ in $\Gg$, then $w$ must prolong some path from the signature $\rho$ in such a manner that $w$ is its beginning. After forgetting $w$ the beginning of this path belongs to the forgotten vertices; therefore, in $\Gg$ we consider only signatures $\rho'$ which differ from $\rho$ on exactly one pair: in $\rho'$ there is $(w,e_i)$ instead of $(\FF,e_i)$ in $\rho$.
\item[{\emph{Case $1.3$:}}] $b_i\neq w=e_i$ for some pair $(b_i,e_i)\in \rho'(e)$ and some $e\in E(H)$. If we are to consider $\rho'$ in $\Gg$, then $w$ must prolong some path from the signature $\rho$ in such a manner, that $w$ is its end. As $w\notin\rho'(V(H))$, after forgetting the vertex $w$ the path should still end in the separator. We obtain a contradiction; therefore, we take no such signature under consideration in the set $\Gg$.
\item[{\emph{Case $1.4$:}}] $w$ is not contained in any pair in any signature from $\rho'(E(H))$. In this case $w$ has to be simply some unused vertex or some vertex inside a path; therefore, we take under consideration in $\Gg$ exactly one signature $\rho'=\rho$.
\end{itemize}
\item[{\emph{Case $2$:}}] $w=\rho'(u)$ for some $u\in V(H)$. We proceed similarly to cases $1.2$ and $1.3$. We consider in $\Gg$ signatures $\rho'$ that differ from $\rho$ in following manner: $\rho'(u)=w$ where $\rho(u)=\FF$ for exactly one $u\in V(H)$; for all edges $(u,u')\in E(H)$ the first pair of $\rho'((u,u'))$ is of form $(w,e_1)$, whereas the first pair of $\rho((u,u'))$ is of form $(\FF,e_1)$; for all edges $(u',u)\in E(H)$ the last pair of $\rho'((u',u))$ is of form $(b_h,w)$, whereas the last pair of $\rho((u,u'))$ is of the form $(b_h,\FF)$.
\end{itemize}

Updating table $D_{(A,B)}$ for each separation requires at most
$ O(|\vf{A}{B}|^2\cdot (p+k+\ell)^{O(1)})= 2^{O((p+k+\ell)\log p)}$ time, and since the number of separations in the pathwidth decomposition is $O(|V(T)|)$, the theorem follows.
\end{proof}

\end{document}